\theoremstyle{plain}
\newtheorem{thm}{Theorem}[section]
\newtheorem{corollary}[thm]{Corollary}
\newtheorem{lemma}[thm]{Lemma}
\newtheorem{prop}[thm]{Proposition}
\newtheorem*{theorem*}{Theorem}
\theoremstyle{definition}
\newtheorem{rem}[thm]{Remark}
\newtheorem{dfn}[thm]{Definition}
\numberwithin{equation}{section}
\newcommand{\mytilde}{\raise.17ex\hbox{$\scriptstyle\mathtt{\sim}$}}
\newcommand{\I}{\ensuremath{\mathcal{I}}}
\renewcommand{\S}{\ensuremath{\mathcal{S}}}
\renewcommand{\P}{\ensuremath{\mathcal{P}}}
\newcommand{\N}{\ensuremath{\mathcal{N}}}
\newcommand{\U}{\ensuremath{\mathcal{U}}}
\newcommand{\RgeO}{\ensuremath{\mathbb{R}_{ \geq 0}}}
\newcommand{\Rat}[1]{\ensuremath{\mathbb{R}^{#1}}}
\renewcommand{\bf}[1]{\textbf{#1}}
\begin{document}

\author{D. Boskos}
\address{Department of Automatic Control, School of Electrical Engineering, KTH Royal Institute of Technology, Osquldas v\"ag 10, 10044, Stockholm, Sweden}
\email{boskos@kth.se}

\author{D. V. Dimarogonas}
\address{Department of Automatic Control, School of Electrical Engineering, KTH Royal Institute of Technology, Osquldas v\"ag 10, 10044, Stockholm, Sweden}
\email{dimos@kth.se}

\begin{abstract}
The goal of this report is to define abstractions for multi-agent systems with feedback interconnection in their dynamics. In the proposed decentralized framework,  we specify a finite or countable transition system for each agent which only takes into account the discrete positions of its neighbors. The dynamics of the considered systems consist of two components. An appropriate feedback law which guarantees that certain system and network requirements are fulfilled and induces coupled constraints, and additional free inputs which we exploit in order to accomplish high level tasks. In this work, we provide sufficient conditions on the space and time discretization for the abstraction of the system's behaviour which ensure that we can extract a well posed and hence meaningful transition system. Furthermore, these conditions include design parameters whose tuning provides the possibility for multiple transitions, and hence, enable the construction of transition systems with motion planning capabilities.
\end{abstract}
\keywords{hybrid systems, multi-agent systems, abstractions, transition systems.}

\title{Decentralized Abstractions For Multi-Agent Systems Under Coupled Constraints}
\maketitle

\section{Introduction}

High level task planning for multi-agent systems constitutes an active area of research which lies in the interface between computer science and modern control theory. A challenge in this new interdisciplinary direction constitutes of the problem of defining appropriate abstractions for continuous time multi-agent control systems, which can be used as a tool for the analysis and control of large scale systems and the synthesis of high level plans \cite{LsKk04}, \cite{KmBc08}. Robot motion planning and control constitutes a central field where this line of work is applied \cite{FgGaKGhPg09}, \cite{CyDxSaBc12}. In particular, the use of a suitable discrete system's model allows the automatic synthesis of discrete plans that guarantee satisfaction of the high level specifications. Then, under appropriate relations between the continuous system and its discrete analogue, these plans can be converted to low level primitives such as sequences of feedback controllers, and hence, enable the continuous system to implement the corresponding tasks.
Such tasks in the case of multiple mobile robots in an industrial workspace could include for example the following scenario. Robot 1 should periodically visit regions $A$, $B$, while avoiding $C$ and after collecting an item of type $X$ from robot 2 at location $D$, store it at location $E$.

In order to synthesize high level plans, it is required to specify an abstraction of the original system, namely a system that preserves some properties of interest of the initial system, while ignoring detail. Results in this direction for the nonlinear single plant case have been obtained in the papers \cite{PgGaTp08} and \cite{ZmPgMmTp12}, where the notions of approximate bisimulation and simulation are exploited for certain classes of nonlinear systems, under appropriate stability, and completeness assumptions, respectively. The notion of bisimulation, has its origin in computer science \cite{BcKjp08}, and guarantees that if the initial system and its abstraction are bisimilar, then the task of checking feasibility of high level plans for the original system reduces to the same task for its abstraction and vice versa. Bisimulation relations between transition system models of discrete or continuous time linear control systems with finite affine observation maps were explicitly characterized and constructed in \cite{Pg03}, providing also a generalization of the notion of state space equivalence between continuous time systems \cite{Va04}.

Another abstraction tool for a general class of systems is the hybridization approach \cite{AeDtGa07}, where the behaviour of a nonlinear system is captured by means of a piecewise affine hybrid system on simplices. Motion planing techniques for the latter case have been developed in the recent works \cite{GaMs08}, \cite{GaMs12}, which are also based on the abstraction and controller synthesis framework provided in \cite{HlVj01}, \cite{HlCpVj06}, and further studied in \cite{BmGm14}. Other abstraction techniques for nonlinear systems include \cite{Rg11}, where discrete time systems are studied in a behavioral framework, the sign based abstraction methodology introduced in \cite{Ta08}, which is based on Lie-algebraic type conditions and \cite{AaTaSa09}, where box abstractions are studied for polynomial and other classes of systems (for a literature survey on the subject see also the monograph \cite{Tp09}). It is also noted that certain of the aforementioned approaches have been extended to switched \cite{GaPgTp10}, \cite{GeDxLmBc14} and networked control systems \cite{ZmMmAa14}. Furthermore, abstractions for the case of discrete time interconnected systems that are described by coupled difference equations, can be found in \cite{TyIj08} and \cite{PgPpDm14}, for stabilizable linear systems, and incrementally input-to-state stable nonlinear systems, respectively. Finally, we note that the control design which we adopt for the construction of the symbolic models is in part related to the notion of In-Block Controllability \cite{CpWy95}, \cite{HmCp14b}. 

In particular, we focus on multi-agent systems and assume that the agents' dynamics consist of feedback interconnection terms, which ensure that certain system properties as for instance connectivity or (and) invariance are preserved, and free input terms, which provide the ability for motion planning under coupled constraints. 
In this report, we generalize the corresponding results of our recent work \cite{BdDd15a}, where sufficient conditions for well posed abstractions of the multi-agent system are provided for the case where the agents' workspace is $\Rat{n}$. A well posed abstraction refers to the property that the discrete state transition system which serves as an abstract model of the multi-agent system has at least one outgoing transition from each state. The extension in this work is twofold. First, the results on admissible space and time discretizations in \cite{BdDd15a} which ensure well posed abstractions, are now valid when the agents' workspace is a general domain $D$ of $\Rat{n}$, provided that $D$ is invariant for the dynamics of the system. Also, the corresponding framework is extended for motion planning, and sufficient conditions are provided which guarantee that each agent can perform multiple transitions from each initial discrete state.

The rest of the report  is organized as follows. Basic notation and preliminaries are introduced in Section 2. Section 3 is devoted to the formulation of the problem and motivates the control design that will be utilized for the derivation of the symbolic models. In Section 4, we define well posed abstractions for single integrator multi-agent systems by means of hybrid feedback controllers and prove that the latter provide solutions consistent with the design requirement on the systems' free inputs. Section 5 is devoted to specific properties of the control laws that realize the transitions of the proposed discrete system's model. In Section 6 we quantify space and time discretizations which guarantee well posed transitions with motion planning capabilities. The framework is illustrated through an example in Section 7 including simulation results.  Finally, we conclude and indicate directions of further research in Section 8.

\section{Preliminaries and Notation}

We use the notation $|x|$ for the Euclidean norm of a vector $x\in\Rat{n}$. For a subset $S$ of $\Rat{n}$, we denote by ${\rm cl}(S)$, ${\rm int}(S)$ and $\partial S$ its closure, interior and boundary, respectively, where $\partial S:={\rm cl}(S)\setminus{\rm int}(S)$. Given $R>0$ and $x\in\Rat{n}$, we denote by $B(R)$ the closed ball with center $0\in\Rat{n}$ and radius $R$, namely $B(R):=\{x\in\Rat{n}:|x|\le R \}$ and $B(x;R):=\{y\in\Rat{n}:|x-y|\le R \}$. For two nonempty sets $A,B\subset\Rat{n}$ their Minkowski sum is given as $A+B:=\{x+y\in\Rat{n}:x\in A, y\in B\}$. Also, for a nonempty set $A\subset\Rat{n}$ its diameter is defined as ${\rm diam}(A):=\sup\{|x-y|:x,y\in A\}$. Given a real number $a\in\RgeO$ we denote by $\lfloor a\rfloor$ the integer part of $a$ ($\lfloor a\rfloor:=\max\{n\in\mathbb{N}\cup\{0\}:n\le a\}$). Finally, given a function $f:X\to Y$ and a subset $W$ of $X$, the notation $f|_W$ is used for the restriction of $f$ to $W$.

Consider a multi-agent system with $N$ agents. For each agent $i\in\N:=\{1,\ldots,N\}$ we use the notation $\N_i$ for the set of its neighbors and $N_i:=|\N_i|$ for its cardinality. We also consider an ordering of the agent's neighbors which is denoted by $j_1,\ldots,j_{N_i}$, and define the $N_i$-tuple $ j(i)=(j_1,\ldots,j_{N_i})$. Whenever it is clear from the context, the argument $i$ in the latter notation will be omitted. Given an index set $\I$ and an agent $i\in\N$ with neighbors $j_1,\ldots,j_{N_i}\in\N$, define the mapping ${\rm pr}_i:\I^N\to\I^{N_i+1}$ which assigns to each $N$-tuple $(l_1,\ldots,l_N)\in\I^N$ the $N_i+1$-tuple $(l_i,l_{j_1},\ldots,l_{j_{N_i}})\in\I^{N_i+1}$, i.e., the indices of agent $i$ and its neighbors.

We proceed by providing a formal definition for the notion of a transition system (see for instance \cite{BcKjp08}, \cite{Pg03}, \cite{PgGaTp08}).

\begin{dfn}
A transition system is a tuple $TS:=(Q,Act,\longrightarrow)$, where:

\textbullet\; $Q$ is a set of states.

\textbullet\; $Act$ is a set of actions.

\textbullet\; $\longrightarrow$ is a transition relation with $\longrightarrow\subset Q\times Act\times Q$.

\noindent The transition system is said to be finite, if $Q$ and $Act$ are finite sets. We also use the (standard) notation $q\overset{a}{\longrightarrow} q'$ to denote an element $(q,a,q')\in\longrightarrow$. For every $q\in Q$ and $a\in Act$ we use the notation ${\rm Post}(q;a):=\{q'\in Q:(q,a,q')\in\longrightarrow\}$. The transition system is called deterministic if for each $q\in Q$ and $a\in Act$, $q\overset{a}{\longrightarrow} q'$ and $q\overset{a}{\longrightarrow} q''$ implies that $q'=q''$.
\end{dfn}


\section{Problem Formulation}

We focus on multi-agent systems with single integrator dynamics
\begin{equation}\label{single:integrator}
\dot{x}_{i}=u_{i},x_{i}\in\Rat{n},i\in\N
\end{equation}

\noindent and consider as inputs decentralized control laws of the form
\begin{equation}\label{general:feedback:law}
u_{i}=f_{i}(x_{i},\bf{x}_j)+v_{i}, i\in\N,
\end{equation}

\noindent with $\bf{x}_j(=\bf{x}_{j(i)}):=(x_{j_{1}},\ldots,x_{j_{N_i}})\in\Rat{N_i n}$ (see Section 2 for the Notation $j(i)$), consisting of two terms: a feedback term $f_{i}(\cdot)$ which depends on the states of $i$ and its neighbors, and an extra input term $v_{i}$, which we call free input. We assume that for each $i\in\N$ it holds $x_{i}\in D$ where $D$ is a domain of $\Rat{n}$ and that each $f_{i}(\cdot)$ is locally Lipschitz. We also assume that the feedback terms $f_{i}(\cdot)$ are globally bounded, namely, there exists a constant $M>0$ such that
\begin{equation} \label{dynamics:bound}
|f_{i}(x_{i},\bf{x}_j)|\le M, \forall (x_{i},\bf{x}_j)\in D^{N_i+1}.
\end{equation}

\noindent Furthermore, we consider piecewise continuous free inputs $v_{i}$ that satisfy the bound
\begin{equation}\label{input:bound}
|v_{i}(t)|\le v_{\max},\forall t\ge 0, i\in\N.
\end{equation}

In the subsequent analysis, it is assumed that the maximum magnitude of the feedback terms is higher than that of the free inputs, namely, that
\begin{equation} \label{vmax:vs:M}
v_{\max}<M.
\end{equation}

\noindent This assumption is motivated by the fact that we are primarily interested in maintaining the property that the feedback is designed for, and secondarily, in exploiting the free inputs in order to accomplish high level tasks. A class of multi-agent systems of the form \eqref{single:integrator}-\eqref{general:feedback:law} which justifies this assumption has been studied in our companion work \cite{BdDd15}. In particular, sufficient conditions are provided, which guarantee both connectivity of the network and forward invariance of the system's trajectories inside a given bounded domain, for an appropriate selection of $v_{\max}$ in \eqref{input:bound} which necessitates $v_{\max}$ to satisfy \eqref{vmax:vs:M}. The latter forward invariance property is defined in the Invariance Assumption (IA) below, which we assume that the multi-agent system \eqref{single:integrator}-\eqref{general:feedback:law} satisfies for the rest of the report.

\noindent \textbf{(IA)} For every initial condition $x(0)\in D^{N}$ and any piecewise continuous input $v=(v_1,\ldots,v_n):\RgeO\to\Rat{Nn}$ satisfying \eqref{input:bound}, the (unique) solution of the system \eqref{single:integrator}-\eqref{general:feedback:law} is defined and remains in $D^{N}$ for all $t\ge 0$. $\triangleleft$

\noindent This assumption does not restrict the class of systems under consideration since it is satisfied for any forward complete system when $D=\Rat{n}$. Recall that the system \eqref{single:integrator}-\eqref{general:feedback:law} is forward complete (see e.g, \cite{AdSe99}) if for each initial condition in $\Rat{n}$ and each measurable locally essentially bounded input $v=(v_1,\ldots,v_n):\RgeO\to\Rat{Nn}$, its solution exists for all positive times. Also, notice that due to the above bounds on the dynamics and the free input terms, the system \eqref{single:integrator}-\eqref{general:feedback:law} is forward complete. Finally, when $D$ is bounded, as is the case in \cite{BdDd15}, a finite partition of the workspace by bounded sets can lead to a finite transition system which captures the properties of interest of the multi-agent system and hence enables the investigation for computable solutions with respect to high level specifications.

In what follows, we consider a cell decomposition of the state space $D$ (which can be regarded as a partition of $D$) and a time step $\delta t>0$. We will refer to this selection as a space and time discretization. For the definition of a cell decomposition we adopt a modification of the corresponding definition from \cite[p 129-called cell covering]{Gl02}.

\begin{dfn} \label{cell:decomposition}
Let $D$ be a domain of $\Rat{n}$. A \bf{cell decomposition} $\mathcal{S}=\{S_{l}\}_{l\in\mathcal{I}}$ of $D$, where $\mathcal{I}$ is a finite or countable index set, is a family of nonempty connected sets $S_{l}$, $l\in\mathcal{I}$, such that $\sup\{{\rm diam}(S_l),l\in\I\}<\infty$, ${\rm cl}({\rm int}(S_l))=S_l$ for all $l\in\I$, ${\rm int}(S_{l})\cap {\rm int}(S_{\hat{l}})=\emptyset$ for all $l\ne\hat{l}$ and $\cup_{l\in\mathcal{I}} S_{l}=D$. $\triangleleft$
\end{dfn}

\noindent Given a cell decomposition $\S :=\{S_l\}_{l\in\I}$ of $D$, we use the notation $\bf{l}_i=(l_i,l_{j_1},\ldots,$ $l_{j_{N_i}})\in\I^{N_i+1}$ to denote the indices of the cells where agent $i$ and its neighbors belong at a certain time instant and call it the cell configuration of agent $i$. Similarly, we use the notation $\bf{l}=(l_{1},\ldots,l_{N})\in\mathcal{I}^{N}$ to specify the indices of the cells where all the $N$ agents belong at a given time instant and call it the cell configuration (of all agents). Thus, given a cell configuration $\bf{l}$, it is possible to determine the cell configuration of agent $i$ as $\bf{l}_i={\rm pr}_i(\bf{l})$ (see Section 2 for the definition of ${\rm pr}_i(\cdot)$).

Through the space and time discretization we aim at capturing reachability properties of the original continuous time system, by means of a discrete state transition system. Informally, we would like to consider for each agent $i$, its individual transition system with state set the cells of the state partition, actions defined to be all possible cells of its neighbors, and transition relation specified as follows. Given the initial cells of agent $i$ and its neighbors, it is possible for $i$ to perform a transition to a final cell, \textit{if for all states in its initial cell there exists a free input, such that its trajectory will reach the final cell at time $\delta t$, for all possible initial states of its neighbors in their cells, and their corresponding free inputs}. Feasibility of high level plans requires the corresponding system to be well posed (meaningful), which implies that for each initial cell it is possible to transit to (at least) one final cell.

We next illustrate the concept of a well posed space-time discretization, namely, a discretization which generates for each agent a meaningful transition system in accordance with the discussion above. Consider a cell decomposition as depicted in Fig. 1 and a time step $\delta t$. The tips of the arrows in the figure are the endpoints of agent's $i$ trajectories at time $\delta t$. In both cases in the figure we focus on agent $i$ and consider the same cell configuration for $i$ and its neighbors. However, we consider different dynamics for Cases (i) and (ii). In Case (i), we observe that for the three distinct initial positions in cell $S_{l_i}$, it is possible to drive agent $i$ to cell $S_{l_i'}$ at time $\delta t$. We assume that this is possible for all initial conditions in this cell and irrespectively of the initial conditions of $i$'s neighbors in their cells and the inputs they choose. We also assume that this property holds for all possible cell configurations of $i$ and for all the agents of the system. Thus we have a well posed discretization for system (i). On the other hand, for the same cell configuration and system (ii), we observe the following. For three distinct initial conditions of $i$ the corresponding reachable sets at $\delta t$, which are enclosed in the dashed circles, lie in different cells. Thus, it is not possible given this cell configuration of $i$ to find a cell in the decomposition which is reachable from every point in the initial cell and we conclude that discretization is not well posed for system (ii).

\begin{figure}[H]
\begin{center}
\begin{tikzpicture} [scale=.85]

\draw[color=gray,thick] (0,0) -- (4,0);
\draw[color=gray,thick] (0,1) -- (4,1);
\draw[color=gray,thick] (0,2) -- (4,2);
\draw[color=gray,thick] (0,3) -- (4,3);

\draw[color=gray,thick] (0,0) -- (0,3);
\draw[color=gray,thick] (1,0) -- (1,3);
\draw[color=gray,thick] (2,0) -- (2,3);
\draw[color=gray,thick] (3,0) -- (3,3);
\draw[color=gray,thick] (4,0) -- (4,3);

\draw[color=blue,very thick] (0,0) -- (1,0) -- (1,1) -- (0,1) -- (0,0);
\draw[color=green,very thick] (2,1) -- (3,1) -- (3,2) -- (2,2) -- (2,1);
\draw[color=green,very thick] (3,2) -- (4,2) -- (4,3) -- (3,3) -- (3,2);
\draw[color=cyan,very thick] (0,2) -- (1,2) -- (1,3) -- (0,3) -- (0,2);


\fill[yellow] (0.3,2.5) circle (0.3cm);
\fill[yellow] (0.4,2.7) circle (0.3cm);
\fill[yellow] (0.5,2.6) circle (0.3cm);

\draw[black, dashed] (0.3,2.5) circle (0.3cm);
\draw[black, dashed] (0.4,2.7) circle (0.3cm);
\draw[black, dashed] (0.5,2.6) circle (0.3cm);

\draw [color=red,thick,->,>=stealth'](0.2,0.3) .. controls (0.3,1.5) .. (0.3,2.5);
\fill[black] (0.2,0.3) circle (1.5pt);

\draw [color=red,thick,->,>=stealth'](0.5,0.8) .. controls (0.5,1.5) .. (0.4,2.7);
\fill[black] (0.5,0.8) circle (1.5pt);

\draw [color=red,thick,->,>=stealth'](0.7,0.6) .. controls (0.7,1.5) .. (0.5,2.6);
\fill[black] (0.7,0.6) circle (1.5pt);

\coordinate [label=left:$S_{l_i}$] (A) at (0,0.5);
\coordinate [label=above left:$x_{i}$] (A) at (0.85,0.15);
\coordinate [label=right:$S_{l_{j_1}}$] (A) at (3,1.5);
\fill[black] (2.3,1.4) circle (1.5pt) node[right]{$x_{j_{1}}$};
\coordinate [label=right:$S_{l_{j_2}}$] (A) at (4,2.5);
\fill[black] (3.4,2.8) circle (1.5pt) node[below]{$x_{j_{2}}$};
\coordinate [label=left:$S_{l_i'}$] (A) at (0,2.5);

\draw[color=gray,thick] (5,0) -- (9,0);
\draw[color=gray,thick] (5,1) -- (9,1);
\draw[color=gray,thick] (5,2) -- (9,2);
\draw[color=gray,thick] (5,3) -- (9,3);

\draw[color=gray,thick] (5,0) -- (5,3);
\draw[color=gray,thick] (6,0) -- (6,3);
\draw[color=gray,thick] (7,0) -- (7,3);
\draw[color=gray,thick] (8,0) -- (8,3);
\draw[color=gray,thick] (9,0) -- (9,3);

\draw[color=blue,very thick] (5,0) -- (6,0) -- (6,1) -- (5,1) -- (5,0);
\draw[color=green,very thick] (7,1) -- (8,1) -- (8,2) -- (7,2) -- (7,1);
\draw[color=green,very thick] (8,2) -- (9,2) -- (9,3) -- (8,3) -- (8,2);
\draw[color=cyan,dashed,very thick] (5,2) -- (6,2) -- (6,3) -- (5,3) -- (5,2);
\draw[color=cyan,dashed,very thick] (6,2) -- (7,2) -- (7,3) -- (6,3) -- (6,2);
\draw[color=cyan,dashed,very thick] (6,1) -- (7,1) -- (7,2) -- (6,2) -- (6,1);

\fill[yellow] (5.3,2.5) circle (0.3cm);
\fill[yellow] (6.4,2.7) circle (0.3cm);
\fill[yellow] (6.5,1.7) circle (0.3cm);

\draw[black, dashed] (5.3,2.5) circle (0.3cm);
\draw[black, dashed] (6.4,2.7) circle (0.3cm);
\draw[black, dashed] (6.5,1.7) circle (0.3cm);

\draw [color=red,thick,->,>=stealth'](5.2,0.3) .. controls (5.3,1.5) .. (5.3,2.5);
\fill[black] (5.2,0.3) circle (1.5pt);

\draw [color=red,thick,->,>=stealth'](5.5,0.8) .. controls (5.5,1.5) .. (6.4,2.7);
\fill[black] (5.5,0.8) circle (1.5pt);

\draw [color=red,thick,->,>=stealth'](5.7,0.6) .. controls (5.7,1) .. (6.5,1.7);
\fill[black] (5.7,0.6) circle (1.5pt);

\coordinate [label=below:System (i):] (A) at (2,-0.5);
\coordinate [label=below: $\dot{x}_{i}\text{$=$}f_{i,({\rm i})}(x_{i}\text{$,$}x_{j_{1}}\text{$,$}x_{j_{2}})+v_{i,({\rm i})}$] (A) at (2,-1);
\coordinate [label=below:System (ii):] (A) at (7,-0.5);
\coordinate [label=below: $\dot{x}_{i}\text{$=$}f_{i,({\rm ii})}(x_{i}\text{$,$}x_{j_{1}}\text{$,$}x_{j_{2}})+v_{i,({\rm ii})}$] (A) at (7,-1);

\coordinate [label=left:$S_{l_i}$] (A) at (5,0.5);
\coordinate [label=above left:$x_{i}$] (A) at (5.85,0.15);
\coordinate [label=right:$S_{l_{j_1}}$] (A) at (8,1.5);
\fill[black] (7.3,1.4) circle (1.5pt) node[right]{$x_{j_{1}}$};
\coordinate [label=right:$S_{l_{j_2}}$] (A) at (9,2.5);
\fill[black] (8.4,2.8) circle (1.5pt) node[below]{$x_{j_{2}}$};

\draw[dashed,->,>=stealth] (0.4,2.7) -- (1.5,3.5);
\coordinate [label=right:$x_{i}(\delta t)$] (A) at (1.5,3.5);

\draw[dashed,->,>=stealth] (5.3,2.5) -- (6.5,3.5);
\coordinate [label=right:$x_{i}(\delta t)$] (A) at (6.5,3.5);

\end{tikzpicture}
\caption{Illustration of a space-time discretization which is well posed for system (i) but non-well posed for system (ii).}
\end{center}
\end{figure}
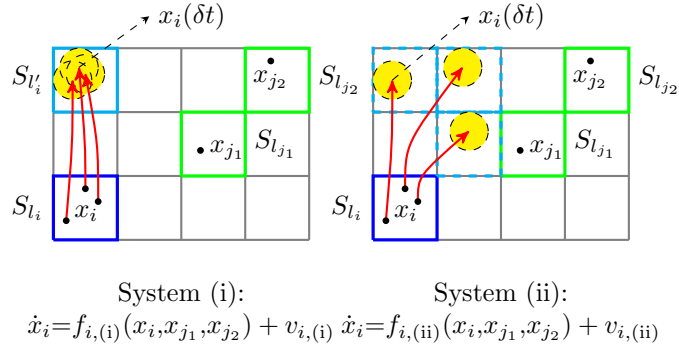

One main challenge in the attempt to provide meaningful decentralized abstractions for system \eqref{single:integrator}-\eqref{general:feedback:law} is the interconnection between the agents through the $f_{i}(\cdot)$ terms. The latter in conjunction with the considerations above, motivates the design of appropriate hybrid feedback laws in place of the $v_{i}$'s which will guarantee the desired well posed transitions.

We next define the particular feedback laws that are utilized for the construction of the symbolic models in this report, which also motivate the notion of well posed discretizations that will be formulated in the next section. Consider  a cell decomposition $\mathcal{S}=\{S_{l}\}_{l\in\mathcal{I}}$ of $D$ and a time step $\delta t$. For each agent $i\in\N$ and cell configuration $\bf{l}_i=(l_i,l_{j_1},\ldots,l_{j_{N_i}})\in\I^{N_i+1}$ of $i$ let 
\begin{equation} \label{reference:point}
(x_{i,G},\bf{x}_{j,G})\in S_{l_{i}}\times (S_{l_{j_1}}\times\cdots\times S_{l_{j_{N_i}}})
\end{equation}

\noindent be an arbitrary $N_i+1$-tuple of reference points and define the family of feedback laws  $k_{i,\bf{l}_i}:[0,T(x_{i0},w_i))\times D^{N_i+1}\to\Rat{n}$ parameterized by $x_{i0}\in S_{l_i}$ and $w_i\in W$ as
\begin{align}
k_{i,\bf{l}_i}(t,x_i,\bf{x}_j;x_{i0},w_i): & =k_{i,\bf{l}_i,1}(x_i,\bf{x}_j) \nonumber \\
&+k_{i,\bf{l}_i,2}(x_{i0})+k_{i,\bf{l}_i,3}(t;x_{i0},w_i), \label{feedback:ki}
\end{align}

\noindent where
\begin{equation} \label{set:W}
W:=B(\lambda v_{\max}),\lambda\in (0,1),
\end{equation}

\noindent and
\begin{align}
&k_{i,\bf{l}_i,1}(x_{i},\bf{x}_j):=f_i(x_i,\bf{x}_{j,G})-f_i(x_i,\bf{x}_j), \label{feedback:ki1} \\
&k_{i,\bf{l}_i,2}(x_{i0}):=\frac{1}{\delta t}(x_{i,G}-x_{i0}), \label{feedback:ki2} \\
& k_{i,\bf{l}_i,3}(t;x_{i0},w_i):=F_{i,\bf{l}_i}(\chi_{i}(t))+w_i-F_{i,\bf{l}_i}(\chi_i(t)+tw_i  \nonumber \\
&\qquad\qquad\qquad\qquad\qquad+(1-\tfrac{t}{\delta t})(x_{i0}-x_{i,G})),\label{feedback:ki3:plan} \\
& T(x_{i0},w_i):=\sup\{\bar{t}\in [0,T_{\max}):\chi_i(t)+tw_i \nonumber \\
& +\left(1-\tfrac{t}{\delta t}\right)(x_{i0}-x_{i,G})\in D,\forall t\in [0,\bar{t}]\}, \label{time:T:plan} \\
& t\in [0,T(x_{i0},w_i)),(x_i,\bf{x}_j)\in D^{N_i+1},x_{i0}\in S_{l_i},w_i\in W. \nonumber
\end{align}

\noindent The function $F_{i,\bf{l}_i}(\cdot)$ in \eqref{feedback:ki3:plan} is defined as
\begin{equation}\label{averaged:dynamics}
F_{i,\bf{l}_i}(x_i):=f_i(x_i,\bf{x}_{j,G}), x_i\in D
\end{equation}

\noindent and $\chi_i(\cdot)$ in \eqref{feedback:ki3:plan}, \eqref{time:T:plan} is the solution of the initial value problem
\begin{equation}\label{reference:solution}
\dot{\chi}_i=F_{i,\bf{l}_i}(\chi_i),\chi_i(0)=x_{i,G},
\end{equation}

\noindent which is defined and remains in $D$ on the maximal right interval $[0,T_{\max})$ ($T_{\max}$ is the same as in \eqref{time:T:plan}). In particular, $\chi_i(\cdot)$ constitutes a reference trajectory, whose endpoint agent $i$ should reach at time $\delta t$, when the agent's initial condition lies in $S_{l_i}$ and the feedback $k_{i,\bf{l}_i}(\cdot)$ above is applied (plus some extra hypotheses for the rest of the agents). The time $T(x_{i0},w_i)$ in \eqref{time:T:plan} stands for the right endpoint of the maximal right interval for which a modification of the reference trajectory that depends on $x_{i0}$ and $w_i$ is guaranteed to remain inside the domain $D$. Also, the parameter $\lambda$ in \eqref{set:W} stands for the portion of the free input that can be exploited for motion planning. In particular, each vector $w_i$ from the set $W$ in \eqref{set:W} provides a possible ``constant velocity" of a motion that we superpose to the reference trajectory $\chi_i(\cdot)$ of agent $i$, allowing thus the agent to reach all points inside a ball with center the position of the reference trajectory at time $\delta t$. Note that the control laws $k_{i,\bf{l}_i}(\cdot)$ are decentralized, since they only use information of agent $i$'s neighbors states. In addition, they depend on the cell configuration $\bf{l}_i$ through the reference points $(x_{i,G},\bf{x}_{j,G})$ which are involved in \eqref{feedback:ki1}-\eqref{feedback:ki3:plan}.

We proceed by providing some extra intuition for the selection of the control laws in \eqref{feedback:ki1}, \eqref{feedback:ki2} and \eqref{feedback:ki3:plan}, based on Fig. 2 below. Consider a cell decomposition of $D$, a time step $\delta t$ and select an agent $i$, a cell configuration of $i$ and a tuple of reference points as in \eqref{reference:point}. 
The reference trajectory of $i$ is obtained from \eqref{reference:solution}, by ``freezing" agent $i$'s neighbors at their corresponding reference points through the feedback term $k_{i,\bf{l}_i,1}(\cdot)$. Also, by selecting a vector $w_i$ in $W$ and informally assuming that we can superpose to the reference trajectory the motion of $i$ with constant speed $w_i$, namely, move along the curve $\bar{x}_i(\cdot)$ defined as
\begin{equation} \label{bar:xi:plan} 
\bar{x}_i(t):=\chi_i(t)+tw_i, t\in [0,T_{\max}),
\end{equation}

\noindent  we can reach the point $x$ inside the dashed ball at time $\delta t$ from the reference point $x_{i,G}$, as depicted in Fig. 2. In a similar way, it is possible to reach any point inside this ball by a different selection of $w_i$.
This ball has radius
\begin{equation} \label{distance:r}
r:=\lambda v_{\max}\delta t,
\end{equation}

\noindent namely, the distance that the agent can cross in  time $\delta t$ by exploiting the part  of the free input that is available for planning. \textit{Our abstraction requirement is that the transition to each cell which has nonempty intersection with $B(\chi_i(\delta t);r)$ is well posed.} This will be verified for system  \eqref{single:integrator}-\eqref{general:feedback:law} through the establishment of condition \eqref{planning:condition} in Theorem \ref{discretizations:for:planning} (Section 6) for appropriate space-time discretizations. For instance, in order for the transition to the cell where the point $x$ belongs to be well posed, we will require the following. That the feedback law  $k_{i,\bf{l}_i}(\cdot)$, which is selected in place of agent's $i$ free input, guarantees that for each initial condition of $i$ in cell $S_{l_i}$, the endpoint of $i$'s trajectory will coincide with the endpoint of the curve $\bar{x}_i(\cdot)$. In order to compensate for the deviation of the initial state with respect to the reference point and reach the point $x$, we use the extra terms $k_{i,\bf{l}_i,2}(\cdot)$ and $k_{i,\bf{l}_i,3}(\cdot)$. These enforce the agent to move with the velocity of the reference trajectory plus two constant velocity terms, one analogous to the displacement between the agents initial state and the reference point, and the other analogous to the distance between $x$ and the endpoint of $\chi_i(\cdot)$.

\begin{figure}[H]\label{fig2}
\begin{center}
\begin{tikzpicture}[scale=0.8]

\draw[dashed, color=red,thick] (3,0.5) circle (1cm);

\draw[color=gray] (0.1,0.8) -- (0.2,1.4) -- (1,1.5) -- (1.8,0.8);
\draw[color=gray] (1,1.5) -- (2.1,1.7);
\draw[color=gray] (0.3,-0.7) -- (0.8,0.1) -- (1.8,0.8)-- (1.6,-0.5) -- (0.3,-0.7);
\draw[color=gray] (0.1,0.8) -- (1.8,0.8);
\draw[color=green!50!black,thick] (-0.3,-0.5) -- (0.3,-0.7) -- (0.8,0.1) -- (0.1,0.8)-- (-0.3,-0.5);
\draw[color=cyan!70!black,thick] (1.8,0.8) -- (2.8,-0.7)-- (1.6,-0.5) -- (1.8,0.8);
\draw[color=cyan!70!black,thick] (1.8,0.8)-- (2.8,-0.7)-- (3.3,0.3) -- (3,1) -- (1.8,0.8);
\draw[color=cyan!70!black,thick] (2.8,-0.7)-- (4,-0.6) -- (4.1,0.4) -- (3.3,0.3) ;
\draw[color=cyan!70!black,thick] (1.8,0.8) -- (2.1,1.7) -- (3.2,1.7);
\draw[color=cyan!70!black,thick] (3,1) -- (3.2,1.7) -- (4.2,1.2) -- (4.1,0.4);

\fill[black] (3,0.5) circle (2pt);
\fill[black] (3.3,-0.2) circle (2pt);


\coordinate [label=left:$\textcolor{green!50!black}{S_{l_i}}$] (A) at (-0.3,-0.5);
\coordinate [label=right:$\textcolor{cyan!70!black}{S_{l_i'}}$] (A) at (3,-1);

\draw[dashed,->,>=stealth'] (3.2,0.5) -- (4.5,0.5) node[right] {$\chi_{i}(\delta t)$};
\draw[dashed,->,>=stealth'] (3.5,-0.2) -- (4.5,-0.2) node[right] {$x_{i}(\delta t)=x$};

\draw[color=red,dashed,thick] (6,0.5) -- (6.5,0.5);
\coordinate [label=right:$B(\chi_{i}(\delta t);r)$] (A) at (6.5,0.5);

\draw[color=blue,dashed,thick] (6,-0.7) -- (6.5,-0.7);
\coordinate [label=right:$\chi_{i}(\delta t)+tw_i$] (A) at (6.5,-0.7);


\draw[color=red,thick] (6,1.2) -- (6.5,1.2);
\coordinate [label=right:$x_{i}(t)$] (A) at (6.5,1.2);

\draw[color=olive!50!black,very thick,densely dotted] (6,1.9) -- (6.5,1.9);
\coordinate [label=right:$\bar{x}_{i}(t)$] (A) at (6.5,1.9);

\draw[color=blue,dashed,->,thick,>=stealth'] (3,0.5) -- (3.3,-0.2);

\draw[color=blue,->,thick,>=stealth'] (0,0)  .. controls (1.4,0.7) .. (3,0.5);
\draw[color=olive!50!black,->,very thick,>=stealth',densely dotted] (0,0)  .. controls (1.6,0.5) .. (3.3,-0.2);
\draw[color=red,->,thick,>=stealth'] (0.2,-0.6) .. controls (1.6,0.3) .. (3.3,-0.2);

\fill[black] (0.2,-0.6) circle (1.5pt) node[below] {$x_{i0}$};;
\fill[black] (0,0) circle (2pt) node[above left]{$x_{i,G}$};

\end{tikzpicture}
\vspace{-0.4cm}

\end{center}
\caption{Consider any point $x$ inside the ball with center $\chi_i(\delta t)$. Then, for each initial condition $x_{i0}$ in the cell $S_{l_i}$, the endpoint of agent's $i$ trajectory $x_i(\cdot)$ coincides with the endpoint of the curve $\bar{x}_i(\cdot)$, which is precisely $x$, and lies in $S_{l_i'}$, namely, $x_i(\delta t)=\bar{x}_i(\delta t)=x\in S_{l_i'}$.}
\end{figure}
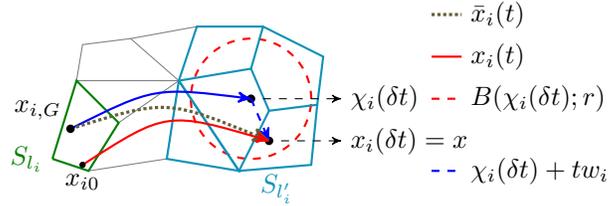

\section{Abstractions for Multi-Agent Systems}

In this section we formalize the discussion in Section 3, by exploiting a class of hybrid feedback laws containing the control laws introduced in \eqref{feedback:ki}. One reason for employing the subsequent analysis in an abstract framework is that the selection in  \eqref{feedback:ki1}-\eqref{feedback:ki3:plan} is not the only possible. For instance, it can be shown that selecting the control laws $k_{i,\bf{l}_i,1}(x_{i},\bf{x}_j):=f_i(x_{i,G},\bf{x}_{j,G})-f_i(x_i,\bf{x}_j)$, $k_{i,\bf{l}_i,2}(x_{i0})$ as before and $k_{i,\bf{l}_i,3}(w_i):=w_i$, can also provide well posed discretizations for system \eqref{single:integrator}-\eqref{general:feedback:law}. However, the latter will necessitate finer admissible discretizations and hence, increase the complexity of the symbolic model. In the sequel, given a time step $\delta t$ and the bounds $M$ and $v_{\max}$ on the feedback and input terms provided by \eqref{dynamics:bound} and \eqref{input:bound}, respectively, it is convenient to introduce the following lengthscale
\begin{equation} \label{Rmax}
R_{\max}:=\delta t(M+v_{\max}).
\end{equation}

\noindent It follows from \eqref{single:integrator}, \eqref{general:feedback:law}, \eqref{dynamics:bound}, \eqref{input:bound} and \eqref{Rmax} that $R_{\max}$ is the maximum distance an agent can travel within time $\delta t$.

Before defining the notion of a well posed space-time discretization we define the class of hybrid feedback laws which are assigned to the free inputs $v_i$ in order to obtain meaningful discrete transitions. For each agent, these control laws are parameterized by the agent's initial conditions and a set of auxiliary parameters belonging to a nonempty subset $W$ of $\Rat{n}$. These parameters, as discussed in the previous section, are exploited for motion planning. In particular, for every agent $i$, each vector $w_i\in W$ is in a one-to-one correspondence with a point inside a reachable ball for $i$, and the agent can reach this point by selecting the control law corresponding to the specific parameter $w_i$. The latter provides the possibility for the agent to perform transitions to different cells, namely, all cells which have nonempty intersection with that ball. Furthermore, we note that in accordance to the control laws introduced in \eqref{feedback:ki} for each agent $i$, the feedback laws in the following definition depend on the selection of the cells where $i$ and its neighbors belong. One basic requirement for this class of controllers consists of conditions that guarantee well posed solutions for the system (condition (P1) in Definition \ref{control:class}, below). We also impose a consistency requirement (condition (P2) in Definition \ref{control:class}) that their magnitude does not exceed the maximum bound on the free inputs \eqref{input:bound}, when the states of the agent and its neighbors lie in an appropriate inflation of their corresponding cells (in particular, an overapproximation of their reachable states over the time interval $[0,\delta t]$).

\begin{dfn}\label{control:class}
\noindent Given a cell decomposition $\S=\{S_l\}_{l\in\I}$ of $D$, a time step $\delta t$ and a nonempty subset $W$ of $\Rat{n}$, consider an agent $i\in\N$ and an initial cell configuration $\bf{l}_i=(l_i,l_{j_1},\ldots,l_{j_{N_i}})$ of $i$. For each $x_{i0}\in S_{l_i}$ and $w_i\in W$, let $T(x_{i0},w_i)>0$ and consider a mapping $k_{i,\bf{l}_i}(\cdot;x_{i0},w_i):[0,T(x_{i0},w_i))\times D^{N_i+1}\to\Rat{n}$, parameterized by $x_{i0}\in S_{l_i}$ and $w_i\in W$. We say that $k_{i,\bf{l}_i}(\cdot)$ satisfies \bf{Property \textbf(P)}, if the following conditions are satisfied.

\noindent\textbf{(P1)} For each $x_{i0}\in S_{l_i}$ and $w_i\in W$, the mapping $k_{i,\bf{l}_i}(\cdot;x_{i0},w_i)$ is locally Lipschitz continuous.

\noindent\textbf{(P2)} It holds
\begin{align}
& |k_{i,\bf{l}_i}(t,x_{i},\bf{x}_j;x_{i0},w_i)|\le v_{\max}, \forall t \in[0,\delta t]\cap [0,T(x_{i0},w_i)),\nonumber \\
& x_{i} \in (S_{l_i}+B(R_{\max}))\cap D, x_{j_{m}} \in (S_{l_{j_{m}}} +B(R_{\max}))\cap D,  \nonumber \\
& m=1,\ldots,N_i, x_{i0} \in S_{l_i},w_i\in W, \label{feedback:k:bound}
\end{align}

\noindent with $v_{\max}$ as given in \eqref{input:bound} and $R_{\max}$ as in \eqref{Rmax}.

\noindent\textbf{(P3)} It holds $T(x_{i0},w_i)>\delta t$, for all $x_{i0}\in S_{l_i}$, $w_i\in W$. $\triangleleft$
\end{dfn}

The motivation for considering the time interval $[0,T(x_{i0},w_i))$ in Definition \ref{control:class} comes from the maximal right interval on which the {modification of agent's $i$ reference trajectory in \eqref{time:T:plan}} remains inside the domain $D$. We next provide an extra Condition (C) for the feedback laws provided in the above definition, which is needed in order to define well posed discretizations.

\begin{dfn} \label{conditionC}
Consider a cell decomposition $\S=\{S_l\}_{l\in\I}$ of $D$, a time step $\delta t$ and a nonempty subset $W$ of $\Rat{n}$. Given an agent $i\in\N$, a cell configuration $\bf{l}_i=(l_i,l_{j_1},\ldots,l_{j_{N_i}})$ of $i$, a control law
\begin{equation} \label{feedback:for:i}
v_i=k_{i,\bf{l}_i}(t,x_{i},\bf{x}_j;x_{i0},w_i)
\end{equation}

\noindent as in Definition \ref{control:class} that satisfies Property (P), a vector $w_i\in W$, and a cell index $l_i'\in\I$, we say that \bf{Condition (C)} is satisfied, or specifically, that $\bf{l}_i$, $k_{i,\bf{l}_i}(\cdot)$, $w_i$, $l_i'$ satisfy Condition (C), if the following hold. For each initial cell configuration $\bf{l}$ with ${\rm pr}_i(\bf{l})=\bf{l}_i$, $\bf{l}=(l_{1},\ldots,l_{N})$, and for all $\ell\in\N\setminus \{{i}\}$ and feedback laws
\begin{equation} \label{feedback:for:others}
v_{\ell}=k_{\ell,\bf{l}_{\ell}}(t,x_{\ell},\bf{x}_{j(\ell)};x_{\ell 0},w_{\ell}),
\end{equation}

\noindent that satisfy Property (P) (with $\bf{l}_{\ell}={\rm pr}_{\ell}(\bf{l})$), the solution of the closed-loop system \eqref{single:integrator}-\eqref{general:feedback:law}, \eqref{feedback:for:i}-\eqref{feedback:for:others} is well defined on $[0,\delta t]$ and satisfies $x_{i}(\delta t,x(0))\in S_{l_i'}$, for all initial conditions $x(0)\in D^N$ with $x_i(0)=x_{i0}\in S_{l_i}$ $x_{\ell}(0)=x_{\ell 0}\in S_{l_{\ell}}$, $\ell\in\N\setminus\{i\}$ and $w_{\ell}\in W$,  $\ell\in\N\setminus\{i\}$. $\triangleleft$
\end{dfn}

Notice that when Condition (C) is satisfied, agent $i$ is driven to cell $S_{l_i'}$ precisely in time $\delta t$ under the feedback law $k_{i,\bf{l}_i}(\cdot)$ corresponding to the given parameter $w_i$ in the definition. In particular, Condition (C) ensures that the latter holds for any choice of feedback laws in place of the other agents' free inputs, as long as these control laws satisfy Property (P). We next provide the definition of a well posed space-time discretization. This definition formalizes our discussion on the possibility to assign a feedback law to each agent, in order to enable a meaningful transition from an initial to a final cell.

\begin{dfn}\label{well:posed:discretization}
Consider a cell decomposition $\S=\{S_l\}_{l\in\I}$ of $D$, a time step $\delta t$ and a nonempty subset $W$ of $\Rat{n}$.

\noindent \textbf{(i)} Given an agent $i\in\N$, an initial cell configuration $\bf{l}_i=(l_i,l_{j_1},\ldots,l_{j_{N_i}})$ of $i$ and a cell index $l_i'\in\I$ we say that \bf{the transition} $l_i\overset{\bf{l}_i}{\longrightarrow}l_i'$ \bf{is well posed with respect to the space-time discretization $\S-\delta t$}, if there exist a feedback law  $v_i=k_{i,\bf{l}_i}(\cdot;x_{i0},w_i)$ as in Definition \ref{control:class} that satisfies Property (P), and a vector $w_i\in W$, such that Condition (C) in Definition \ref{conditionC} is fulfilled.

\noindent \textbf{(ii)} We say that the \bf{space-time discretization} $\S-\delta t$ \bf{is well posed}, if for each agent $i\in\N$ and cell configuration $\bf{l}_i=(l_i,l_{j_1},\ldots,l_{j_{N_i}})$ of $i$, there exists a cell index $l_i'\in\I$ such that the transition $l_i\overset{\bf{l}_i}{\longrightarrow}l_i'$ is well posed with respect to $\S-\delta t$.
\end{dfn}

Given a space-time discretization $\S-\delta t$ and based on Definition \ref{well:posed:discretization}(i), it is now possible to provide an exact definition of the discrete transition system which serves as an abstract model for the behaviour of each agent.

\begin{dfn} \label{individual:ts}
For each agent $i$, its \bf{individual transition system} $TS_i:=(Q_i,Act_i,\longrightarrow_i)$ is defined as follows:

\noindent \textbullet\; $Q_i:=\I$ (the indices of the cell decomposition)

\noindent \textbullet\; $Act_i:=\I^{N_i+1}$ (the set of all cell configurations of $i$)

\noindent \textbullet\; $l_i\overset{\bf{l}_i}{\longrightarrow_i}l_i'$ iff $l_i\overset{\bf{l}_i}{\longrightarrow}l_i'$ is well posed, for each
$l_i,l_i'\in Q_i$ and $\bf{l}_i=(l_i,l_{j_1},\ldots,l_{j_{N_i}})\in Act_i$. $\triangleleft$
\end{dfn}

\noindent We have preferred to use the term actions instead of labels (as for instance in \cite{PgGaTp08}) for the elements of the set $Act_i$, because the cell configuration of $i$ indicates how the feedback term $f_i(\cdot)$ acts on and affects the possible transitions of agent $i$.

\begin{rem} \label{remark:post:nonempty}
\bf{(i)} Given a well posed space-time discretization $\S-\delta t$ and an initial cell configuration $\bf{l}=(l_{1},\ldots,l_{N})$, it follows from Definitions \ref{well:posed:discretization} and  \ref{individual:ts} that for each agent $i\in\N$ it holds ${\rm Post}_i(l_{i};{\rm pr}_{i}(\bf{l}))\ne\emptyset$ (${\rm Post}_i(\cdot)$ refers to the transition system $TS_i$ of each agent-see also Section 2).

\noindent \bf{(ii)} According to Definition \ref{well:posed:discretization}, given a control law $k_{i,\bf{l}_i}(\cdot)$ it is possible to perform transitions to different cells by an alternative selection of $w_i$.

\noindent \bf{(iii)} In addition, it is also possible to obtain different transitions by choosing an alternative control law. In particular, 
it is possible for the control laws considered in \eqref{feedback:ki} to obtain a different reference trajectory in \eqref{reference:solution} by selecting another set of points $(x_{i,G},\bf{x}_{j,G})$ and hence, reach a ball which intersects different cells (see Fig. 2). $\triangleleft$
\end{rem}

Assume a well posed space-time discretization $\S-\delta t$ is given. Based on Definition \ref{well:posed:discretization} and Remark \ref{remark:post:nonempty} we proceed by providing a 
modification of each transition system $TS_i$ that captures additional information on the control actions that realize the individual transitions. In particular, for each $i\in\N$ and cell configuration $\bf{l}_i=(l_i,l_{j_1},\ldots,l_{j_{N_i}})$ of $i$ we pick a control law $k_{i,\bf{l}_i}(\cdot)$ which generates at least one well posed transition, i.e., such that $\bf{l}_i$, $k_{i,\bf{l}_i}(\cdot)$, $w_i$, $l_i'$ satisfy Condition (C) for certain $w_i\in W$ and $l_i'\in\I$ (this is always possible since the discretization is well posed) and define for all $l\in\I$
\begin{equation}\label{wi:class}
[w_i]_{(\bf{l}_i,l)}:=\{w\in W:\bf{l}_i,k_{i,\bf{l}_i}(\cdot),w,l\ \textup{satisfy Condition(C)}\}.
\end{equation}

\noindent Based on \eqref{wi:class}, we next provide for each agent the controlled version of its individual transitions system. 

\begin{dfn} \label{controlled:TS}
\noindent Consider a well posed space-time discretization and select for each agent $i$ and cell configuration $\bf{l}_i$  a control law $k_{i,\bf{l}_i}(\cdot)$ which generates at least one well posed transition. Then, the \bf{controlled individual transition system} $TS_i^{c}:=(Q_i,Act_i^c,\longrightarrow_i^c\nobreak)$ of each agent $i$ is defined as follows

\noindent \textbullet\; $Q_i:=\I$

\noindent \textbullet\; $Act_i:=\I^{N_i+1}\times 2^W$

\noindent \textbullet\; $l_i\overset{(\bf{l}_i,[w_i])}{\longrightarrow_i^c}l_i'$ iff $l_i\overset{\bf{l}_i}{\longrightarrow}_il_i'$ and $[w_i]=[w_i]_{(\bf{l}_i,l_i')}\ne\emptyset$, for each
$l_i,l_i'\in Q_i$, $\bf{l}_i=(l_i,l_{j_1},\ldots,l_{j_{N_i}})\in\I^{N_i+1}$ and $[w_i]\in 2^W$, with $[w_i]_{(\bf{l}_i,l_i')}$ as defined in \eqref{wi:class}.  $\triangleleft$
\end{dfn}

\begin{rem} 
\noindent \bf{(i)} Although the set $2^W$ has uncountable cardinality, we have preferred this representation for the control actions (instead of e.g., to select $W$), because for each agent $i$ and cell configuration  $\bf{l}_i$ the possible actions which generate a transition are bounded by the successor states of the agent. 

\noindent \bf{(ii)} For each agent $i$ and cell configuration $\bf{l}_i=(l_i,l_{j_1},\ldots,l_{j_{N_i}})$ it holds $\cup_{[w_i]\in 2^W} {\rm Post}_i^c(l_i;(\bf{l}_i,[w_i]))\subset {\rm Post}_i(l_i;\bf{l}_i)$, since the transitions in $TS_i^c$ are associated to the specific controller selection for the cell configuration. $\triangleleft$ 
\end{rem}

Next, notice that according to Definition \ref{well:posed:discretization}, a well posed space-time discretization requires the existence of a well posed transition for each agent $i$. The latter reduces to the selection of an appropriate feedback controller for $i$, which also satisfies Property (P), and the requirement that the selected feedback controllers of the other agents also satisfy (P). Yet, it is not completely evident, that given an initial cell configuration and a well posed transition for each agent, it is possible to choose a feedback law for each agent, so that the resulting closed-loop system will guarantee all these well posed transitions (for all possible initial conditions in the cell configuration). The following proposition clarifies this point.

\begin{prop}\label{discrete:transitions:result}
Consider system \eqref{single:integrator}-\eqref{general:feedback:law}, let $\bf{l}=(l_{1},\ldots,l_{N})$ be an initial cell configuration and assume that the space-time discretization $S-\delta t$ is well posed, which according to Remark \ref{remark:post:nonempty} implies that for all $i\in\N$ it holds that ${\rm Post}_i(l_{i};{\rm pr}_{i}(\bf{l}))\ne\emptyset$. Then, for every final cell configuration $\bf{l}'=(l_1',\ldots,l_N')\in{\rm Post}_1(l_1;{\rm pr}_1(\bf{l}))\times\cdots\times{\rm Post}_N(l_N;{\rm pr}_N(\bf{l}))$, there exist feedback laws
\begin{equation} \label{feedback:for:all}
v_i=k_{i,{\rm pr}_i(\bf{l})}(t,x_{i},\bf{x}_j;x_{i0},w_i),i\in\N,
\end{equation}

\noindent satisfying Property (P), and $w_1,\ldots,w_N\in W$, such that for each $i\in\N$, the solution of the closed-loop system \eqref{single:integrator}-\eqref{general:feedback:law}, \eqref{feedback:for:all} (with $v_{m}=k_{m,{\rm pr}_{m}(\bf{l})}$, $m\in\N$) is well defined on $[0,\delta t]$, and its $i$-th component satisfies
\begin{align}
x_{i}(\delta t,x(0)) & \in S_{l_i'}, \forall x(0)\in D^N: \nonumber \\
x_{m}(0) & =x_{m 0}\in S_{l_{m}},m\in\N. \label{contoler:compatibility}
\end{align}
\end{prop}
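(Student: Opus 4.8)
The plan is to read the required feedback laws and parameters directly off the well-posedness hypothesis, and then to close the argument by invoking Condition (C) separately for each agent, exploiting the fact that in Definition \ref{conditionC} the conclusion is quantified \emph{universally} over all admissible feedback laws and parameters of the remaining agents. Concretely, since the space-time discretization $\S-\delta t$ is well posed and, for the given $\bf{l}'=(l_1',\ldots,l_N')$, one has $l_i'\in{\rm Post}_i(l_i;{\rm pr}_i(\bf{l}))$ for each $i\in\N$, Definition \ref{well:posed:discretization}(i) applied to the transition $l_i\overset{\bf{l}_i}{\longrightarrow}l_i'$ with $\bf{l}_i={\rm pr}_i(\bf{l})$ yields, for every $i$, a feedback law $v_i=k_{i,\bf{l}_i}(\cdot;x_{i0},w_i)$ as in Definition \ref{control:class} that satisfies Property (P), together with a vector $w_i\in W$, such that $\bf{l}_i$, $k_{i,\bf{l}_i}(\cdot)$, $w_i$, $l_i'$ satisfy Condition (C). Taking these objects for all $i\in\N$ furnishes exactly the feedback laws \eqref{feedback:for:all} and the parameters $w_1,\ldots,w_N\in W$ asserted by the proposition, and each of them satisfies Property (P) by construction.

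To verify \eqref{contoler:compatibility}, fix an arbitrary $i\in\N$ and consider the closed-loop system \eqref{single:integrator}-\eqref{general:feedback:law}, \eqref{feedback:for:all} in which every agent $m$ uses the selected $v_m=k_{m,{\rm pr}_m(\bf{l})}(\cdot;x_{m0},w_m)$. Condition (C), which holds for $\bf{l}_i$, $k_{i,\bf{l}_i}(\cdot)$, $w_i$, $l_i'$ by the previous step, applies to the cell configuration $\bf{l}$ itself since ${\rm pr}_i(\bf{l})=\bf{l}_i$; instantiating its universally quantified data with the feedback laws $v_\ell=k_{\ell,{\rm pr}_\ell(\bf{l})}(\cdot)$, which satisfy Property (P) with $\bf{l}_\ell={\rm pr}_\ell(\bf{l})$, and with the parameters $w_\ell\in W$ for $\ell\in\N\setminus\{i\}$, Condition (C) gives that the corresponding closed-loop solution is well defined on $[0,\delta t]$ and satisfies $x_i(\delta t,x(0))\in S_{l_i'}$ for every $x(0)\in D^N$ with $x_m(0)=x_{m0}\in S_{l_m}$, $m\in\N$. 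Since $i$ was arbitrary, \eqref{contoler:compatibility} follows for all $i\in\N$.

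The single point that deserves care --- and the reason the statement is not vacuous --- is the mutual consistency of these $N$ applications of Condition (C): each refers to ``the solution of the closed-loop system'', and one must check that all of them concern one and the same ordinary differential equation. This is exactly what the universal quantifier over the other agents' (P)-admissible feedbacks in Definition \ref{conditionC} guarantees: once every controller is pinned to the fixed selection made above, the closed-loop system appearing in Condition (C) for agent $i$ coincides with the one appearing for agent $j$, so well-posedness of the solution on $[0,\delta t]$ already follows from a single application and the $N$ endpoint inclusions $x_i(\delta t,x(0))\in S_{l_i'}$ all pertain to the same trajectory. No further estimates are needed; the content of the proposition lies entirely in composing the definitions correctly.
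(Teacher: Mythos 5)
Your proposal is correct and follows essentially the same route as the paper's own proof: extract, for each $i$, a (P)-admissible law $k_{i,{\rm pr}_i(\bf{l})}(\cdot)$ and a parameter $w_i$ from Definition \ref{well:posed:discretization}(i), then instantiate the universal quantifier over the other agents' (P)-admissible feedbacks in Condition (C) with the fixed selections. Your extra remark on the mutual consistency of the $N$ applications of Condition (C) (all referring to the same closed-loop ODE) is a useful clarification of a point the paper states only implicitly, but it does not change the argument.
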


\begin{proof}
The proof is given in the Appendix.
\end{proof}

The result of the following proposition guarantees that the selection of the controllers introduced in Definition \ref{control:class} provides well posed solutions for the closed-loop system on the time interval $[0,\delta t]$. We exploit this result in Theorem \ref{discretizations:for:planning}, where we derive sufficient conditions for well posed space-time discretizations, which are also suitable for motion planning. Furthermore, Proposition \ref{completeness:result} guarantees that the magnitude of the hybrid feedback laws does not exceed the maximum allowed magnitude $v_{\max}$ of the free inputs on $[0,\delta t]$, and hence, establishes consistency with the initial design requirement. In particular, it follows that every solution of the closed-loop
 system on $[0,\delta t]$ is identical to a solution of the original system \eqref{single:integrator}-\eqref{general:feedback:law} with the same initial condition and certain free input $v(\cdot)$ satisfying $|v_i(t)|\le v_{\max}$, for all $t\ge 0$ and $i\in\N$. For certain technical reasons concerning the proofs in the next sections, it is convenient to obtain the first results of the proposition for feedback laws that only satisfy Properties (P1) and (P2) of Definition \ref{control:class}.

\begin{prop} \label{completeness:result}
Consider the space-time discretization $\S-\delta t$ corresponding to the cell decomposition $\S$ of $D$ and the time step $\delta t$. Let $\bf{l}=(l_{1},\ldots,l_{N})$ be an initial cell configuration and consider any feedback laws of the form
\begin{equation} \label{feedback:for:all2}
v_i=k_{i,{\rm pr}_i(\bf{l})}(t,x_{i},\bf{x}_j;x_{i0},w_i),i\in\N
\end{equation}

\noindent assigned to the agents that satisfy Properties (P1) and (P2). Then:

\noindent \textbf{(i)} For each $w_i\in W$, $i\in\N$ and initial condition $x(0)\in D^N$ with $x_{i}(0)=x_{i0}\in S_{l_i}$, $i\in\N$, the solution of the closed-loop system \eqref{single:integrator}-\eqref{general:feedback:law}, \eqref{feedback:for:all2} (with $v_i=k_{i,{\rm pr}_i(\bf{l})}$, $i\in\N$) is defined and remains in $D^N$ for all $t\in [0,\tilde{T})$, where
\begin{equation}\label{time:tildeT}
\tilde{T}:=\min\{\delta t,\min\{T(x_{i0},w_i):i\in\N\}\}
\end{equation}

\noindent and
\begin{equation}\label{limit:solution}
\lim_{t\to\tilde{T}^-}x(t)\in D^N.
\end{equation}

Assume additionally that (P3) also holds, namely, that (P) is satisfied. Then:

\noindent \textbf{(iia)} The solution $x(t)$ of \eqref{single:integrator}-\eqref{general:feedback:law}, \eqref{feedback:for:all2} above remains in $D^{N}$ for all $t\in [0,\delta t]$ and satisfies
\begin{equation} \label{feedback:consistency}
|k_{i,{\rm pr}_i(\bf{l})}(t,x_{i}(t),\bf{x}_j(t);x_{i0},w_i)|\le v_{\max},\forall t\in[0,\delta t], i\in\N,
\end{equation}

\noindent which provides the desired consistency with the design requirement \eqref{input:bound} on the $v_i$'s.

\noindent \textbf{(iib)} There exists a piecewise continuous function $v=(v_1,\ldots,v_N):[0,\infty)\to\Rat{Nn}$ satisfying $|v_i(t)|\le v_{\max}$, $\forall t\ge 0$, $i\in\N$, such that the solution $x(\cdot)$ above and the solution $\xi(\cdot)$ of \eqref{single:integrator}-\eqref{general:feedback:law}, with the same initial condition as $x(\cdot)$ and input $v(\cdot)$, coincide on $[0,\delta t]$.
\end{prop}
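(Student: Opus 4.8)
The proposal is to reduce everything to a single Lipschitz-existence argument on a carefully chosen maximal interval, and then use Property (P2) together with the Invariance Assumption (IA) to close the argument.

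\textbf{Step 1 (local existence and maximal interval).}
First I would observe that the closed-loop vector field $x\mapsto f_i(x_i,\mathbf{x}_j)+k_{i,{\rm pr}_i(\mathbf{l})}(t,x_i,\mathbf{x}_j;x_{i0},w_i)$ is, by local Lipschitz continuity of the $f_i$ (stated after \eqref{general:feedback:law}) and (P1), locally Lipschitz in $x$ and at least piecewise continuous in $t$ on the open set $[0,\min_i T(x_{i0},w_i))\times D^N$. Hence for the given initial condition $x(0)\in D^N$ with $x_i(0)=x_{i0}\in S_{l_i}$ there is a unique maximal solution $x(\cdot)$ on some interval $[0,T^*)$ with $T^*\le \min_i T(x_{i0},w_i)$, and either $T^*=\min_i T(x_{i0},w_i)$ or $x(t)$ leaves every compact subset of $D^N$ as $t\to T^{*-}$.

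\textbf{Step 2 (a priori bound via (P2) and travel distance).}
The key step — and the main obstacle — is to show that on $[0,\tilde T)$ the states cannot escape the inflated cells where (P2) grants the bound $|k_{i,\mathbf{l}_i}|\le v_{\max}$, so that the estimate is self-consistent. The plan is a continuity/bootstrap argument: let
$\tau:=\sup\{t\in[0,\min\{T^*,\delta t\}): x_i(s)\in (S_{l_i}+B(R_{\max}))\cap D \text{ and } x_{j_m}(s)\in (S_{l_{j_m}}+B(R_{\max}))\cap D \text{ for all } s\in[0,t], i\in\N, m\}$.
On $[0,\tau)$ the closed-loop dynamics of every agent satisfies $|\dot x_i(t)|\le |f_i|+|k_{i,\mathbf{l}_i}|\le M+v_{\max}$ by \eqref{dynamics:bound} and (P2), hence $|x_i(t)-x_i(0)|\le (M+v_{\max})t\le (M+v_{\max})\delta t = R_{\max}$ for $t\le\min\{\tau,\delta t\}$. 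Since $x_i(0)=x_{i0}\in S_{l_i}$ (resp. $x_{j_m}(0)\in S_{l_{j_m}}$), this places $x_i(t)$ in $S_{l_i}+B(R_{\max})$ and, because the trajectory stays in $D^N$ by (IA), in $(S_{l_i}+B(R_{\max}))\cap D$ — with strict interiority when $t<\delta t$. A standard argument then shows $\tau\ge\min\{T^*,\delta t\}$, because if $\tau<\delta t$ the trajectory would still be strictly inside the inflated-and-clipped cells at time $\tau$, contradicting maximality of $\tau$ (here one uses that the sets $(S_{l}+B(R_{\max}))\cap D$ need not be closed, but the distance-$R_{\max}$ bound is strict for $t<\delta t$). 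Consequently $|k_{i,\mathbf{l}_i}(t,x_i(t),\mathbf{x}_j(t);x_{i0},w_i)|\le v_{\max}$ for all $t\in[0,\min\{T^*,\delta t\})$.

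\textbf{Step 3 (no finite escape; part (i)).}
With $|\dot x_i|\le M+v_{\max}$ uniformly on $[0,\min\{T^*,\delta t\})$ the solution $x(\cdot)$ is Lipschitz, hence uniformly continuous, and $\lim_{t\to T^{*-}}x(t)$ exists; by (IA) the trajectory stays in $D^N$ and this limit lies in $D^N$. If $T^*<\min_i T(x_{i0},w_i)$ and $T^*<\delta t$, continuity of the limit inside $D^N$ contradicts maximality (the solution could be continued), so $T^*\ge\min\{\delta t,\min_i T(x_{i0},w_i)\}=\tilde T$. This gives exactly \eqref{time:tildeT}, the invariance in $D^N$ on $[0,\tilde T)$, and the limit statement \eqref{limit:solution}, proving (i). (One caveat to spell out carefully: (IA) is stated for piecewise continuous inputs satisfying \eqref{input:bound}; to invoke it here one should note that on $[0,\tilde T)$ the closed-loop trajectory coincides with an open-loop trajectory of \eqref{single:integrator}-\eqref{general:feedback:law} under the piecewise continuous input $v_i(t):=k_{i,\mathbf{l}_i}(t,x_i(t),\mathbf{x}_j(t);x_{i0},w_i)$, which by Step 2 satisfies \eqref{input:bound} — this is the bridge used repeatedly below.)

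\textbf{Step 4 (parts (iia) and (iib) under (P3)).}
Now assume (P3), i.e. $T(x_{i0},w_i)>\delta t$ for all $i$; then $\tilde T=\delta t$, so (i) already gives the solution on $[0,\delta t)$ in $D^N$ and a limit in $D^N$ at $\delta t$. Since each closed-loop vector field is continuous up to $t=\delta t$ on $D^N$, the solution extends to the closed interval $[0,\delta t]$ staying in $D^N$; the bound \eqref{feedback:consistency} is then immediate from Step 2 (extended to the endpoint by continuity), giving (iia). For (iib), define $v_i(t):=k_{i,{\rm pr}_i(\mathbf{l})}(t,x_i(t),\mathbf{x}_j(t);x_{i0},w_i)$ for $t\in[0,\delta t]$ and, say, $v_i(t):=0$ for $t>\delta t$; this $v=(v_1,\dots,v_N)$ is piecewise continuous (composition of the piecewise-continuous-in-$t$, continuous-in-$x$ feedback with the continuous trajectory) and satisfies $|v_i(t)|\le v_{\max}$ for all $t\ge0$ by \eqref{feedback:consistency}. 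By construction $x(\cdot)$ solves \eqref{single:integrator}-\eqref{general:feedback:law} with input $v(\cdot)$ on $[0,\delta t]$; by (IA) applied to this $v(\cdot)$ the open-loop solution $\xi(\cdot)$ with the same initial condition exists and stays in $D^N$ for all $t\ge0$, and uniqueness of solutions (local Lipschitzness of $f_i$, piecewise continuity of $v$) forces $\xi(t)=x(t)$ on $[0,\delta t]$, which is (iib). I expect Step 2 to be the only genuinely delicate point; Steps 1, 3, 4 are routine ODE bookkeeping once the a priori bound is in hand.
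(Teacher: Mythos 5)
Your proposal is correct and follows essentially the same route as the paper's proof: your Step 2 bootstrap (travel distance at most $R_{\max}$ keeps all states in the inflated cells where (P2) applies) is the paper's $\tau$-argument with Fact I, and your bridge from the closed-loop trajectory to an open-loop trajectory with an admissible piecewise continuous input, followed by (IA) and uniqueness, is exactly the paper's Claim I and its use in parts (i), (iia) and (iib). The only cosmetic difference is that you argue via strict interiority of $S_{l}+{\rm int}(B(R_{\max}))$ rather than via the boundary-distance fact, and you (harmlessly, if slightly imprecisely) cite (IA) in Step 2 where staying in $D^N$ on the maximal interval already follows from the ODE being posed on the open set $D^N$.
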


\begin{proof}
The proof is given in the Appendix.
\end{proof}

\begin{rem}
Note, that the result of part (i) of Proposition \ref{completeness:result} holds for any selection of feedback laws $v_i=k_{i,{\rm pr}_i(\bf{l})}(\cdot)$ that satisfy Properties (P1) and (P2). Respectively, the results of parts (iia) and (iib) hold for all selections of feedback laws $v_i=k_{i,{\rm pr}_i(\bf{l})}(\cdot)$ that satisfy Property (P).
\end{rem}

In the final result of this section, we merge the results of Propositions \ref{discrete:transitions:result} and \ref{completeness:result}, and show that each infinite discrete behaviour of the decentralized abstraction can be implemented by a continuous controller, which is compatible with the restrictions on the free inputs, and produces a continuous trajectory that satisfies the Invariance Assumption (IA). In particular, we prove that for each possible discrete transition sequence of the overall system which is compatible with the individual transition system of each agent, there exists a trajectory of the continuous time system \eqref{single:integrator}-\eqref{general:feedback:law} which satisfies (IA) and generates the discrete trajectory when sampled at time intervals of length $\delta t$. Before proceeding, we introduce the following notion of the product transition system.

\begin{dfn}\label{product:TS}
\textbf{(i)} Consider the space-time discretization $\S-\delta t$, and for each agent $i\in\N$, its individual transition system $TS_i$ as provided by Definition \ref{individual:ts}. The \bf{product transition system} $TS_{\P}:=(Q_{\P},Act_{\P},\longrightarrow_{\P})$ is defined as follows:

\noindent \textbullet\; $Q_{\P}:=\I^N$ (all possible cell configurations)

\noindent \textbullet\; $Act_{\P}:=\{*\}$\footnote{Following notation \cite[page 11]{Tp09}} 

\noindent \textbullet\; $\bf{l}\overset{*}{\longrightarrow_{\P}}\bf{l}'$, iff $l_i'\in{\rm Post}_i(l_i;{\rm pr}_i(\bf{l})),\forall i\in\N$, for all $\bf{l}=(l_1,\ldots,l_N)$, $\bf{l}'=(l_1',\ldots,l_N')$.

\noindent \textbf{(ii)} Given an initial cell configuration $\bf{l}^0\in \I^N$, a path originating from $\bf{l}^0 $ in $TS_{\P}$, is an infinite sequence of states $\bf{l}^0 \bf{l}^1\bf{l}^2\ldots$ such that $\bf{l}^i\overset{*}{\longrightarrow}\bf{l}^{i+1}$ for all $i\in\mathbb{N}\cup\{0\}$. $\triangleleft$
\end{dfn}

\begin{rem}
Given a well posed space-time discretization $\S-\delta t$ and an initial cell configuration $\bf{l}^0\in \I^N$, it follows from Remark \ref{remark:post:nonempty} and Definition \ref{product:TS} that there exists at least one path $\bf{l}^0 \bf{l}^1\bf{l}^2\ldots$ in $TS_{\P}$ originating from $\bf{l}^0$. $\triangleleft$
\end{rem}

We are now in position to state the result providing the consistency between the discrete plan level and the continuous controller implementation.

\begin{prop} \label{proposition:runs}
Assume that the space-time discretization $\S-\delta t$ is well posed for the multi-agent system  \eqref{single:integrator}-\eqref{general:feedback:law}. Then, for each initial cell configuration $\bf{l}^0=(l_1^0,\ldots,l_N^0)$ and path $\bf{l}^0 \bf{l}^1\bf{l}^2\ldots$ originating from $\bf{l}^0 $ in $TS_{\P}$, and for each initial condition $x(0)\in D^N$ of \eqref{single:integrator}-\eqref{general:feedback:law} satisfying $x_i(0)\in S_{l_i^0}$, $i\in\N$, there exists a free input $\bar{v}(\cdot)$ satisfying $|\bar{v}_i(t)|\le v_{\max}$, $\forall t\ge 0$, $i\in\N$, such that the solution $x(t)$ of the system remains in $D^N$ for all $t\ge 0$ and satisfies $x_i(m\delta t)\in S_{l_i^{m}}$ for each $m\in\mathbb{N}$ and $i\in\N$.
\end{prop}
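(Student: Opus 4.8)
The plan is to argue by induction on the step index $m$, concatenating at each stage the one-step feedback laws furnished by Proposition~\ref{discrete:transitions:result} and converting them into open-loop free inputs via Proposition~\ref{completeness:result}. Fix the path $\bf{l}^0\bf{l}^1\bf{l}^2\ldots$ in $TS_{\P}$ and an initial condition $x(0)\in D^N$ with $x_i(0)\in S_{l_i^0}$ for all $i\in\N$. The inductive hypothesis at stage $m$ is that a piecewise continuous input $\bar v$ has already been constructed on $[0,m\delta t]$ with $|\bar v_i(t)|\le v_{\max}$, whose corresponding solution of \eqref{single:integrator}-\eqref{general:feedback:law} is defined on $[0,m\delta t]$, remains in $D^N$, and satisfies $x_i(k\delta t)\in S_{l_i^{k}}$ for $k=0,\ldots,m$ and all $i\in\N$; the base case $m=0$ is the given data.

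For the inductive step, since $\bf{l}^m\overset{*}{\longrightarrow_{\P}}\bf{l}^{m+1}$, Definition~\ref{product:TS} gives $l_i^{m+1}\in{\rm Post}_i(l_i^m;{\rm pr}_i(\bf{l}^m))$ for every $i\in\N$, so $\bf{l}^{m+1}$ lies in the product $\prod_{i\in\N}{\rm Post}_i(l_i^m;{\rm pr}_i(\bf{l}^m))$. I would then invoke Proposition~\ref{discrete:transitions:result} with initial cell configuration $\bf{l}^m$ and final cell configuration $\bf{l}^{m+1}$, obtaining feedback laws $v_i=k_{i,{\rm pr}_i(\bf{l}^m)}(\cdot;x_{i0},w_i)$ satisfying Property~(P) and vectors $w_i\in W$ such that the closed-loop solution issued from any initial condition whose $i$-th component lies in $S_{l_i^m}$ is defined on $[0,\delta t]$ and has $i$-th component in $S_{l_i^{m+1}}$ at time $\delta t$. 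Choosing $x_{i0}:=x_i(m\delta t)\in S_{l_i^m}$ (available by the inductive hypothesis), Proposition~\ref{completeness:result}(iia) guarantees that this solution stays in $D^N$ on $[0,\delta t]$ and that the feedback magnitudes are bounded by $v_{\max}$, while part~(iib) produces a piecewise continuous $v^{(m)}$ with $|v_i^{(m)}(t)|\le v_{\max}$ whose open-loop solution of \eqref{single:integrator}-\eqref{general:feedback:law}, from the same initial state, coincides with the closed-loop solution on $[0,\delta t]$. Extending $\bar v$ to $[m\delta t,(m+1)\delta t)$ by the time-shift $\bar v(t):=v^{(m)}(t-m\delta t)$, the new input is piecewise continuous and obeys the bound, and by uniqueness of solutions of \eqref{single:integrator}-\eqref{general:feedback:law} the trajectory on $[m\delta t,(m+1)\delta t]$ agrees with that solution, hence remains in $D^N$ and satisfies $x_i((m+1)\delta t)\in S_{l_i^{m+1}}$, which closes the induction.

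Letting $m\to\infty$ yields a globally defined piecewise continuous input $\bar v:\RgeO\to\Rat{Nn}$ with $|\bar v_i(t)|\le v_{\max}$ for all $t\ge 0$ and $i\in\N$. By the Invariance Assumption~(IA), the unique solution $x(\cdot)$ of \eqref{single:integrator}-\eqref{general:feedback:law} with this input and initial condition $x(0)$ is defined and remains in $D^N$ for all $t\ge 0$; restricted to each interval $[m\delta t,(m+1)\delta t]$ it coincides (again by uniqueness) with the piece constructed above, so $x_i(m\delta t)\in S_{l_i^m}$ for every $m\in\mathbb{N}$ and $i\in\N$, establishing the claim.

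The main obstacle I anticipate is the bookkeeping of the concatenation: one must check that the globally defined $\bar v$ is genuinely piecewise continuous (countably many pieces, each of length $\delta t$ and piecewise continuous, glued at the grid points $m\delta t$) and, more delicately, that the solution driven by $\bar v$ on the $m$-th interval really coincides with the closed-loop solution used to establish the $m$-th transition. This rests on the local Lipschitz regularity (P1), which yields uniqueness of the closed-loop solution, together with the open-loop representation in Proposition~\ref{completeness:result}(iib) and the time-shift invariance of the dynamics; the fact that the feedback laws need only act on $[0,\delta t]$ (ensured by (P3), $T(x_{i0},w_i)>\delta t$) makes the gluing at $t=m\delta t$ legitimate. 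The remainder is a routine induction.
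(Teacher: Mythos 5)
Your proposal is correct and follows essentially the same route as the paper: induction over the time grid, using Proposition \ref{discrete:transitions:result} to realize each transition $\bf{l}^m\to\bf{l}^{m+1}$, Proposition \ref{completeness:result}(iib) to replace the closed-loop feedback by an admissible open-loop input on one $\delta t$-interval, and then gluing by time-shift together with causality/uniqueness of solutions and the Invariance Assumption (IA). The paper merely packages the same argument slightly differently, constructing a sequence of globally defined inputs $v^m$ that agree on initial segments and invoking the causality, semigroup and time-invariance properties of the transition map explicitly.
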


\begin{proof}
The proof is carried out by induction and is based on the results of Propositions \ref{discrete:transitions:result} and \ref{completeness:result} (iib). Before stating the induction hypothesis, we provide certain basic properties of the deterministic control system \eqref{single:integrator}-\eqref{general:feedback:law} which can be found in \cite[Chapter 1]{KiJj11}, or \cite[Chapter 2]{Se98}.

Recall that according to our hypotheses, the input set $\mathcal{U}$ of the multi-agent system consists of all piecewise continuous inputs $v:\RgeO\to\Rat{Nn}$ satisfying $|v_i(t)|\le v_{\max}$, $\forall t\ge 0$, $i\in\N$. Also, for each $r>0$ we define the shift operator ${\rm Sh}_r:\mathcal{U}\to\mathcal{U}$ as
$$
{\rm Sh}_r(v)(t):=v(t+r),\forall t\ge 0,
$$

\noindent which implies that
\begin{equation}\label{shift:operator}
v'={\rm Sh}_r(v)\iff v'(t)=v(t+r),\forall t\ge 0\iff v'(t-r)=v(t),\forall t\ge r.
\end{equation}

\noindent To the control system \eqref{single:integrator}-\eqref{general:feedback:law} we associate the transition map $\varphi:A_{\varphi}\to D^N$ with $$
A_{\varphi}:=\{(t,t_0,x_0;v):t\ge t_0\ge 0,x_0\in D^N,v\in\mathcal{U}\},
$$

\noindent where $\varphi(t,t_0,x_0;v)$ denotes the value at time $t$ of the unique solution of \eqref{single:integrator}-\eqref{general:feedback:law} with initial condition $x_0$ at time $t_0$ and input $v(\cdot)$. Notice, that by virtue of the Invariance Assumption (IA), $\varphi(\cdot)$ is well defined. The map $\varphi(\cdot)$ satisfies the following properties:

\noindent \textbullet\textbf{Causality.}\; For each $t> t_0\ge 0$, $x_0\in D^N$ and $v^1,v^2\in\mathcal{U}$ with $v^1|_{[t_0,t)}=v^2|_{[t_0,t)}$ it holds
$$
\varphi(t,t_0,x_0;v^1)=\varphi(t,t_0,x_0;v^2),
$$

\noindent where $v^1|_{[t_0,t)}$ denotes the restriction of $v^1(\cdot)$ to $[t_0,t)$ (see Section 2).

\noindent \textbullet\textbf{Semigroup Property.}\; For each $t_2\ge t_1\ge t_0\ge 0$, $x_0\in D^N$ and $v\in\mathcal{U}$ it holds
$$
\varphi(t_2,t_1,\varphi(t_1,t_0,x_0;v);v)=\varphi(t_2,t_0,x_0;v).
$$

\noindent \textbullet\textbf{Time Invariance.}\; For each $r>0$, $t\ge t_0\ge r$, $x_0\in D^N$ and $v\in\mathcal{U}$ it holds
$$
\varphi(t,t_0,x_0;v)=\varphi(t-r,t_0-r,x_0;{\rm Sh}_r(v)).
$$

Now consider an initial cell configuration $\bf{l}^0=(l_1^0,\ldots,l_N^0)$, a path $\bf{l}^0 \bf{l}^1\bf{l}^2\ldots$ originating from $\bf{l}^0 $ in $TS^{\P}$, and an initial condition $x(0)\in D^N$ satisfying $x_i(0)\in S_{l_i^0}$, $i\in\N$. We will determine a free input $\bar{v}(\cdot)$ satisfying $|\bar{v}_i(t)|\le v_{\max}$, $\forall t\ge 0$, $i\in\N$, such that the corresponding solution $x(t):=\varphi(t,0,x(0);\bar{v})$ remains in $D^N$ for all $t\ge 0$ and satisfies $x_i(m\delta t)\in l_i^{m}$ for each $m\in\mathbb{N}$ and $i\in\N$. The construction of $\bar{v}(\cdot)$ is based on the following Induction Hypothesis, which constitutes the core of the proposition.

\noindent \textbf{Induction Hypothesis (IH).} For each $m\in\mathbb{N}$ there exists a piecewise continuous input $v^{m}:\RgeO\to\Rat{Nn}$ satisfying $|v^{m}_i(t)|\le v_{\max}$, $\forall t\ge 0$, $i\in\N$ and such that
\begin{align}
v^{m}|_{[0,\kappa\delta t)} & =v^{\kappa}|_{[0,\kappa\delta t)},\forall \kappa=1,\ldots,m-1, \label{IH:property1} \\
\varphi(t,0,x(0);v^{m}) & =\varphi(t,0,x(0);v^{\kappa}),\forall \kappa=1,\ldots,m-1,t\in[0,\kappa\delta t], \label{IH:property2} \\
\varphi_i(m\delta t,0,x(0);v^{m}) & \in S_{l^{m}_i},\forall i\in\N. \label{IH:property3}
\end{align}

\noindent $\triangleright$\textbf{Proof of (IH).} In order to prove (IH) for $m=1$, we need  to show that \eqref{IH:property3} is fulfilled. From the fact that $x_i(0)\in S_{l_i^0}$, $i\in\N$ and that $\bf{l}^0\overset{*}{\longrightarrow}\bf{l}^1$, we deduce from Definition \ref{product:TS} and Proposition \ref{discrete:transitions:result} that there exist feedback laws $k_{i,{\rm pr}_i(\bf{l})}(\cdot)$ as in \eqref{feedback:for:all} (with $\bf{l}=\bf{l}^0$), which satisfy Property (P), and $w_1,\ldots,w_N\in W$ such that  \eqref{contoler:compatibility} holds with $\bf{l}'=\bf{l}^1$. Hence, it follows from Proposition \ref{completeness:result}(iib) that there exists a piecewise continuous input $v^1:\RgeO\to\Rat{Nn}$ satisfying $|v^1_i(t)|\le v_{\max}$, $\forall t\ge 0$, $i\in\N$ and such that
$$
\varphi_i(\delta t,0,x(0);v^1) \in S_{l^1_i},\forall i\in\N
$$

\noindent which establishes \eqref{IH:property3} for $m=1$.

Now assume that (IH) holds for certain $m\in\mathbb{N}$. We will show that it is also valid for $m+1$. By exploiting Property \eqref{IH:property3} of (IH) for $m$ and that $\bf{l}^{m}\overset{*}{\longrightarrow}\bf{l}^{m+1}$, we deduce from Definition \ref{product:TS} and Proposition \ref{discrete:transitions:result} that there exist feedback laws $k_{i,{\rm pr}_i(\bf{l})}(\cdot)$ as in \eqref{feedback:for:all} (with $\bf{l}=\bf{l}^{m}$), which satisfy Property (P), and $w_1,\ldots,w_N\in W$ such that  \eqref{contoler:compatibility} holds with $\bf{l}'=\bf{l}^{m+1}$. Hence, it follows from Proposition \ref{completeness:result}(iib) that there exists a piecewise continuous input $v:\RgeO\to\Rat{Nn}$ satisfying $|v_i(t)|\le v_{\max}$, $\forall t\ge 0$, $i\in\N$ and such that
\begin{equation} \label{phi:property1}
\varphi_i(\delta t,0,\varphi(m\delta t,0,x(0);v^{m});v) \in S_{l^{m+1}_i},\forall i\in\N.
\end{equation}

\noindent We following define $v^{m+1}:\RgeO\to\Rat{Nn}$ as
\begin{equation}  \label{tilde:v}
v^{m+1}(t):=\left\lbrace \begin{array}{ll}
v^{m}(t), & t\in [0,m\delta t), \\
v(t-m\delta t), & t\in [m\delta t,\infty).
\end{array}\right.
\end{equation}

\noindent Then, it follows from \eqref{tilde:v} that $v^{m+1}(\cdot)$ satisfies \eqref{IH:property1} (with $m:=m+1$) and the latter implies \eqref{IH:property2} (with $m:=m+1$) by causality. Hence, we get from \eqref{IH:property2} that
\begin{equation} \label{phi:property2}
\varphi(m\delta t,0,x(0);v^{m+1})=\varphi(m\delta t,0,x(0);v^{m}).
\end{equation}

\noindent From  the semigroup property and \eqref{phi:property2} we deduce that
\begin{align}
\varphi((m+1)\delta t,0,x(0);v^{m+1}) & =\varphi((m+1)\delta t,m\delta t,\varphi(m\delta t,0,x(0);v^{m+1});v^{m+1}) \nonumber \\
& = \varphi((m+1)\delta t,m\delta t,\varphi(m\delta t,0,x(0);v^m);v^{m+1}). \label{phi:property3}
\end{align}

\noindent Also, we get from \eqref{tilde:v} and \eqref{shift:operator} that
\begin{equation}\label{tilde:v:property}
v={\rm Sh}_{m\delta t}(v^{m+1}).
\end{equation}

\noindent Thus, it follows from time invariance  and \eqref{tilde:v:property} that
\begin{align}
\varphi((m+1)\delta t,m\delta t,\varphi(m\delta t,0,x(0);v^{m});v^{m+1}) & =\varphi(\delta t,0,\varphi(m\delta t,0,x(0);v^{m});{\rm Sh}_{m\delta t}(v^{m+1})) \nonumber \\
& = \varphi(\delta t,0,\varphi(m\delta t,0,x(0);v^m);v). \label{phi:property4}
\end{align}

\noindent Hence, we conclude from \eqref{phi:property1}, \eqref{phi:property3} and \eqref{phi:property4} that \eqref{IH:property3} also holds (with $m:=m+1$) and the proof of (IH) is complete. $\triangleleft$

In order to finish the proof of the proposition, define $\bar{v}:\RgeO\to\Rat{Nn}$ by
\begin{equation} \label{v:bar}
\bar{v}(t):=v^m(t),m\in\mathbb{N},t\in[(m-1)\delta t,m\delta t),
\end{equation}

\noindent with $v^m(\cdot)$ as given by (IH) for each $m\in\mathbb{N}$. Then, it follows from \eqref{v:bar} and (IH) that  $|\bar{v}_i(t)|\le v_{\max}$, $\forall t\ge 0$, $i\in\N$ and thus, by the Invariance Assumption (IA), the solution $x(t)$ of the system remains in $D^N$ for all $t\ge 0$. Furthermore, it holds that
\begin{equation} \label{v:bar:property}
\bar{v}|_{[0,m\delta t)}=v^m|_{[0,m\delta t)},\forall m\in\mathbb{N}.
\end{equation}

\noindent Indeed, for each $t\in [0,m\delta t)$ there exists $\kappa\in\{1,\ldots,m\}$ such that $t\in[(\kappa-1)\delta t,\kappa\delta t)$. If $\kappa=m$, then it follows from \eqref{v:bar} that $\bar{v}(t)=v^m(t)$. If $\kappa\in\{1,\ldots,m-1\}$, then we get from \eqref{v:bar} that $\bar{v}(t)=v^{\kappa}(t)$ and thus from \eqref{IH:property1} that $\bar{v}(t)=v^m(t)$. Hence, \eqref{v:bar:property} is valid. Finally, from \eqref{v:bar:property}, \eqref{IH:property3} and causality we conclude that for each $m\in\mathbb{N}$ it holds
$$
x_i(m\delta t)=\varphi_i(m\delta t,0,x(0);\bar{v})=\varphi_i(m\delta t,0,x(0);v^{m}) \in S_{l^m_i}
$$

\noindent and the proof is complete.
\end{proof}

We note that the result of Proposition \ref{proposition:runs} remains valid if we consider the product of the agents' controlled transition systems as given by Definition \ref{controlled:TS}, which will be determined explicitly in Section 6 for the control laws $k_{i,\bf{l}_i}$ in \eqref{feedback:ki}. This observation is summarized in the following remark.

\begin{rem} \label{remark:controlled:product}
Instead of the product $TS_{\P}$ formed by the agents' individual transition systems $TS_i$, consider the \bf{controlled product transition system} $TS_{\P}^c:=(Q_{\P}^c,Act_{\P}^c,\longrightarrow_{\P}^c\nobreak)$ formed by the controlled transition systems $TS_i^c$, $i\in\N$ in Definition \ref{controlled:TS} as follows: 

\noindent \textbullet\; $Q_{\P}^c=\I^N$; 

\noindent \textbullet\; $Act_{\P}^c=\{*\}$; 

\noindent \textbullet\; $\bf{l}\overset{*}{\longrightarrow_{\P}^c}\bf{l}'$, iff there exist $[w_1],\ldots,[w_N]\in 2^W$ such that $l_i'\in{\rm Post}_i^c(l_i;({\rm pr}_i(\bf{l}),[w_i])),\forall i\in\N$, for all $\bf{l}=(l_1,\ldots,l_N)$, $\bf{l}'=(l_1',\ldots,l_N')$. 

\noindent Then, the result of Proposition \ref{proposition:runs} remains valid for any initial cell  configuration $\bf{l}^0$ and path $\bf{l}^0 \bf{l}^1\bf{l}^2\ldots$ originating from $\bf{l}^0 $ in $TS_{\P}^c$. $\triangleleft$ 
\end{rem}

\section{Time Domain Properties of the Control Laws}

In this section we use the results of Section 4 in order to prove certain useful properties of the reference trajectory $\chi_i(\cdot)$ and the time domain $[0,T_i(x_{i0},w_i))$ of the control laws \eqref{feedback:ki} as specified by \eqref{time:T:plan}. We proceed by providing some extra details for the dynamics as determined by the control law in \eqref{general:feedback:law}. In particular, we assume that the $f_i$'s are globally Lipschitz functions. Furthermore, if we want to achieve more accurate bounds for the dynamics of the feedback controllers assigned to the free inputs $v_i$ (those will be clarified in the proof of Theorem \ref{discretizations:for:planning} in the next section), we can choose (possibly) different Lipschitz constants $L_{1},L_{2}>0$ such that
\begin{align}
|f_{i}(x_{i},\bf{x}_j)-f_{i}(x_{i},\bf{y}_j)|\le & L_{1}|(x_{i},\bf{x}_j)-(x_{i},\bf{y}_j)|, \label{dynamics:bound1} \\
|f_{i}(x_{i},\bf{x}_j)-f_{i}(y_{i},\bf{x}_j)|\le & L_{2}|(x_{i},\bf{x}_j)-(y_{i},\bf{x}_j)|,  \label{dynamics:bound2} \\
\forall x_{i},y_{i} \in & D, \bf{x}_j,\bf{y}_j \in D^{N_i}, i\in\N.\nonumber
\end{align}

\noindent In order to provide some extra informal motivation on considering both constants $L_1$ and $L_2$, we recall that in order to derive sufficient conditions for a well posed discretization, we design for each agent $i$ inside a cell $S_{l_i}$ a feedback, in order to ``track" a given reference trajectory (of $i$) starting in the same cell. In particular, the constant $L_{1}$ provides bounds on the feedback term \eqref{feedback:ki1} which compensates for the deviation of agent's $i$ dynamics from its corresponding dynamics along the reference trajectory, due to the time evolution of its neighbors' states. On the other hand, the constant $L_{2}$ provides bounds on the feedback term \eqref{feedback:ki3:plan} which compensates for the deviation of the initial state with respect to the initial state of the reference trajectory.

Based on the global Lipschitz assumption, we establish uniqueness of the reference trajectory $\chi_i(\cdot)$ and provide a lower bound for the right endpoint $T_{\max}$ of its maximal interval of existence, which is independent of the selection of $(x_{i,G},\bf{x}_{j,G})$ in \eqref{reference:point}. 

\begin{lemma} \label{lemma:Tmax}
For each tuple of reference points $(x_{i,G},\bf{x}_{j,G})$ as in \eqref{reference:point}, the initial value problem \eqref{reference:solution} has a unique solution which is defined and remains in $D$ on the right maximal interval $[0,T_{\max})$. Furthermore, it holds
\begin{equation}\label{Tmax:lower:bound}
T_{\max}>\frac{v_{\max}}{2ML_1\max\{\sqrt{N_i}:i\in\N\}}.
\end{equation}
\end{lemma}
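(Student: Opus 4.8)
The plan is to treat the two assertions separately. Existence, uniqueness and the characterization of the maximal interval are a routine consequence of the global Lipschitz hypothesis \eqref{dynamics:bound1}--\eqref{dynamics:bound2}: the vector field $F_{i,\bf{l}_i}(\cdot)=f_i(\cdot,\bf{x}_{j,G})$ is globally Lipschitz on $D$ (with constant $L_2$) and bounded by $M$ in view of \eqref{dynamics:bound}, so the Picard--Lindel\"of theorem yields a unique solution of \eqref{reference:solution} on a maximal forward interval $[0,T_{\max})$, and boundedness of the vector field rules out finite-time blow up; hence, if $T_{\max}<\infty$, the solution must eventually leave every compact subset of $D$, which is the content of ``defined and remains in $D$ on $[0,T_{\max})$''.

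For the lower bound \eqref{Tmax:lower:bound} the idea is to realize $\chi_i(\cdot)$ as the $i$-th component of an honest closed-loop trajectory of \eqref{single:integrator}--\eqref{general:feedback:law} and then invoke the Invariance Assumption (IA). Fix the reference points $(x_{i,G},\bf{x}_{j,G})$, assign to agent $i$ the neighbor-freezing feedback $k_{i,\bf{l}_i,1}$ of \eqref{feedback:ki1} --- equivalently, the controller \eqref{feedback:ki} with $x_{i0}=x_{i,G}$ and $w_i=0$, for which $k_{i,\bf{l}_i,2}$ and $k_{i,\bf{l}_i,3}$ vanish --- and assign $v_m\equiv 0$ to every other agent $m$; take an initial condition $x(0)\in D^N$ with agent $i$ and its neighbors placed exactly at $(x_{i,G},\bf{x}_{j,G})$ (and the remaining agents anywhere in $D$). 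Then the $i$-th equation becomes $\dot x_i=f_i(x_i,\bf{x}_j)+\bigl(f_i(x_i,\bf{x}_{j,G})-f_i(x_i,\bf{x}_j)\bigr)=F_{i,\bf{l}_i}(x_i)$ with $x_i(0)=x_{i,G}$, so $x_i(\cdot)$ coincides with $\chi_i(\cdot)$ as long as either is defined; in particular the estimate below will not depend on the choice of reference points. As long as the closed-loop solution stays in $D^N$, every neighbor moves with speed at most $M$ by \eqref{dynamics:bound}, hence $|x_{j_k}(t)-x_{j_k,G}|\le Mt$ and $|\bf{x}_j(t)-\bf{x}_{j,G}|\le\sqrt{N_i}\,Mt$, so by \eqref{dynamics:bound1} the input of agent $i$ obeys $|k_{i,\bf{l}_i,1}(x_i(t),\bf{x}_j(t))|\le L_1\sqrt{N_i}\,Mt$, which is at most $v_{\max}$ for every $t\le\tau:=v_{\max}/(L_1\sqrt{N_i}\,M)$.

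A short bootstrap finishes the proof. On the maximal interval of existence of the closed-loop solution inside $D^N$, the resulting input is continuous and, by the previous bound, has all components of magnitude at most $v_{\max}$ on $[0,\tau)$; regarding it there as an open-loop input and applying (IA) shows that the associated solution exists and stays in $D^N$ for all $t\ge0$, and by uniqueness of solutions it agrees with the closed-loop solution, which therefore cannot leave $D^N$ before time $\tau$. Consequently $\chi_i(t)=x_i(t)\in D$ for all $t\in[0,\tau)$, so $T_{\max}\ge\tau\ge v_{\max}/\bigl(ML_1\max\{\sqrt{N_i}:i\in\N\}\bigr)>v_{\max}/\bigl(2ML_1\max\{\sqrt{N_i}:i\in\N\}\bigr)$, which is exactly \eqref{Tmax:lower:bound}. (If one does not wish to prescribe the neighbors' inputs, repeating the computation with the crude bound $|\dot x_m|\le M+v_{\max}<2M$ in place of $M$ gives the stated constant directly.) I expect the bootstrap step to be the main obstacle, since (IA) is phrased for open-loop inputs and some care is needed to justify applying it to the feedback law, together with the attendant bookkeeping of maximal intervals; the drift estimates and the ODE well-posedness are otherwise standard.
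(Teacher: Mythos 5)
Your proof is correct and rests on the same key idea as the paper's: realize $\chi_i(\cdot)$ as the $i$-th component of a closed-loop trajectory of \eqref{single:integrator}--\eqref{general:feedback:law} obtained by assigning the neighbor-freezing feedback \eqref{feedback:ki1} to agent $i$ and zero inputs to the others, and then invoke the Invariance Assumption (IA) to keep that trajectory inside $D^N$ long enough. The only real difference is in the packaging of the bootstrap: the paper introduces an auxiliary cell decomposition whose neighbor cells are the singletons $\{x_{j_\kappa,G}\}$ and the auxiliary time step $\delta t_o=\tfrac{v_{\max}}{2ML_1\max\{\sqrt{N_i}\}}$, verifies Property (P) for that discretization (using the inflation radius $R_{\max}=\delta t_o(M+v_{\max})<2M\delta t_o$), and then cites Proposition \ref{completeness:result}(iia), whereas you carry out the open-loop/uniqueness/maximality argument inline with the direct drift estimate $|x_{j_\kappa}(t)-x_{j_\kappa,G}|\le Mt$; this incidentally yields the sharper threshold $v_{\max}/(ML_1\max\{\sqrt{N_i}\})$, and your closing parenthetical correctly explains why the cruder bound $M+v_{\max}<2M$ recovers the paper's stated constant.
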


\begin{proof}
For the proof of the lemma we exploit the result of Proposition \ref{completeness:result}. In particular, we show that the solution $\chi_i(\cdot)$ of \eqref{reference:solution} coincides on a suitable time interval with the $i$-th component of the solution of the multi-agent system  \eqref{single:integrator}-\eqref{general:feedback:law} under an appropriate selection of the initial conditions and feedback controllers for the $v_i$'s. Hence, by implicitly exploiting the Invariance Assumption (IA) that leads to the result of Proposition \ref{completeness:result}(iia), which is valid for any choice of feedback laws that satisfy Property (P), we will verify that \eqref{Tmax:lower:bound} is fulfilled.

In order to proceed with the proof, let $(x_{i,G},\bf{x}_{j,G})$ be a tuple of reference points as in \eqref{reference:point}, corresponding to a cell decomposition $\{S_{l}\}_{l\in\I}$ of $D$  and a cell configuration $\bf{l}_i$ of agent $i$,  and consider another cell decomposition $\{S^o_{l_o}\}_{l_o\in\I_o}$ of $D$ and an initial cell configuration $\bf{l}_o=(l_{o1},\ldots,l_{oN})\in\I_o^{N}$ with ${\rm pr}_i(\bf{l}_o)=(l_{oi},l_{oj_1},\ldots,l_{oj_{N_i}})$, such that
\begin{equation} \label{initial:cond:in:cells:lemma}
x_{i,G}\in S^o_{l_{oi}}\;{\rm and}\;S^o_{l_{oj_{\kappa}}}=x_{j_{\kappa},G},\kappa=1,\ldots,N_i.
\end{equation}

\noindent We have selected the auxiliary cell decomposition $\{S^o_{l_o}\}_{l_o\in\I_o}$ with the sets $S^o_{l_{oj_{\kappa}}}$ consisting of a single element, because this slightly simplifies the subsequent analysis and also allows obtaining a greater (uniform) lower bound for the time $T_{\max}$. Next, define the time step
\begin{equation} \label{time:deltato}
\delta t_o:=\frac{v_{\max}}{2ML_1\max\{\sqrt{N_i}:i\in\N\}}
\end{equation}

\noindent and consider the feedback laws $k_{i,{\rm pr}_i(\bf{l}_o)}:D^{N_i+1}\to\Rat{n}$ given by
\begin{equation}  \label{feedback:ki:aux}
k_{i,{\rm pr}_i(\bf{l}_o)}(x_{i},\bf{x}_j):=f_i(x_i,\bf{x}_{j,G})-f_i(x_i,\bf{x}_j)= F_{i,\bf{l}_i}(x_i)-f_i(x_i,\bf{x}_j),
\end{equation}

\noindent with $F_{i,\bf{l}_i}(\cdot)$ as in \eqref{averaged:dynamics} and $k_{\ell,{\rm pr}_{\ell}(\bf{l}_o)}:D^{N_{\ell}+1}\to\Rat{n}$ for $\ell\in\N\setminus\{i\}$ given by
\begin{equation} \label{feedback:ki:aux:others}
k_{\ell,{\rm pr}_{\ell}(\bf{l}_o)}(x_{\ell},\bf{x}_{j(\ell)}):=0.
\end{equation}

\noindent Note that the feedback laws $k_{\ell,{\rm pr}_{\ell}(\bf{l}_o)}(\cdot)$ for $\ell\in\N\setminus\{i\}$ satisfy Property (P) by default. Hence, in order to invoke Proposition \ref{completeness:result}(iia), we show that $k_{i,{\rm pr}_i(\bf{l}_o)}(\cdot)$ also satisfies (P). Property (P3) is obvious, since $k_{i,{\rm pr}_i(\bf{l}_o)}(\cdot)$ is independent of $t$. Property (P1) follows from the corresponding Lipschitz property for $f_{i}(\cdot)$ and $F_{i,\bf{l}_i}(\cdot)$, since the latter satisfies the Lipschitz condition
\begin{equation}  \label{tildefi:Lipschitz:const}
|F_{i,\bf{l}_i}(x)-F_{i,\bf{l}_i}(y)| \le L_{2}|x-y|,\forall x,y\in D.
\end{equation}

\noindent Indeed, due to \eqref{dynamics:bound2} and \eqref{averaged:dynamics}, we have that for each $x,y\in D$ it holds
\begin{equation*}
|F_{i,\bf{l}_i}(x)-F_{i,\bf{l}_i}(y)|=|f_{i}(x,\bf{x}_{j,G})-f_{i}(y,\bf{x}_{j,G})| \le L_{2}|(x,\bf{x}_{j,G})-(y,\bf{x}_{j,G})|=L_{2}|x-y|.
\end{equation*}

\noindent In order to show (P2), notice that due to \eqref{time:deltato} we get
\begin{equation} \label{time:deltato:consequence}
v_{\max}\ge 2M\delta t_o L_1\sqrt{N_i},\;\textup{for all}\;i\in\N.
\end{equation}

\noindent Hence, we get from \eqref{Rmax}, \eqref{dynamics:bound1}, \eqref{vmax:vs:M}, \eqref{initial:cond:in:cells:lemma} and \eqref{time:deltato:consequence} that for every $x_i\in (S^{o}_{l_{oi}}+B(R_{\max}))\cap D$ and $x_{j_{\kappa}}\in B(x_{j_{\kappa},G},R_{\max})\cap D$, $\kappa=1,\ldots,N_i$, it holds
\begin{align*}
|k_{i,{\rm pr}_i(l_o)}(x_{i},\bf{x}_j)|= & |f_i(x_i,\bf{x}_j)-f_i(x_i,\bf{x}_j)|\le L_1|\bf{x}_j-\bf{x}_{j,G}| = L_1\left(\sum_{\kappa=1}^{N_i}(x_{j_{\kappa}}-x_{j_{\kappa},G})^2\right)^{\frac{1}{2}} \\
\le & L_1\sqrt{N_i} R_{\max} = L_1\sqrt{N_i}\delta t_o(M+v_{\max}) < 2M\delta t_oL_1\sqrt{N_i}\le v_{\max},
\end{align*}

\noindent and thus (P2) holds as well, since $k_{i,{\rm pr}_i(\bf{l}_o)}(\cdot)$ is independent of $t$, $x_{i0}$ and $w_i$. Then, it follows from Proposition \ref{completeness:result}(iia) that the solution $x(t)$ of the closed-loop system \eqref{single:integrator}-\eqref{general:feedback:law}, \eqref{feedback:ki:aux}-\eqref{feedback:ki:aux:others} with initial condition $x(0)\in D^N$ satisfying $x_i(0)=x_{i,G}$, $x_{j_1}(0)=x_{j_1,G},\ldots,$ $x_{j_{N_i}}(0)=x_{j_{N_i},G}$ (and the initial state of each other agent $\ell$ belonging to $S^{o}_{l_{o\ell}}$) is defined and remains in $D^{N}$ for all $t\in [0,\delta t_o]$. Hence, the $i$-th component of the solution $x(\cdot)$ satisfies
\begin{equation} \label{xi:in:D:lemma}
x_i(t)\in D,\forall t\in[0,\delta t_0],
\end{equation}

\noindent and by virtue of \eqref{single:integrator}-\eqref{general:feedback:law} and \eqref{feedback:ki:aux}, it holds
\begin{equation} \label{correspondance:ref:solution}
\dot{x}_i=F_{i,\bf{l}_i}(x_i), x_i(0)=x_{i,G}, t\in[0,\delta t_0].
\end{equation}

\noindent Hence, it follows from \eqref{correspondance:ref:solution} that $x_i(\cdot)$ coincides with the unique solution $\chi_i(\cdot)$ of \eqref{reference:solution} on $[0,\delta t_o]\cap[0,T_{\max})$, which in conjunction with \eqref{xi:in:D:lemma} implies that $\chi_i(t)$ remains in a compact subset of $D$ for $t\in[0,\delta t_o]\cap[0,T_{\max})$. From the latter, we deduce that $T_{\max}>\delta t_o$. Indeed, otherwise $\chi_i(t)$ would remain in a compact subset of $D$ for $t\in [0,T_{\max})$, contradicting maximality of $[0,T_{\max})$. Thus, we conclude that \eqref{Tmax:lower:bound} is satisfied.
\end{proof}

By exploiting Lemma \ref{lemma:Tmax}, it will be shown in the next section that $T_{\max}$ is always greater than the maximum possible selection of the time step $\delta t$ for a well posed discretization. The latter in conjunction with the result of Lemma \ref{lemma:Txi0wi} below enables us to prove that in this case the control law $k_{i,\bf{l}_i,3}(\cdot)$ and hence also $k_{i,\bf{l}_i}(\cdot)$ are well defined on $[0,\delta t]$.

\begin{lemma} \label{lemma:Txi0wi}
Consider a cell decomposition $\S$ of $D$, a time step $\delta t$ and select an agent $i\in\N$ and a cell configuration $\bf{l}_i=(l_i,l_{j_1},\ldots,l_{j_{N_i}})$ of $i$. Also, consider a tuple of reference points $(x_{i,G},\bf{x}_{j,G})$ as in \eqref{reference:point} and the control law  $k_{i,\bf{l}_i}(\cdot)$ in \eqref{feedback:ki}. We assume that  $k_{i,\bf{l}_i}(\cdot)$ satisfies Properties (P1) and (P2) of Definition \ref{control:class}, and that the right endpoint $T_{\max}$ of the interval where the reference trajectory \eqref{reference:solution} is defined,  satisfies $T_{\max}>\delta t$. Then, for all $x_{i0}\in S_{l_i}$ and $w_i\in W$, the time $T_i(x_{i0},w_i)$ satisfies $T_i(x_{i0},w_i)>\delta t$, which implies that $k_{i,\bf{l}_i}(\cdot)$ also satisfies Property (P3) of  Definition \ref{control:class}.
\end{lemma}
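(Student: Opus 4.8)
The plan is to analyze the argument of $F_{i,\bf{l}_i}$ appearing in \eqref{feedback:ki3:plan} and \eqref{time:T:plan}, namely the curve
\begin{equation*}
y_i(t):=\chi_i(t)+tw_i+\left(1-\tfrac{t}{\delta t}\right)(x_{i0}-x_{i,G}),
\end{equation*}
and to show that $y_i(t)\in D$ for all $t\in[0,\delta t]$ whenever $k_{i,\bf{l}_i}(\cdot)$ satisfies (P1) and (P2); by the definition \eqref{time:T:plan} of $T_i(x_{i0},w_i)$ this yields $T_i(x_{i0},w_i)>\delta t$, which is exactly (P3). The key realization is that $y_i(\cdot)$ is itself a solution of a controlled single-integrator subsystem: differentiating gives $\dot{y}_i=\dot{\chi}_i+w_i-\tfrac{1}{\delta t}(x_{i0}-x_{i,G})=F_{i,\bf{l}_i}(\chi_i(t))+w_i-\tfrac{1}{\delta t}(x_{i0}-x_{i,G})$, and using the definitions \eqref{feedback:ki1}--\eqref{feedback:ki3:plan} of the three feedback components together with $\dot{\chi}_i=F_{i,\bf{l}_i}(\chi_i)$, one checks that $\dot{y}_i=f_i(y_i,\bf{x}_{j,G})+k_{i,\bf{l}_i}(t,y_i,\bf{x}_{j,G};x_{i0},w_i)$ with $y_i(0)=x_{i0}\in S_{l_i}$. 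In other words, $y_i(\cdot)$ coincides with the $i$-th component of a closed-loop trajectory of \eqref{single:integrator}-\eqref{general:feedback:law} of the type considered in Proposition~\ref{completeness:result}, where agent $i$ uses $k_{i,\bf{l}_i}(\cdot)$, the neighbors of $i$ are frozen at the reference points $\bf{x}_{j,G}$ (achievable with the zero-type feedback for those agents, as in the proof of Lemma~\ref{lemma:Tmax}), and the remaining agents use trivial feedback.

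Concretely, the steps I would carry out are: (1) set up an auxiliary cell decomposition of $D$ for which $x_{i0}$ lies in the initial cell of agent $i$, each neighbor $j_\kappa$ of $i$ has a singleton cell $\{x_{j_\kappa,G}\}$ (so that the frozen-neighbor feedback $f_i(x_i,\bf{x}_{j,G})-f_i(x_i,\bf{x}_j)$ is consistent with Property~(P) for the appropriate cell configuration), and all other agents have arbitrary initial cells; (2) verify that the feedback law $k_{i,\bf{l}_i}(\cdot)$ (which is assumed to satisfy (P1), (P2)) together with the trivial feedbacks for the other agents satisfies Properties (P1) and (P2) of Definition~\ref{control:class} in this auxiliary discretization --- here (P2) for the other agents is immediate and (P2) for $i$ is the hypothesis, while (P1) is the hypothesis plus the Lipschitz property \eqref{tildefi:Lipschitz:const} of $F_{i,\bf{l}_i}$; (3) invoke Proposition~\ref{completeness:result}(i) on the time interval $[0,\tilde{T})$ with $\tilde{T}=\min\{\delta t,T_i(x_{i0},w_i)\}$ to conclude the closed-loop solution stays in $D^N$ up to $\tilde T$ and has a limit in $D^N$; (4) identify the $i$-th component of this solution with $y_i(\cdot)$ on $[0,\tilde T)$ by uniqueness of solutions (both satisfy the same ODE with the same initial condition), so $y_i(t)\in D$ for $t\in[0,\tilde T)$ and $\lim_{t\to\tilde T^-}y_i(t)\in D$; (5) conclude as in Lemma~\ref{lemma:Tmax}: if $T_i(x_{i0},w_i)\le\delta t$ then $\tilde T=T_i(x_{i0},w_i)$ and $y_i(\cdot)$ would remain in a compact subset of $D$ on $[0,T_i(x_{i0},w_i))$ with limit in $D$, contradicting the definition of $T_i(x_{i0},w_i)$ as the supremum over $[0,T_{\max})$ of times for which $y_i$ stays in $D$ (using also $T_{\max}>\delta t$ so that $T_i$ is not truncated by $T_{\max}$); hence $T_i(x_{i0},w_i)>\delta t$, which is (P3).

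The main obstacle I anticipate is the bookkeeping in step (2): one must exhibit a genuine cell configuration in an auxiliary decomposition so that $k_{i,\bf{l}_i}(\cdot)$ — which was designed for the original cell configuration $\bf{l}_i$ and reference points $(x_{i,G},\bf{x}_{j,G})$ — literally qualifies as a feedback satisfying Property~(P) in the sense of Definition~\ref{control:class} for that auxiliary configuration, so that Proposition~\ref{completeness:result} applies verbatim. This is essentially the same device used in the proof of Lemma~\ref{lemma:Tmax}, so it should go through; the one subtlety is ensuring the inflation sets $(S_{l_i}+B(R_{\max}))\cap D$ in the (P2) bound \eqref{feedback:k:bound} are compatible between the hypothesis and the auxiliary setup, which holds because we may take the auxiliary cell of $i$ to be $S_{l_i}$ itself. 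A secondary, purely technical point is checking that $y_i(0)=x_{i0}$ and that the three feedback terms indeed telescope so that the closed-loop dynamics of $y_i$ reduce to $\dot{y}_i=F_{i,\bf{l}_i}(y_i)+w_i-\tfrac1{\delta t}(x_{i0}-x_{i,G})$, matching $\dot{y}_i$ computed directly from its definition; this is a short computation from \eqref{feedback:ki1}--\eqref{feedback:ki3:plan}.
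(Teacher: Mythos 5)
Your proposal is correct and its core is the same as the paper's: identify the curve $y_i(t)=\chi_i(t)+tw_i+(1-\tfrac{t}{\delta t})(x_{i0}-x_{i,G})$ with the $i$-th component of a closed-loop solution in which agent $i$ uses $k_{i,\bf{l}_i}(\cdot)$ and the other agents use zero feedback, invoke Proposition~\ref{completeness:result}(i) to keep that solution in $D$ up to $\tilde T=\min\{\delta t, T(x_{i0},w_i)\}$ with limit in $D$, and then rule out $T(x_{i0},w_i)\le\delta t$ by the supremum definition \eqref{time:T:plan} together with $T_{\max}>\delta t$. Your telescoping computation is right, and your observation that the identification then follows from uniqueness of ODE solutions is, if anything, cleaner than the paper's Gronwall estimate leading to \eqref{xi:eq:tildexi:solution1:plan}.

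The one place where you deviate --- and the step you flag as the main obstacle --- is the auxiliary cell decomposition with singleton neighbor cells, and this is in fact unnecessary. That device is needed in Lemma~\ref{lemma:Tmax} because there one must \emph{establish} Property (P2) for the auxiliary feedback \eqref{feedback:ki:aux}, and shrinking the neighbors' cells to points makes the bound go through. Here, (P1) and (P2) for $k_{i,\bf{l}_i}(\cdot)$ with respect to the \emph{original} decomposition and configuration $\bf{l}_i$ are hypotheses of the lemma, so you may simply take any cell configuration $\bf{l}$ with ${\rm pr}_i(\bf{l})=\bf{l}_i$, arbitrary initial states of the other agents in their cells, and zero feedback for them, exactly as the paper does. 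Relatedly, your phrase ``neighbors frozen at the reference points, achievable with the zero-type feedback'' is not literally right: under zero free input the neighbors still drift under their own $f_{j_\kappa}$. What saves the argument --- and what makes the freezing irrelevant --- is that the term $k_{i,\bf{l}_i,1}$ in \eqref{feedback:ki1} cancels $f_i(x_i,\bf{x}_j)$ for \emph{arbitrary} neighbor trajectories, so the closed-loop dynamics of agent $i$ reduce to $\dot{x}_i=F_{i,\bf{l}_i}(x_i)+k_{i,\bf{l}_i,2}(x_{i0})+k_{i,\bf{l}_i,3}(t;x_{i0},w_i)$ no matter where the neighbors are; your ODE identity then holds as stated and the rest of your argument goes through verbatim with the original decomposition.
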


\begin{proof}
Indeed, let  $x_{i0}\in S_{l_i}$ and $w_i\in W$. By defining
\begin{equation} \label{tilde:xi:plan}
\hat{x}_i(t) :=\bar{x}_i(t)+\left(1-\frac{t}{\delta t}\right)(x_{i0}-x_{i,G}), t\in [0,T_{\max}),
\end{equation}

\noindent with $\bar{x}_i(t)=\chi_i(t)+tw_i$ as given in \eqref{bar:xi:plan}, and taking into account the definition of $T(x_{i0},w_i)$ in \eqref{time:T:plan}, we want to show that $\hat{x}_i(\cdot)$ remains in $D$ for more than time $\delta t$. By virtue of our assumption that $T_{\max}>\delta t$, the latter is meaningful to verify and implies that  $T(x_{i0},w_i)>\delta t$. We next show that $\hat{x}_i(\cdot)$ coincides on a suitable time interval with the $i$-th component of the solution of \eqref{single:integrator}-\eqref{general:feedback:law} by choosing appropriate initial conditions and feedback laws that satisfy (P1) and (P2).

Let $x_{i0}\in S_{l_i}$, $w_i\in W$, consider an arbitrary initial cell configuration $\bf{l}$ with ${\rm pr}_{i}(\bf{l})=\bf{l}_i$, $\bf{l}=(l_1,\ldots,l_N)$, and assign the feedback law $k_{i,{\rm pr}_{i}(\bf{l})}=k_{i,\bf{l}_i}$ (as the latter is given by \eqref{feedback:ki}) to $i$ and the feedback laws $k_{\ell,{\rm pr}_{\ell}(\bf{l})}:=0$ to the rest of the agents $\ell\in\N\setminus\{i\}$. It also follows from the assumptions of the lemma for $i$, and trivially for the other agents, that the feedback laws satisfy Properties (P1) and (P2). Thus, we can use the result of Proposition~\ref{completeness:result}(i). By selecting an initial condition $x(0)\in D^{N}$ with $x_i(0)=x_{i0}$ and $x_{j_{m}}(0)\in S_{l_{m}},m=1,\ldots,N_i$, and recalling that $w_i\in W$, we get from Proposition~\ref{completeness:result}(i) that the $i$-th component of the solution satisfies
\begin{align}
x_{i}(t) & \in D,\forall t\in [0,\tilde{T}), \tilde{T}:=\min\{\delta t,T(x_{i0},w_i)\} \label{invariance:solution:xi:plan} \\
\lim_{t\to\tilde{T}^-}x_{i}(t) & \in D. \label{limit:solution:xi:plan}
\end{align}

\noindent We proceed by showing that $x_i(t)=\hat{x}_i(t)$, for all $t\in[0,\tilde{T})$, with $\tilde{T}$ as given in \eqref{invariance:solution:xi:plan}, or equivalently, that
\begin{equation} \label{xi:eq:tildexi:solution1:plan}
x_{i}(t)=\chi_i(t)+tw_i+\left(1-\frac{t}{\delta t}\right)(x_{i0}-x_{i,G}),\forall t\in [0,\tilde{T})
\end{equation}

\noindent Indeed, from \eqref{bar:xi:plan}, \eqref{reference:solution}, \eqref{single:integrator}-\eqref{general:feedback:law}, \eqref{feedback:ki}, \eqref{feedback:ki1} and \eqref{averaged:dynamics} we have that
\begin{align*}
\dot{\bar{x}}_{i}(t) & =F_{i,\bf{l}_i}(\chi_{i}(t))+w_i, \\
\dot{x}_{i}(t) & =F_{i,\bf{l}_i}(x_{i}(t))+k_{i,\bf{l}_i,2}(x_{i0})+k_{i,\bf{l}_i,3}(t;x_{i0},w_i).
\end{align*}

\noindent  By recalling that $\bar{x}_{i}(0)=x_{i,G}$, $x_{i}(0)=x_{i0}$ and that due to \eqref{time:T:plan} and \eqref{invariance:solution:xi:plan} it holds $\tilde{T}\le T(x_{i0},w_i)\le T_{\max}$, and thus $\chi_i(\cdot)$, $x_i(\cdot)$ and $k_{i,\bf{l}_i,3}(\cdot)$ are well defined on $[0,\tilde{T})$, it follows from \eqref{feedback:ki2}, \eqref{feedback:ki3:plan} and \eqref{bar:xi:plan} that 
\begin{align*}
x_{i}(t) -\bar{x}_{i}(t) & =x_{i0}-x_{i,G}+\int_{0}^{t}[F_{i,\bf{l}_i}(x_{i}(s))-F_{i,\bf{l}_i}(\chi_{i}(s)) \\
& +k_{i,\bf{l}_i,2}(x_{i0})+k_{i,\bf{l}_i,3}(s;x_{i0},w_i)-w_i]ds  \\
& =\left(1-\frac{t}{\delta t}\right)(x_{i0}-x_{i,G})+\int_{0}^{t}[F_{i,\bf{l}_i}(x_i(s)) \\
& -\left.F_{i,\bf{l}_i}\left(\bar{x}_{i}(s)+\left(1-\frac{s}{\delta t}\right)(x_{i0}-x_{i,G})\right)\right]ds,\forall t\in [0,\tilde{T}). 
\end{align*}


\noindent Hence, we get from \eqref{tildefi:Lipschitz:const} that for all $t\in [0,\tilde{T})$ it holds $|x_{i}(t)-\bar{x}_i(t)-\left(1-\frac{t}{\delta t}\right)(x_{i0}-x_{i,G})|\le \int_{0}^{t}L_2\left|x_{i}(s)-\bar{x}_{i}(s)-\left(1-\frac{s}{\delta t}\right)(x_{i0}-x_{i,G})\right|ds$. Application of the Gronwall Lemma, \eqref{bar:xi:plan}, and the fact that $\tilde{T}\le T_{\max}$, imply that \eqref{xi:eq:tildexi:solution1:plan} holds.

We are now in position to prove that $T(x_{i0},w_i)>\delta t$. Indeed, suppose on the contrary that $T(x_{i0},w_i)\le\delta t$, which by virtue of the assumption that $T_{\max}>\delta t$, and \eqref{invariance:solution:xi:plan}, implies that $T(x_{i0},w_i)<T_{\max}$ and $\tilde{T}= T(x_{i0},w_i)$. From the latter, together with \eqref{tilde:xi:plan}, \eqref{invariance:solution:xi:plan}, \eqref{limit:solution:xi:plan} and continuity of $\hat{x}_i(\cdot)$, we get that $\hat{x}_i(T(x_{i0},w_i))=\lim_{t\to T(x_{i0},w_i)^-}\hat{x}_i(t)=\lim_{t\to T(x_{i0},w_i)^-}x_i(t)\in D$. Hence, from \eqref{tilde:xi:plan}, the deduction that $T(x_{i0},w_i)<T_{\max}$ and continuity of $\hat{x}_i(\cdot)$, it follows that there exists $\varepsilon>0$ such that $\hat{x}_i(t)\in D$ for $t\in [T(x_{i0},w_i),T(x_{i0},w_i)+\varepsilon)$, which contradicts \eqref{time:T:plan}. Thus we conclude that $T(x_{i0},w_i)>\delta t$, which establishes validity of (P3).
\end{proof}

\section{Well Posed Space-Time Discretizations with Motion Planning Capabilities}

In this section, we exploit the controllers introduced in~\eqref{feedback:ki} to provide sufficient conditions for well posed space-time discretizations. By exploiting the result of Proposition \ref{discrete:transitions:result} this  framework can be applied for motion planning, by specifying different possibilities for transitions for each agent through modifying its controller. Consider again the system \eqref{single:integrator}-\eqref{general:feedback:law}, a cell decomposition $\mathcal{S}=\{S_{l}\}_{l\in\mathcal{I}}$ of $D$ and a time step $\delta t$. In addition, consider the least upper bound on the diameter of the cells in $\mathcal{S}$, namely,

\begin{equation}\label{dmax:dfn}
d_{\max}:=\sup\{{\rm diam}(S_{l}),l\in\mathcal{I}\},
\end{equation}

\noindent which due to Definition \ref{cell:decomposition} is well defined. We will call $d_{\max}$ the \textbf{diameter} of the cell decomposition. \textit{Our goal is to determine sufficient conditions relating the Lipschitz constants $L_{1}$, $L_{2}$, the bounds $M$, $v_{\max}$ for the system's dynamics, as well as the space and time scales $d_{\max}$ and $\delta t$ of the space-time discretization $\mathcal{S}-\delta t$, which guarantee that $\mathcal{S}-\delta t$ is well posed}. According to Definition \ref{well:posed:discretization}, establishment of a well posed discretization is based on the selection of appropriate feedback laws which guarantee well posed transitions for all agents and their possible cell configurations. For each agent $i\in\N$ and cell configuration $\bf{l}_i=(l_i,l_{j_1},\ldots,l_{j_{N_i}})$ of $i$ let $(x_{i,G},\bf{x}_{j,G})$ be a reference point as in \eqref{reference:point}. We consider the family of feedback laws given in~\eqref{feedback:ki1}, \eqref{feedback:ki2}, \eqref{feedback:ki3:plan}, and parameterized by $x_{i0}\in S_{l_i}$ and $w_i\in W$. The function $F_{i,\bf{l}_i}(\cdot)$ is given in~\eqref{averaged:dynamics}, and $\chi_i(\cdot)$ is the reference solution of the initial value problem~\eqref{reference:solution}, defined on $[0,T_{\max})$. Recall that the parameter $\lambda$ in \eqref{set:W} provides the portion of the free input that is exploited for planning. Thus, it can be regarded as a measure for the degree of control freedom that is chosen for the abstraction. In the following results, we also introduce an additional parameter $\mu$ which provides a lower bound on the minimum number ($\ge 1$) of discrete transitions that are possible from each initial cell, as will be clarified in the corollary at the end of the section. Before proceeding to the desired sufficient conditions for well posed discretizations and their reachability properties, we prove the auxiliary Propositions \ref{proposition:bounds} and \ref{proposition:property:P}. Proposition \ref{proposition:bounds} below provides bounds on the hybrid control laws  $k_{i,\bf{l}_i}(\cdot)$ in \eqref{feedback:ki}.

\begin{prop} \label{proposition:bounds}
Consider a cell decomposition $\S$ of $D$ with diameter $d_{\max}$ and a time step $\delta t$. Also, for  each agent $i\in\N$ and cell configuration $\bf{l}_i=(l_i,l_{j_1},\ldots,l_{j_{N_i}})$ of $i$ let  $(x_{i,G},\bf{x}_{j,G})$ be a reference point as in \eqref{reference:point} and consider the feedback law $k_{i,\bf{l}_i}(\cdot)$ in \eqref{feedback:ki}. Then its components $k_{i,\bf{l}_i,1}(\cdot)$, $k_{i,\bf{l}_i,2}(\cdot)$ and $k_{i,\bf{l}_i,3}(\cdot)$ as given in \eqref{feedback:ki1},  \eqref{feedback:ki2} and \eqref{feedback:ki3:plan}, respectively, satisfy the bounds
\begin{align}
& |k_{i,\bf{l}_i,1}(x_{i},\bf{x}_j)| \le  L_{1}\sqrt{N_i}(R_{\max}+d_{\max}),\forall x_i \in D, \nonumber \\
& x_{j_{m}}\in (S_{l_{m}}+B(R_{\max}))\cap D, m=1,\ldots,N_i, \label{ki1:bound} \\
& |k_{i,\bf{l}_i,2}(x_{i0})|\le \frac{1}{\delta t}d_{\max},\forall x_{i0}\in S_{l_i}, \label{ki2:bound} \\
&  |k_{i,\bf{l}_i,3}(t;x_{i0},w_i)| \le L_{2}(\delta t\lambda v_{\max}+d_{\max})+\lambda v_{\max}, \nonumber \\
& \forall t\in[0,\delta t]\cap[0,T(x_{i0},w_i)),x_{i0}\in S_{l_i},w_i\in W. \label{ki3:bound:plan}
\end{align}

\noindent with $R_{\max}$ as given in \eqref{Rmax}.
\end{prop}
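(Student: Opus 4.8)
The plan is to bound each of the three components $k_{i,\bf{l}_i,1}(\cdot)$, $k_{i,\bf{l}_i,2}(\cdot)$, $k_{i,\bf{l}_i,3}(\cdot)$ separately using their explicit formulas \eqref{feedback:ki1}, \eqref{feedback:ki2}, \eqref{feedback:ki3:plan}, the Lipschitz bounds \eqref{dynamics:bound1}, \eqref{dynamics:bound2} (equivalently \eqref{tildefi:Lipschitz:const} for $F_{i,\bf{l}_i}$), and the geometric fact that any two points lying in (an $R_{\max}$-inflation of) cells belonging to a common cell configuration are at controlled distance. Since the three claimed inequalities are independent, I would simply prove them one after another.

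First I would handle \eqref{ki1:bound}. By \eqref{feedback:ki1}, $k_{i,\bf{l}_i,1}(x_i,\bf{x}_j)=f_i(x_i,\bf{x}_{j,G})-f_i(x_i,\bf{x}_j)$, so by \eqref{dynamics:bound1} this is bounded by $L_1|\bf{x}_j-\bf{x}_{j,G}|=L_1\bigl(\sum_{m=1}^{N_i}|x_{j_m}-x_{j_m,G}|^2\bigr)^{1/2}$. For each $m$, $x_{j_m,G}\in S_{l_{j_m}}$ while $x_{j_m}\in(S_{l_{j_m}}+B(R_{\max}))\cap D$, so $|x_{j_m}-x_{j_m,G}|\le {\rm diam}(S_{l_{j_m}})+R_{\max}\le d_{\max}+R_{\max}$ by the triangle inequality and \eqref{dmax:dfn}. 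Summing the $N_i$ squared terms and taking the square root gives the factor $\sqrt{N_i}(R_{\max}+d_{\max})$, which is \eqref{ki1:bound}.

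Next, \eqref{ki2:bound} is immediate from \eqref{feedback:ki2}: $k_{i,\bf{l}_i,2}(x_{i0})=\frac{1}{\delta t}(x_{i,G}-x_{i0})$ with both $x_{i,G}$ and $x_{i0}$ in $S_{l_i}$, so $|k_{i,\bf{l}_i,2}(x_{i0})|\le\frac{1}{\delta t}{\rm diam}(S_{l_i})\le\frac{1}{\delta t}d_{\max}$. For \eqref{ki3:bound:plan} I would use \eqref{feedback:ki3:plan}, rewriting it as $w_i+\bigl(F_{i,\bf{l}_i}(\chi_i(t))-F_{i,\bf{l}_i}(\chi_i(t)+tw_i+(1-\tfrac{t}{\delta t})(x_{i0}-x_{i,G}))\bigr)$; the $w_i$ term contributes $|w_i|\le\lambda v_{\max}$ since $w_i\in W=B(\lambda v_{\max})$, and the difference term is bounded via \eqref{tildefi:Lipschitz:const} by $L_2\bigl|tw_i+(1-\tfrac{t}{\delta t})(x_{i0}-x_{i,G})\bigr|\le L_2\bigl(t|w_i|+|x_{i0}-x_{i,G}|\bigr)\le L_2(\delta t\,\lambda v_{\max}+d_{\max})$, using $t\le\delta t$, $|w_i|\le\lambda v_{\max}$, and $|x_{i0}-x_{i,G}|\le d_{\max}$. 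Adding the two contributions yields \eqref{ki3:bound:plan}. I should note that the argument of $F_{i,\bf{l}_i}$ appearing in \eqref{feedback:ki3:plan} is exactly $\hat x_i(t)$ from \eqref{tilde:xi:plan}, which by the definition of $T(x_{i0},w_i)$ in \eqref{time:T:plan} lies in $D$ for $t\in[0,T(x_{i0},w_i))$, so $F_{i,\bf{l}_i}$ is being evaluated at points of its domain and \eqref{tildefi:Lipschitz:const} legitimately applies.

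There is no real obstacle here; the only mild care needed is the triangle-inequality step bounding $|x_{j_m}-x_{j_m,G}|$ by $d_{\max}+R_{\max}$ (rather than, say, just $d_{\max}$), since $x_{j_m}$ is allowed to range over the $R_{\max}$-inflation of its cell, and similarly making sure the $\ell^2$ sum over the $N_i$ neighbors produces the $\sqrt{N_i}$ factor rather than $N_i$. Everything else is a direct substitution into the defining formulas combined with \eqref{dynamics:bound1}, \eqref{tildefi:Lipschitz:const}, and the bound $|w_i|\le\lambda v_{\max}$.
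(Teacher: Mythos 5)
Your proof is correct and follows essentially the same route as the paper: the same triangle-inequality decomposition through an intermediate point of $S_{l_{j_m}}$ giving the $\sqrt{N_i}(R_{\max}+d_{\max})$ factor for \eqref{ki1:bound}, the same direct diameter bound for \eqref{ki2:bound}, and the same Lipschitz estimate via \eqref{tildefi:Lipschitz:const} plus $|w_i|\le\lambda v_{\max}$ for \eqref{ki3:bound:plan}. Your added remark that the argument of $F_{i,\bf{l}_i}$ stays in $D$ for $t<T(x_{i0},w_i)$, so that the Lipschitz bound legitimately applies, is a point the paper leaves implicit.
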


\begin{proof}
\noindent Indeed, in order to show \eqref{ki1:bound}  let $\bf{x}_j\in D^{N_i}$ satisfying $x_{j_{m}}\in (S_{l_{m}}+B(R_{\max})) \cap D, m=1,\ldots,N_i$. Then, for each $m=1,\ldots,N_i$ there exists $\hat{x}_{j_{m}}$ with $\hat{x}_{j_{m}}\in S_{l_{m}}$ and $|\hat{x}_{j_{m}}-x_{j_{m}}|\le R_{\max}$. Hence, from the latter together with \eqref{feedback:ki1} and  \eqref{dynamics:bound1}, we get 
\begin{align*}
|k_{i,\bf{l}_i,1}(x_{i},\bf{x}_j)| & \le L_{1}|(x_{j_{1}}-x_{j_{1},G},\ldots,x_{j_{N_i}}-x_{j_{N_i},G})| \\
& \le L_{1}\left(\sum_{m=1}^{N_i}(|x_{j_{m}}-\hat{x}_{j_{m}}|+|\hat{x}_{j_{m}}-x_{j_{m},G}|)^{2}\right)^{\frac{1}{2}} \\
& \le L_{1}\left(\sum_{m=1}^{N_i}(R_{\max}+d_{\max})^{2}\right)^{\frac{1}{2}}= L_{1}\sqrt{N_i}(R_{\max}+d_{\max}),
\end{align*}

\noindent which establishes \eqref{ki1:bound}. Furthermore, by recalling that $x_{i,G}\in S_{l_i}$, it follows directly from \eqref{feedback:ki2} that $|k_{i,\bf{l}_i,2}(x_{i0})|=\frac{1}{\delta t}|x_{i0}-x_{i,G}|$ and hence, that \eqref{ki2:bound} is satisfied. Finally, for $k_{i,\bf{l}_i,3}(\cdot)$ we get from \eqref{feedback:ki3:plan}  and \eqref{tildefi:Lipschitz:const} that 
$$
|k_{i,\bf{l}_i,3}(t;x_{i0},w_i)|\le L_2\left|\left(\chi_i(t)+tw_i+\left(1-\tfrac{t}{\delta t}\right)(x_{i0}-x_{i,G})\right)-\chi_i(t)\right|+|w_i|,
$$

\noindent which due to \eqref{set:W} implies validity of \eqref{ki3:bound:plan}.
\end{proof}

Based on the result of Proposition \ref{proposition:bounds} we next provide  conditions on $d_{\max}$ and $\delta t$ which guarantee that the feedback laws $k_{i,\bf{l}_i}(\cdot)$ satisfy Property (P). Additionally it is shown that the radius $r$ introduced in \eqref{distance:r} satisfies a design requirement which is related later in Corollary~\ref{corollary:num:transitions} to a lower bound on the number of possible transitions through the parameter $\mu$.

\begin{prop}\label{proposition:property:P}
Consider a cell decomposition $\S$ of $D$ with diameter $d_{\max}$, a time step $\delta t$, the parameters $\lambda\in(0,1)$, $\mu>0$ and define
\begin{equation} \label{constant:tildeL:prime}
L:=\max\{3L_{2}+4L_{1}\sqrt{N_i},i\in\N\},
\end{equation}

\noindent with $L_1$ and $L_2$ as given in \eqref{dynamics:bound1} and \eqref{dynamics:bound2}. We assume that $\lambda$, $\mu$, $d_{\max}$ and $\delta t$ satisfy the following restrictions, as provided by the three cases below:

\noindent \textbf{Case I.} $0\le\mu\le\frac{2\lambda}{1-\lambda}$.
\begin{align}
d_{\max}\in & \left(0,\frac{(1-\lambda)^2 v_{\max}^{2}}{4ML}\right], \label{dmax:interval:caseA} \\
\delta t \in & \left[\frac{(1-\lambda)v_{\max}-\sqrt{(1-\lambda)^2 v_{\max}^{2}-4MLd_{\max}}}{2ML},\frac{(1-\lambda)v_{\max}+\sqrt{(1-\lambda)^2 v_{\max}^{2}-4MLd_{\max}}}{2ML}\right]. \label{deltat:interval:caseA}
\end{align}

\noindent \textbf{Case II.} $\frac{2\lambda}{1-\lambda}<\mu<\frac{4\lambda}{1-\lambda}$.
\begin{align}
& d_{\max} \in \left(0,\frac{2(\lambda(1-\lambda)\mu-2\lambda^2)v_{\max}^{2}}{\mu^2ML}\right], \label{dmax:interval:caseBi} \\
& \delta t \in \left[\frac{\mu}{2\lambda v_{\max}}d_{\max},\frac{(1-\lambda)v_{\max}+\sqrt{(1-\lambda)^2 v_{\max}^{2}-4MLd_{\max}}}{2ML}\right], \label{deltat:interval:caseBi}
\end{align}

\noindent or
\begin{equation}   \label{dmax:interval:caseBii}
d_{\max}\in\left (\frac{2(\lambda(1-\lambda)\mu-2\lambda^2)v_{\max}^{2}}{\mu^2ML},\frac{(1-\lambda)^2 v_{\max}^{2}}{4ML}\right]
\end{equation}

\noindent and $\delta t$ satisfies \eqref{deltat:interval:caseA}.

\noindent \textbf{Case III.} $\mu\ge\frac{4\lambda}{1-\lambda}$. $d_{\max}$ and $\delta t$ satisfy \eqref{dmax:interval:caseBi} and \eqref{deltat:interval:caseBi}, respectively.

\noindent Then, the intervals in Cases I, II, III are well defined, and for each agent $i\in\N$, cell configuration $\bf{l}_i=(l_i,l_{j_1},\ldots,l_{j_{N_i}})$ of $i$ and reference point $(x_{i,G},\bf{x}_{j,G})$ as in \eqref{reference:point} the solution $\chi_i(t)$ of \eqref{reference:solution} is defined and remains in $D$ for all $t\in[0,\delta t]$. In addition the  feedback law $k_{i,\bf{l}_i}(\cdot)$ in \eqref{feedback:ki} satisfies property (P) and the distance $r$ as defined in \eqref{distance:r} satisfies the design requirement
\begin{equation} \label{r:design:req}
r\ge\frac{\mu}{2}d_{\max}
\end{equation}
\end{prop}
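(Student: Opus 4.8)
The plan is to verify the three required conclusions in order: first that the stated intervals for $d_{\max}$ and $\delta t$ are nonempty, second that $\chi_i(\cdot)$ survives on $[0,\delta t]$ and $k_{i,\bf{l}_i}(\cdot)$ satisfies Property (P), and third that $r=\lambda v_{\max}\delta t$ obeys \eqref{r:design:req}. The unifying observation behind all three is that the endpoints of the $\delta t$-interval in \eqref{deltat:interval:caseA} are exactly the roots of the quadratic $ML\,(\delta t)^2-(1-\lambda)v_{\max}\,\delta t+d_{\max}=0$ in $\delta t$; hence for any admissible pair $(d_{\max},\delta t)$ one has
\begin{equation}\label{key:quadratic}
ML(\delta t)^2+d_{\max}\le(1-\lambda)v_{\max}\delta t.
\end{equation}
First I would record that the discriminant $(1-\lambda)^2v_{\max}^2-4MLd_{\max}$ is nonnegative precisely when $d_{\max}\le\frac{(1-\lambda)^2v_{\max}^2}{4ML}$, which is the upper bound in \eqref{dmax:interval:caseA}, \eqref{dmax:interval:caseBii}; this makes the intervals in \eqref{deltat:interval:caseA} well defined in Case I and the second sub-case of Case II. For the sub-cases where the lower endpoint is $\frac{\mu}{2\lambda v_{\max}}d_{\max}$ (i.e. \eqref{deltat:interval:caseBi}), one must check this lower endpoint does not exceed the upper endpoint $\frac{(1-\lambda)v_{\max}+\sqrt{\cdots}}{2ML}$; a direct computation shows this inequality is equivalent to the stated upper bound $d_{\max}\le\frac{2(\lambda(1-\lambda)\mu-2\lambda^2)v_{\max}^2}{\mu^2ML}$ on $d_{\max}$, and the case split on $\mu$ (relative to $\frac{2\lambda}{1-\lambda}$ and $\frac{4\lambda}{1-\lambda}$) is exactly what guarantees this bound is positive and compatible with $\frac{(1-\lambda)^2v_{\max}^2}{4ML}$. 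This bookkeeping is tedious but routine.

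Next, for the survival of $\chi_i$ and Property (P): by Lemma \ref{lemma:Tmax} the reference trajectory exists on $[0,T_{\max})$ with $T_{\max}>\frac{v_{\max}}{2ML_1\max\{\sqrt{N_i}\}}$, so I would first show the admissible $\delta t$ always satisfies $\delta t<T_{\max}$; this follows because \eqref{key:quadratic} forces $\delta t\le\frac{(1-\lambda)v_{\max}}{ML}<\frac{v_{\max}}{2ML_1\max\{\sqrt{N_i}\}}$ using $L\ge 4L_1\sqrt{N_i}$ from \eqref{constant:tildeL:prime} (and $(1-\lambda)<1$, so $\frac{(1-\lambda)}{L}<\frac{1}{4L_1\sqrt{N_i}}<\frac{1}{2L_1\sqrt{N_i}}$). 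Hence $T_{\max}>\delta t$, giving $\chi_i$ on all of $[0,\delta t]$. Then (P1) holds because each $k_{i,\bf{l}_i,m}$ is built from the locally Lipschitz $f_i$ and $F_{i,\bf{l}_i}$ (cf. the argument in Lemma \ref{lemma:Tmax}); (P3) then follows from Lemma \ref{lemma:Txi0wi} once (P1), (P2) are in hand. So the crux is (P2): using Proposition \ref{proposition:bounds}, for $t\in[0,\delta t]$ and states in the inflated cells,
\begin{equation}\label{P2:sum}
|k_{i,\bf{l}_i}(t,x_i,\bf{x}_j;x_{i0},w_i)|\le L_1\sqrt{N_i}(R_{\max}+d_{\max})+\tfrac{1}{\delta t}d_{\max}+L_2(\delta t\lambda v_{\max}+d_{\max})+\lambda v_{\max}.
\end{equation}
Substituting $R_{\max}=\delta t(M+v_{\max})<2M\delta t$ (by \eqref{vmax:vs:M}) and grouping terms, the right-hand side is bounded by $2ML_1\sqrt{N_i}\delta t+L_1\sqrt{N_i}d_{\max}+\tfrac{1}{\delta t}d_{\max}+L_2\lambda v_{\max}\delta t+L_2 d_{\max}+\lambda v_{\max}$. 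I would bound $\lambda v_{\max}\le v_{\max}$ in the $L_2\delta t$ term where convenient, and then collect the genuinely $\delta t$-proportional and $d_{\max}/\delta t$ pieces: the coefficient of $\delta t$ is at most $2ML_1\sqrt{N_i}+L_2\le ML$ (after also absorbing the $d_{\max}$-linear terms $L_1\sqrt{N_i}+L_2$ into the estimate — here one uses $d_{\max}\le$ the stated bound to convert $d_{\max}$-linear terms into contributions comparable to $ML\delta t^2/\delta t$). The target is to show the whole expression is $\le(1-\lambda)v_{\max}+\lambda v_{\max}=v_{\max}$, i.e. that the non-$\lambda v_{\max}$ part is $\le(1-\lambda)v_{\max}$; multiplying by $\delta t$, this is precisely the inequality $ML(\delta t)^2+d_{\max}\le(1-\lambda)v_{\max}\delta t$, namely \eqref{key:quadratic}. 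This reduction of the scalar bound to the defining quadratic is the main obstacle — getting every constant to land so that the aggregate coefficient is exactly $L$ and the leftover is exactly $d_{\max}$ requires care, and is the reason for the particular form of $L$ in \eqref{constant:tildeL:prime}.

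Finally, for \eqref{r:design:req}: in the cases where $\delta t\ge\frac{\mu}{2\lambda v_{\max}}d_{\max}$ holds by hypothesis (Case III, and the first sub-case of Case II), one multiplies through by $\lambda v_{\max}$ to get $r=\lambda v_{\max}\delta t\ge\frac{\mu}{2}d_{\max}$ immediately. In the remaining cases — Case I, and the second sub-case of Case II — $\delta t$ ranges over the full interval \eqref{deltat:interval:caseA}, so the worst case is the left endpoint $\delta t_{-}=\frac{(1-\lambda)v_{\max}-\sqrt{(1-\lambda)^2v_{\max}^2-4MLd_{\max}}}{2ML}$; I would show $\lambda v_{\max}\delta t_{-}\ge\frac{\mu}{2}d_{\max}$. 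Using $\delta t_{-}=\frac{2d_{\max}}{(1-\lambda)v_{\max}+\sqrt{(1-\lambda)^2v_{\max}^2-4MLd_{\max}}}\ge\frac{2d_{\max}}{2(1-\lambda)v_{\max}}=\frac{d_{\max}}{(1-\lambda)v_{\max}}$ (rationalizing the quadratic root), we get $\lambda v_{\max}\delta t_{-}\ge\frac{\lambda}{1-\lambda}d_{\max}$, and since in these cases $\mu\le\frac{2\lambda}{1-\lambda}$ (Case I, by hypothesis) or $d_{\max}>\frac{2(\lambda(1-\lambda)\mu-2\lambda^2)v_{\max}^2}{\mu^2ML}$ forces a strengthened lower bound on $\delta t_{-}$ that yields the same conclusion, we obtain $\lambda v_{\max}\delta t_{-}\ge\frac{\mu}{2}d_{\max}$. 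In every branch the inequality \eqref{r:design:req} follows, completing the proof.
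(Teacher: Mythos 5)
Your overall architecture coincides with the paper's: reduce (P2) to the quadratic inequality $ML\,\delta t^{2}-(1-\lambda)v_{\max}\,\delta t+d_{\max}\le 0$ (whose roots are the endpoints of \eqref{deltat:interval:caseA}), obtain $T_{\max}>\delta t$ from Lemma \ref{lemma:Tmax} and the upper bound $\delta t<\tfrac{(1-\lambda)v_{\max}}{ML}$, deduce (P3) from Lemma \ref{lemma:Txi0wi}, and handle \eqref{r:design:req} by comparing the line $\delta t=\tfrac{\mu}{2\lambda v_{\max}}d_{\max}$ with the roots. However, the step on which everything else rests is asserted rather than proved where it is not automatic. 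Your ``unifying observation'' that the quadratic inequality holds for every admissible pair because the endpoints of \eqref{deltat:interval:caseA} are its roots is immediate only in Case I and the second sub-case of Case II. In the sub-cases governed by \eqref{deltat:interval:caseBi} the lower endpoint of the $\delta t$-interval is $\tfrac{\mu}{2\lambda v_{\max}}d_{\max}$, and one must additionally show that this quantity is at least the \emph{smaller} root $\tfrac{(1-\lambda)v_{\max}-\sqrt{(1-\lambda)^2v_{\max}^2-4MLd_{\max}}}{2ML}$; this is exactly where the upper bound on $d_{\max}$ in \eqref{dmax:interval:caseBi} and the thresholds $\tfrac{2\lambda}{1-\lambda}$, $\tfrac{4\lambda}{1-\lambda}$ enter (the paper's \eqref{h:prime:vs:constraint} and \eqref{h:prime:vs:constraint:caseC}). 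Your well-definedness discussion compares the line only with the \emph{larger} root, which gives nonemptiness of the interval but does not give the quadratic inequality throughout it, so (P2) is not yet established in Cases II (first sub-case) and III.

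Two further verifications are invoked but not derived from the hypotheses. In the (P2) estimate, absorbing the $d_{\max}$-linear terms into the coefficient of $M\delta t^{2}$ requires $d_{\max}\le R_{\max}=(M+v_{\max})\delta t$ (the paper's \eqref{Rmax:vs:dmax}); this is not a stated hypothesis and must be deduced case by case (e.g.\ from the rationalized form of the smaller root, which gives $\delta t\ge\tfrac{d_{\max}}{(1-\lambda)v_{\max}}\ge\tfrac{d_{\max}}{M+v_{\max}}$, and from the slope comparison for the line when $\mu>\tfrac{2\lambda}{1-\lambda}$). Similarly, your treatment of \eqref{r:design:req} in the second sub-case of Case II rests on the unproved claim that the smaller root dominates $\tfrac{\mu}{2\lambda v_{\max}}d_{\max}$ once $d_{\max}$ exceeds the threshold in \eqref{dmax:interval:caseBii}; this follows from convexity of the smaller root as a function of $d_{\max}$ together with the identity that the line and the smaller root meet exactly at that threshold, but your text only gestures at it. None of these steps fails --- all the claimed inequalities are true and are established in the paper's appendix --- but they constitute the substance of the three-case analysis, so as written the proposal is an outline of the paper's argument with its decisive verifications left open.
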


\begin{proof}
The proof of the fact that the intervals in Cases I, II, III are well defined is provided in the Appendix. Also, it follows from Lemma \ref{lemma:Tmax} that for any reference point $(x_{i,G},\bf{x}_{j,G})$ as in \eqref{reference:point} the solution $\chi_i(t)$ of \eqref{reference:solution} is defined and remains in $D$ for all $t\in[0,T_{\max})$, and by virtue of \eqref{Tmax:lower:bound} and the assumed bounds on $\delta t$ in Cases I, II, III, that
\begin{equation} \label{Tmax:gt:deltat:plan}
T_{\max}>\delta t,
\end{equation}

\noindent establishing thus that $\chi_i(t)\in D$ for all $t\in[0,\delta t]$. We break the subsequent proof in the following steps.

\noindent \textbf{STEP 1: Verification of Properties (P1) and (P2) for the feedback law~\eqref{feedback:ki} for $d_{\max}-\delta t$ as given by Cases I, II, III, in conjunction with the design requirement \eqref{r:design:req}.} In this step we prove that the proposed feedback law~\eqref{feedback:ki} satisfies Properties (P1) and (P2). Verification of (P1) is straightforward. Thus, we proceed to show that \eqref{feedback:k:bound} holds, which implies (P2), and simultaneously, that \eqref{r:design:req} is fulfilled. By taking into account \eqref{feedback:ki} and the result of Proposition \ref{proposition:bounds}, namely, \eqref{ki1:bound}, \eqref{ki2:bound} and \eqref{ki3:bound:plan}, we need to prove that
\begin{align}
L_{1}\sqrt{N_i}(R_{\max}+d_{\max}) & +\frac{1}{\delta t}d_{\max} \nonumber \\
+L_{2}(\delta t\lambda v_{\max} & +d_{\max})+\lambda v_{\max}\le v_{\max}. \label{condition:dmax:vmax:plan}
\end{align}

\noindent By recalling \eqref{Rmax}, \eqref{vmax:vs:M} and the fact that $\lambda\in (0,1)$ we get that $\delta t\lambda v_{\max}\le \frac{R_{\max}}{2}$. Also, from the fact that $d_{\max}$ and $\delta t$ are selected according to the Cases I, II, III, it follows from elementary calculations which are provided in the Appendix that 
\begin{equation} \label{Rmax:vs:dmax}
d_{\max}\le R_{\max}.
\end{equation}

\noindent Hence, it suffices instead of \eqref{condition:dmax:vmax:plan} to show that $(2L_{1}\sqrt{N_i}+\frac{3}{2}L_{2})R_{\max}+\frac{1}{\delta t}d_{\max}\le (1-\lambda) v_{\max}$, which by virtue of \eqref{Rmax} is equivalent to
\begin{equation}
(M+v_{\max})(2L_{1}\sqrt{N_i}+\frac{3}{2}L_{2})\delta t^{2}-(1-\lambda)v_{\max}\delta t+d_{\max}\le 0. \label{quadratic:condition:plan}
\end{equation}

\noindent By taking into account \eqref{vmax:vs:M}, it suffices instead of \eqref{quadratic:condition:plan} to show that $M(3L_{2}+4L_{1}\sqrt{N_i})\delta t^{2}-(1-\lambda)v_{\max}\delta t+d_{\max}\le 0$ which by virtue of \eqref{constant:tildeL:prime} follows from
\begin{equation} \label{quadratic:condition2:plan}
ML\delta t^{2}-(1-\lambda)v_{\max}\delta t+d_{\max}\le 0.
\end{equation}

\noindent In order for the above equation to have real roots, it is required that
\begin{equation} \label{dmax:bound:plan}
(1-\lambda)^2 v_{\max}^{2}-4MLd_{\max}\ge 0\iff d_{\max}\le\frac{(1-\lambda)^2 v_{\max}^{2}}{4ML}.
\end{equation}

\noindent Hence, by collecting our requirements \eqref{dmax:bound:plan}, \eqref{quadratic:condition2:plan}, \eqref{Rmax:vs:dmax} and \eqref{r:design:req}  together with the fact that $d_{\max}>0$ we have
\begin{align}
& 0<d_{\max}\le\frac{(1-\lambda)^{2}v_{\max}^{2}}{4ML}, \label{dmax:bound2:plan} \\
& \frac{(1-\lambda)v_{\max}-\sqrt{(1-\lambda)^2v_{\max}^{2}-4MLd_{\max}}}{2ML}\le\delta t\le\frac{(1-\lambda)v_{\max}+\sqrt{(1-\lambda)^2v_{\max}^{2}-4MLd_{\max}}}{2ML}, \label{deltat:bound2:plan} \\
&\frac{1}{M+v_{\max}}d_{\max}\le\delta t, \label{deltat:vs:dmax:plan} \\
&\frac{\mu}{2\lambda v_{\max}} d_{\max}\le \delta t. \label{mu:slope:requirement:plan}
\end{align}

\noindent We can then show that for all Cases I, II and III as in the statement of the proposition the above requirements are satisfied and hence, that (P2) holds. The proof of this fact can be found in the Appendix.

\noindent \textbf{STEP 2: Verification of Property (P3).} In order to show (P3), it suffices to prove that for the given selection of $\lambda\in(0,1)$, $\mu>0$, $d_{\max}$ and $\delta t$ as provided by Cases I, II, III, the agent $i$ and the cell configuration $\bf{l}_i$ it holds $T(x_{i0},w_i)>\delta t$, for all $x_{i0}\in S_{l_i}$ and $w_i\in W$. The latter is a direct consequence of \eqref{Tmax:gt:deltat:plan} and Lemma \ref{lemma:Txi0wi}.
\end{proof}

We are now in position to state our main result on sufficient conditions for well posed abstractions.

\begin{thm}\label{discretizations:for:planning}
Consider a cell decomposition $\S$ of $D$ with diameter $d_{\max}$, a time step $\delta t$, the parameters $\lambda\in(0,1)$, $\mu>0$ and assume that $\lambda$, $\mu$, $d_{\max}$ and $\delta t$ satisfy the restrictions of Proposition~\ref{proposition:property:P}. Then, for each agent $i\in\N$ and cell configuration $\bf{l}_i=(l_i,l_{j_1},\ldots,l_{j_{N_i}})$ of $i$ we have ${\rm Post}_i(l_i;\bf{l}_i)\ne\emptyset$, namely, the space-time discretization is well posed for the multi-agent system \eqref{single:integrator}-\eqref{general:feedback:law}. In particular, for any tuple of reference points $(x_{i,G},\bf{x}_{j,G})$ as in \eqref{reference:point} and corresponding reference trajectory $\chi_i(\cdot)$ of $i$ as given by \eqref{reference:solution} it holds
\begin{align}
B(\chi_i(\delta t);r) & \subset D, \label{reachable:states:inD} \\
{\rm Post}_i(l_i;\bf{l}_i) & \supset\{l\in\I:S_l\cap B(\chi_i(\delta t);r)\ne\emptyset\}, \label{planning:condition}
\end{align}

\noindent where $r$ is defined in \eqref{distance:r}.
\end{thm}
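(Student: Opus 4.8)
The plan is to establish the two inclusions \eqref{reachable:states:inD} and \eqref{planning:condition} for a fixed but arbitrary agent $i$, cell configuration $\bf{l}_i$, tuple of reference points $(x_{i,G},\bf{x}_{j,G})$ as in \eqref{reference:point} and associated reference trajectory $\chi_i(\cdot)$ of \eqref{reference:solution}, and then to read off ${\rm Post}_i(l_i;\bf{l}_i)\ne\emptyset$. Since $\lambda,\mu,d_{\max},\delta t$ satisfy the restrictions of Proposition~\ref{proposition:property:P}, that proposition (together with the facts recorded in its proof, in particular $T_{\max}>\delta t$) supplies all the ingredients I need: $\chi_i(t)\in D$ for every $t\in[0,\delta t]$, the control law $k_{i,\bf{l}_i}(\cdot)$ of \eqref{feedback:ki} satisfies Property (P) of Definition~\ref{control:class}, and $r=\lambda v_{\max}\delta t$ by \eqref{distance:r} while $W=B(\lambda v_{\max})$ by \eqref{set:W}.

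For \eqref{reachable:states:inD}, take any $x\in B(\chi_i(\delta t);r)$ and put $w_i:=\tfrac{1}{\delta t}(x-\chi_i(\delta t))$; then $|w_i|\le r/\delta t=\lambda v_{\max}$, i.e.\ $w_i\in W$. Since $k_{i,\bf{l}_i}(\cdot)$ satisfies (P3), $T(x_{i,G},w_i)>\delta t$, and since the set defining $T(x_{i,G},w_i)$ as a supremum in \eqref{time:T:plan} is an interval with left endpoint $0$, this forces its defining condition to hold for all $t\in[0,\delta t]$; with $x_{i0}=x_{i,G}$ the displacement term $(1-\tfrac{t}{\delta t})(x_{i0}-x_{i,G})$ vanishes identically, so $x=\chi_i(\delta t)+\delta t\,w_i\in D$. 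As $x$ was arbitrary, $B(\chi_i(\delta t);r)\subset D$.

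For \eqref{planning:condition}, I would fix $l\in\I$ with $S_l\cap B(\chi_i(\delta t);r)\ne\emptyset$, pick $x$ in this intersection, and set $w_i:=\tfrac{1}{\delta t}(x-\chi_i(\delta t))\in W$ as above; it then suffices to verify that $\bf{l}_i$, $k_{i,\bf{l}_i}(\cdot)$, $w_i$, $l$ satisfy Condition (C) of Definition~\ref{conditionC}, since this makes $l_i\overset{\bf{l}_i}{\longrightarrow}l$ well posed (Definition~\ref{well:posed:discretization}(i)) and hence puts $l$ in ${\rm Post}_i(l_i;\bf{l}_i)$ (Definition~\ref{individual:ts}). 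To this end, take any $\bf{l}$ with ${\rm pr}_i(\bf{l})=\bf{l}_i$, any feedback laws $k_{\ell,{\rm pr}_\ell(\bf{l})}(\cdot)$, $\ell\ne i$, satisfying (P), any $w_\ell\in W$ ($\ell\ne i$) and any $x(0)\in D^N$ with $x_i(0)=x_{i0}\in S_{l_i}$, $x_\ell(0)=x_{\ell 0}\in S_{l_\ell}$. By Proposition~\ref{completeness:result}(iia) the resulting closed-loop solution is defined and remains in $D^N$ on $[0,\delta t]$. The crucial point is that, owing to the cancellation built into the term \eqref{feedback:ki1}, along this solution $\dot{x}_i=F_{i,\bf{l}_i}(x_i)+k_{i,\bf{l}_i,2}(x_{i0})+k_{i,\bf{l}_i,3}(t;x_{i0},w_i)$, which no longer involves $\bf{x}_j$ and is exactly the differential equation analysed in the proof of Lemma~\ref{lemma:Txi0wi}. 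Consequently the same Gronwall estimate based on the Lipschitz bound \eqref{tildefi:Lipschitz:const} --- valid on $[0,\delta t)$ since $\delta t<\min\{T_{\max},T(x_{i0},w_i)\}$, and extended to $t=\delta t$ by continuity --- yields $x_i(t)=\hat{x}_i(t)$ on $[0,\delta t]$, with $\hat{x}_i(\cdot)$ as in \eqref{tilde:xi:plan} and $\bar{x}_i(\cdot)$ as in \eqref{bar:xi:plan}. Evaluating at $t=\delta t$ the displacement term drops out, so $x_i(\delta t,x(0))=\bar{x}_i(\delta t)=\chi_i(\delta t)+\delta t\,w_i=x\in S_l$, for every admissible initial condition and every admissible choice of the neighbours' feedback laws and of the $w_\ell$; this is precisely Condition (C).

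Finally, since $\chi_i(\delta t)\in D=\bigcup_{l\in\I}S_l$ (Definition~\ref{cell:decomposition}), there is $l^\ast\in\I$ with $\chi_i(\delta t)\in S_{l^\ast}$, hence $S_{l^\ast}\cap B(\chi_i(\delta t);r)\ne\emptyset$, and \eqref{planning:condition} gives $l^\ast\in{\rm Post}_i(l_i;\bf{l}_i)$; as $i$ and $\bf{l}_i$ were arbitrary, this is exactly well-posedness of $\S-\delta t$ in the sense of Definition~\ref{well:posed:discretization}(ii). I expect the main obstacle to be the verification of Condition (C): one must argue that the identity $x_i(t)=\hat{x}_i(t)$ persists for \emph{arbitrary} (P)-admissible behaviour of agent $i$'s neighbours --- which is exactly where the cancellation in \eqref{feedback:ki1} is used, decoupling $i$'s closed-loop dynamics from its neighbours' --- and that $\chi_i$, $x_i$, $k_{i,\bf{l}_i,3}$ and the full closed-loop solution are simultaneously well defined on the closed interval $[0,\delta t]$, which is where Propositions~\ref{proposition:property:P} and \ref{completeness:result}(iia) and Lemma~\ref{lemma:Txi0wi} come in.
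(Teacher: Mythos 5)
Your proposal is correct and follows essentially the same route as the paper: the same choice $w_i=(x-\chi_i(\delta t))/\delta t\in W$, the use of (P3) with $x_{i0}=x_{i,G}$ for \eqref{reachable:states:inD}, and the verification of Condition (C) via Proposition~\ref{completeness:result}(iia) together with the decoupling induced by \eqref{feedback:ki1} and the Gronwall argument of Lemma~\ref{lemma:Txi0wi} (this is exactly the paper's ``Claim II''). Your closing observation that the cell containing $\chi_i(\delta t)$ witnesses ${\rm Post}_i(l_i;\bf{l}_i)\ne\emptyset$ is left implicit in the paper but is the intended conclusion.
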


\begin{proof}
For the proof, pick $i\in\N$, $\bf{l}_i=(l_i,l_{j_1},\ldots,l_{j_{N_i}})$, $(x_{i,G},\bf{x}_{j,G})$ as in \eqref{reference:point} and notice that by virtue of Proposition~\ref{proposition:property:P}, the reference trajectory $\chi_i(\cdot)$ is well defined on $[0,\delta t]$. In addition, consider the control law $k_{i,\bf{l}_i}(\cdot)$ in \eqref{feedback:ki}. Then, it follows again from Proposition~\ref{proposition:property:P} that the latter satisfies Property~(P). Next, for each $x\in B(\chi_i(\delta t);r)$ define
\begin{equation} \label{vector:wi}
w_i(=w_i(x)):=\frac{x-\chi_i(\delta t)}{\delta t}.
\end{equation}

\noindent Then, we get from \eqref{distance:r} that $|w_i|\le \frac{r}{\delta t}=\lambda v_{\max}$ and hence, by virtue of \eqref{set:W} that 
\begin{equation} \label{wi:is:inW}
w_i(=w_i(x))\in W,\forall x\in B(\chi_i(\delta t);r).
\end{equation}

\noindent In order to prove the theorem, we need to verify that \eqref{reachable:states:inD} and~\eqref{planning:condition} are fulfilled.

\noindent \bf{Proof of \eqref{reachable:states:inD}.} In order to show \eqref{reachable:states:inD}, pick $x\in B(\chi_i(\delta t);r)$, $w_i$ as in \eqref{vector:wi} and recall that the control law $k_{i,\bf{l}_i}(\cdot)$  satisfies Property (P). Then we have from \eqref{wi:is:inW} that $w_i\in W$ and thus, we get from Property (P3) applied with $x_{i0}=x_{i,G}$ and the selected parameter $w_i$ that $T(x_{i,G},w_i)>\delta t$. From the latter and \eqref{time:T:plan} we obtain that $\chi_i(\delta t)+w_i\delta t\in D$, which by  virtue of \eqref{vector:wi} implies that $x\in D$ and establishes validity of \eqref{reachable:states:inD}.

\noindent \bf{Proof of \eqref{planning:condition}.} For the verification  of \eqref{planning:condition} it suffices to prove the following claim.

\noindent \bf{Claim II.} Consider the control law $k_{i,\bf{l}_i}(\cdot)$ above and pick any  $w_i\in W$. Then, for any initial cell configuration $\bf{l}$ with ${\rm pr}_i(\bf{l})=\bf{l}_i$, $\bf{l}=(l_{1},\ldots,l_{N})$, $\ell\in\N\setminus \{{i}\}$ and selection of feedback laws in \eqref{feedback:for:others} which satisfy (P) the following hold. The solution of the closed-loop system \eqref{single:integrator}-\eqref{general:feedback:law}, \eqref{feedback:ki}, \eqref{feedback:for:i}, \eqref{feedback:for:others}, with the selected parameter $w_i$ for $k_{i,\bf{l}_i}(\cdot)$, is well defined on $[0,\delta t]$ and satisfies
\begin{equation} \label{control:to:same:point}
x_i(\delta t)(:=x_i(\delta t,x(0)))=\chi_i(\delta t)+\delta tw_i=\bar{x}_i(\delta t),
\end{equation}

\noindent for all  $x(0)\in D^N$ with $x_{m0}\in S_{l_m}$, $m\in\N$ and  $w_{m}\in W$, $m\in\N\setminus \{{i}\}$, with the last equality in \eqref{control:to:same:point} being a consequence of \eqref{bar:xi:plan} (see also Fig. 2 in Section 3).$\quad\triangleleft$

Indeed, let any $l\in\I$ such that
\begin{equation} \label{Sl:cap:ball}
S_l\cap  B(\chi_i(\delta t);r)\ne\emptyset.
\end{equation}

\noindent In order to show that $l\in {\rm Post}_i(l_i;\bf{l}_i)$, i.e., that the transition $l_i\overset{\bf{l}_i}{\longrightarrow}l$ is well posed, it suffices according to Definition  \ref{well:posed:discretization}(i) to verify that there exists $w_i\in W$ such that Condition (C) holds with the control law $k_{i,\bf{l}_i}(\cdot)$ above. By exploiting \eqref{Sl:cap:ball}, we pick $x\in S_l\cap B(\chi_i(\delta t);r)$ and $w_i$ in \eqref{vector:wi} as the parameter for $k_{i,\bf{l}_i}(\cdot)$, which by virtue of \eqref{wi:is:inW} satisfies $w_i\in W$. Thus, it follows that the conclusion of Claim II is fulfilled with \eqref{vector:wi} and \eqref{control:to:same:point} implying that $x_i(\delta t)=x\in S_l$. Hence, Condition (C) is satisfied and we conclude that  $l_i\overset{\bf{l}_i}{\longrightarrow}l$ is well posed. It thus remains to verify Claim II.

\noindent \bf{Proof of Claim II.}  Let $x$, $k_{i,\bf{l}_i}(\cdot)$ and $w_i(=w_i(x))$ as in the statement of Claim II. We first note that due to Proposition~\ref{completeness:result}(iia), the solution of the closed-loop system is defined and remains in $D^{N}$ on the whole interval $[0,\delta t]$. In order to show that $x_i(\delta t)=\bar{x}_i(\delta t)$, we show that $x_i(\cdot)$ is an appropriate modification of the trajectory $\bar{x}_i(\cdot)$. In particular, it holds $x_{i}(t)=\bar{x}_i(t)+\left(1-\frac{t}{\delta t}\right)(x_{i0}-x_{i,G}),\forall t\in [0,\delta t]$, which implies the desired result. The proof of the latter is based precisely on the same arguments used for the proof of \eqref{xi:eq:tildexi:solution1:plan} in Lemma \ref{lemma:Txi0wi} and is therefore omitted. Hence, we conclude that $x_{i}(\delta t)=\bar{x}_{i}(\delta t)$ and the proof is complete.
\end{proof}

\begin{rem}\label{remark:refernece:point:depend}
Notice that the reachable cells provided by the left hand side of \eqref{planning:condition} are a subset of the reachable cells from the specific cell configuration of $i$. In addition, these cells depend on the reference points $(x_{i,G},\bf{x}_{j,G})$, since a different selection will correspond to a different control law $k_{i,\bf{l}_i}$ and will lead in  principle to another reference trajectory, thus, modifying the center of the ball $B(\chi_i(\delta t);r)$. $\triangleleft$  
\end{rem}

Based on the above remark, we next show that by selecting for each agent $i$ and cell configuration $\bf{l}_i$ a tuple of reference points $(x_{i,G},\bf{x}_{j,G})$ and the control law \eqref{feedback:ki}, the right hand side of \eqref{planning:condition} provides the set $\cup_{[w_i]\in 2^W}{\rm Post}_i^c(l_i;(\bf{l}_i,[w_i]))$ of the controlled transition systems $TS_i^c$ in Definition~\ref{controlled:TS}.

\begin{corollary} \label{corollary:controlled:transitions}
Consider a space-time discretization $\S-\delta t$ satisfying the hypotheses of Theorem \ref{discretizations:for:planning}. Also, select for each agent $i$ and cell configuration $\bf{l}_i$, a tuple of reference points  $(x_{i,G},\bf{x}_{j,G})$ as in \eqref{reference:point}, the control law $k_{i,\bf{l}_i}(\cdot)$ in \eqref{feedback:ki} and consider the corresponding reference trajectory $\chi_i(\cdot)$ given by \eqref{reference:solution}. Then, for each $i$,  $\bf{l}_i=(l_i,l_{j_1},\ldots,l_{j_{N_i}})$ and $l\in\I$ the actions $[w_i]_{(\bf{l}_i,l)}$ in \eqref{wi:class} for the specification of the controlled transition system $TS_i^c$ are given as 
\begin{equation} \label{wi:class:Slcapball}
[w_i]_{(\bf{l}_i,l)}=\left\lbrace\frac{x-\chi_i(\delta t)}{\delta t}:x\in S_l\cap B(\chi_i(\delta t);r)\right\rbrace,
\end{equation}

\noindent with $r$ as in \eqref{distance:r}, and it holds
\begin{align}
\cup_{[w_i]\in 2^W} & {\rm Post}_i^c(l_i;(\bf{l}_i,[w_i])) \nonumber\\
& =\{l\in\I:S_l\cap B(\chi_i(\delta t);r)\ne\emptyset\}. \label{Post:controlled}
\end{align}
\end{corollary}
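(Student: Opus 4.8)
The plan is to unwind the definitions of $[w_i]_{(\bf{l}_i,l)}$ in \eqref{wi:class} and of the controlled transition relation in Definition \ref{controlled:TS}, and to reduce everything to \textbf{Claim II} established in the proof of Theorem \ref{discretizations:for:planning}: under the selected control law $k_{i,\bf{l}_i}(\cdot)$ in \eqref{feedback:ki} and any $w_i\in W$, the endpoint of agent $i$'s trajectory at time $\delta t$ equals $\chi_i(\delta t)+\delta t\,w_i$, independently of the initial conditions of $i$ and its neighbours in the cell configuration and of the other agents' feedback laws, as long as these satisfy Property (P). By Proposition \ref{proposition:property:P} the control law \eqref{feedback:ki} indeed satisfies (P) and the reference trajectory is defined on $[0,\delta t]$, so Proposition \ref{completeness:result}(iia) guarantees a well-defined closed-loop solution on $[0,\delta t]$ for every admissible choice of the remaining data; Claim II and this well-posedness are the only ingredients needed.

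First I would prove \eqref{wi:class:Slcapball}. For ``$\supseteq$'': given $x\in S_l\cap B(\chi_i(\delta t);r)$, set $w_i:=(x-\chi_i(\delta t))/\delta t$; then $|w_i|\le r/\delta t=\lambda v_{\max}$, so $w_i\in W$ by \eqref{set:W} and \eqref{distance:r}. Claim II gives $x_i(\delta t)=\chi_i(\delta t)+\delta t\,w_i=x\in S_l$ for all admissible initial data and other-agent feedbacks, hence $\bf{l}_i,k_{i,\bf{l}_i}(\cdot),w_i,l$ satisfy Condition (C), i.e. $w_i\in[w_i]_{(\bf{l}_i,l)}$. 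For ``$\subseteq$'': if $w_i\in[w_i]_{(\bf{l}_i,l)}$, then $w_i\in W$ and Condition (C) holds; applying Claim II to any admissible initial condition forces $\chi_i(\delta t)+\delta t\,w_i\in S_l$, and since $|\delta t\,w_i|\le\lambda v_{\max}\delta t=r$ this point lies in $S_l\cap B(\chi_i(\delta t);r)$, with $w_i=\bigl((\chi_i(\delta t)+\delta t\,w_i)-\chi_i(\delta t)\bigr)/\delta t$ belonging to the right-hand set of \eqref{wi:class:Slcapball}. Since $x\mapsto(x-\chi_i(\delta t))/\delta t$ is a bijection, \eqref{wi:class:Slcapball} in particular yields that $[w_i]_{(\bf{l}_i,l)}\ne\emptyset$ if and only if $S_l\cap B(\chi_i(\delta t);r)\ne\emptyset$.

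Then I would derive \eqref{Post:controlled}. By Definition \ref{controlled:TS}, $l$ belongs to $\cup_{[w_i]\in 2^W}{\rm Post}_i^c(l_i;(\bf{l}_i,[w_i]))$ exactly when $l_i\overset{\bf{l}_i}{\longrightarrow}_i l$ holds in $TS_i$ and $[w_i]_{(\bf{l}_i,l)}\ne\emptyset$. By the previous paragraph the second condition is equivalent to $S_l\cap B(\chi_i(\delta t);r)\ne\emptyset$; moreover it already implies the first, since a nonempty $[w_i]_{(\bf{l}_i,l)}$ exhibits a control law satisfying (P) together with a vector $w_i\in W$ for which Condition (C) holds, which is precisely the definition (Definition \ref{well:posed:discretization}(i)) of the transition $l_i\overset{\bf{l}_i}{\longrightarrow}l$ being well posed. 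Hence the union equals $\{l\in\I:S_l\cap B(\chi_i(\delta t);r)\ne\emptyset\}$, which is \eqref{Post:controlled} (and is consistent with \eqref{planning:condition} of Theorem \ref{discretizations:for:planning}).

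The only subtle point to be careful about is the interplay between the two transition notions in Definition \ref{controlled:TS}: the ``global'' well-posed transition of $TS_i$ (existence of some admissible control law) versus the ``controller-specific'' set $[w_i]_{(\bf{l}_i,l)}$ attached to the fixed control law \eqref{feedback:ki}. The argument above shows that nonemptiness of the latter forces the former, so no generality is lost in intersecting the two conditions. Beyond this bookkeeping there is no genuine analytic difficulty, as all quantitative estimates have already been carried out in Propositions \ref{proposition:bounds}, \ref{proposition:property:P} and Theorem \ref{discretizations:for:planning}.
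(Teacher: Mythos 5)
Your proposal is correct and follows essentially the same route as the paper: both reduce everything to Claim II from the proof of Theorem \ref{discretizations:for:planning}, prove the inclusion $\supseteq$ of \eqref{wi:class:Slcapball} by picking $w_i=(x-\chi_i(\delta t))/\delta t\in W$, and then read off \eqref{Post:controlled} from Definition \ref{controlled:TS}. The only (harmless) difference is that you prove the reverse inclusion directly and thereby handle the case $S_l\cap B(\chi_i(\delta t);r)=\emptyset$ uniformly, whereas the paper argues both by contradiction and treats the empty case separately; your explicit remark that nonemptiness of $[w_i]_{(\bf{l}_i,l)}$ already forces the transition $l_i\overset{\bf{l}_i}{\longrightarrow}l$ to be well posed is a useful piece of bookkeeping the paper leaves implicit.
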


\begin{proof}
\bf{Proof of \eqref{wi:class:Slcapball}.} We first consider the case where $S_l\cap B(\chi_i(\delta t);r)\ne\emptyset$ and show that $[w_i]_{(\bf{l}_i,l)}\supset\{\frac{x-\chi_i(\delta t)}{\delta t}:x\in S_l\cap B(\chi_i(\delta t);r)\}$. Indeed, let $x\in S_l\cap B(\chi_i(\delta t);r)$ and $w_i$ as given by \eqref{vector:wi}, which by virtue of \eqref{wi:is:inW} satisfies $w_i\in W$. Thus, from Claim II in the proof of Theorem \ref{discretizations:for:planning} we obtain that $\bf{l}_i$, $k_{i,\bf{l}_i}(\cdot)$, $w_i$ and $l$ satisfy Condition (C), since \eqref{vector:wi} and \eqref{control:to:same:point} imply that $x_i(\delta t)=x\in S_l$. Thus, we obtain  from \eqref{wi:class} that $\frac{x-\chi_i(\delta t)}{\delta t}\in [w_i]_{(\bf{l}_i,l)}$. In order to prove the reverse inclusion, let $w_i\in [w_i]_{(\bf{l}_i,l)}\subset W$ and assume on the contrary that
\begin{equation} \label{wi:contradiction}
w_i\notin \left\lbrace\frac{x-\chi_i(\delta t)}{\delta t}:x\in S_l\cap B(\chi_i(\delta t);r)\right\rbrace.
\end{equation} 

\noindent Let $x=\chi_i(\delta t)+\delta tw_i$ and notice that due to \eqref{set:W} and \eqref{distance:r} it holds $x\in B(\chi_i(\delta t);r)$. In addition, by exploiting Claim II applied with this selection of $w_i$, we obtain from \eqref{control:to:same:point} that $x_i(\delta t)=\chi_i(\delta t)+\delta tw_i=x$. Also, since $w_i\in [w_i]_{(\bf{l}_i,l)}$ we get from \eqref{wi:class}, namely, the fact that $\bf{l}_i$, $k_{i,\bf{l}_i}(\cdot)$, $w_i$, $l$ satisfy Condition (C), that $x_i(\delta t)\in S_l$. Thus, $x\in S_l\cap B(\chi_i(\delta t);r)$ which contradicts \eqref{wi:contradiction}.

Finally, in order to verify \eqref{wi:class:Slcapball} for the general case we need to show that $[w_i]_{(\bf{l}_i,l)}=\emptyset$ when $S_l\cap B(\chi_i(\delta t);r)=\emptyset$. Notice that the latter is equivalently written as 
\begin{equation} \label{x:outside:ball}
|x-\chi_i(\delta t)|>r,\forall x\in S_l.
\end{equation}  

\noindent Hence, suppose on the contrary that \eqref{x:outside:ball} holds and that $[w_i]_{(\bf{l}_i,l)}\ne\emptyset$. Then, by picking any $w_i\in [w_i]_{(\bf{l}_i,l)}\subset W$, we obtain from \eqref{control:to:same:point} in Claim II that $x_i(\delta t)=\chi_i(\delta t)+\delta tw_i$. On the other hand, since $w_i\in  [w_i]_{(\bf{l}_i,l)}$, we get from \eqref{wi:class} that $x_i(\delta t)\in S_l$. Hence, it follows that $\chi_i(\delta t)+\delta tw_i\in S_l$ which by virtue of \eqref{x:outside:ball} implies that $|w_i|>\frac{r}{\delta t}$. Thus, we obtain from \eqref{set:W} and \eqref{distance:r} that $w_i\notin W$, which is a contradiction.

\noindent \bf{Proof of \eqref{Post:controlled}.} This follows directly from \eqref{wi:class:Slcapball} and the definition of transitions in $TS_i^c$ as provided by Definition~\ref{controlled:TS}. 
\end{proof}

\begin{rem} \label{remark:deterministic}
It is noted that if we select $W$ in \eqref{set:W} as the open ball ${\rm int}(B(\lambda v_{\max}))$ we will obtain in \eqref{planning:condition} the cells which have nonempty intersection with the open ball ${\rm int}(B(\chi_i(\delta t);r))$. In this case, it follows from the properties of a cell decomposition that each transition system $TS_i^c$ is deterministic. The latter relies on the requirement that the interiors of the cells in the decomposition are disjoint and the fact that when the intersection of a cell with this ball is nonempty, it will also have nonempty interior. $\triangleleft$ 
\end{rem}

The results of Theorem~\ref{discretizations:for:planning} and Corollary~\ref{corollary:controlled:transitions} can be utilized for motion planning tasks through the following procedure:

\noindent \bf{Step 1.} Given the Lipschitz constants $L_1$, $L_2$ and the bounds $M$, $v_{\max}$ on the agents' dynamics, pick design parameters $\lambda$, $\mu$ and select a well posed space-time discretization $\S-\delta t$ for the multi-agent system based on Theorem \ref{discretizations:for:planning}. 

\noindent \bf{Step 2.} Fix a reference point for each cell $S_l$ of the decomposition $\{S_l\}_{l\in\I}$. Then, derive the controlled transition system $TS_i^c$ of each agent $i$ as follows. For each cell configuration $\bf{l}_i=(l_i,l_{j_1},\ldots,l_{j_{N_i}})$ compute the endpoint $\chi_i(\delta t)$ of the reference trajectory \eqref{reference:solution} at time $\delta t$, corresponding to the reference points $x_{i,G},x_{j_1,G},\ldots,x_{j_{N_i},G}$ of the cells $S_{l_i},S_{l_{j_1}},\ldots,S_{l_{j_{N_i}}}$, as selected at the beginning of Step 2. Then, specify the cells which have nonempty intersection with $B(\chi_i(\delta t);r)$, in order to obtain all the transitions $l_i\overset{(\bf{l}_i,[w_i])}{\longrightarrow_i^c}l_i'$ to the cells in \eqref{Post:controlled}, with  $[w_i]=[w_i]_{(\bf{l}_i,l_i')}$ as given by \eqref{wi:class:Slcapball}. Also, note that the actions $[w_i]$ do not need to be specified until Step 4.   

\noindent \bf{Step 3.} Find a path $\bf{l}^0\bf{l}^1\bf{l}^2\cdots$ in the controlled product transition system $TS_{\P}^c$ defined in Remark~\ref{remark:controlled:product} which satisfies the plan and project it for each agent $i$ to a sequence of transitions $l_i^0\overset{(\bf{l}_i^0,[w_i]^0)}{\longrightarrow_i^c}l_i^1\overset{(\bf{l}_i^1,[w_i]^1)}{\longrightarrow_i^c}l_i^2\cdots$.

\noindent \bf{Step 4.} Select the control laws to implement the individual transitions by the continuous time system as follows. For each transition $l_i^m\overset{(\bf{l}_i^m,[w_i]^m)}{\longrightarrow_i^c}l_i^{m+1}$ pick any parameter $w_i\in [w_i]^m=[w_i]_{(\bf{l}_i^m,l_i^{m+1})}$, with the latter as given in \eqref{wi:class:Slcapball}, and apply the control law \eqref{feedback:ki} with the selected $w_i$.

The following corollary provides a lower bound for the minimum number of cells each agent can reach in time $\delta t$, depending on the selection of the design parameter $\mu$ for the space-time discretization.
\begin{corollary} \label{corollary:num:transitions}
Consider a cell decomposition $\S$ of $D$ with diameter $d_{\max}$, a time step $\delta t$, and parameters $\lambda\in(0,1)$, $\mu>0$ such that the hypotheses of Theorem \ref{discretizations:for:planning} are fulfilled. Then for each agent $i\in\N$ and each cell configuration of $i$, there exist at least $\lfloor \mu^n \rfloor+1$, if $\mu^n\notin\mathbb{N}$, or $\lfloor \mu^n \rfloor$, if $\mu^n\in\mathbb{N}$ possible discrete transitions.
\end{corollary}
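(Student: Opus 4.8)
The plan is to read off the number of transitions from the reachability inclusion \eqref{planning:condition} and then to bound it from below by an isodiametric volume estimate. Fix $i\in\N$ and a cell configuration $\bf{l}_i=(l_i,l_{j_1},\ldots,l_{j_{N_i}})$ of $i$, choose a tuple of reference points $(x_{i,G},\bf{x}_{j,G})$ as in \eqref{reference:point}, and let $\chi_i(\cdot)$ be the corresponding reference trajectory; by Proposition~\ref{proposition:property:P} it is defined on $[0,\delta t]$, and Theorem~\ref{discretizations:for:planning} gives $B(\chi_i(\delta t);r)\subset D$ together with ${\rm Post}_i(l_i;\bf{l}_i)\supset\mathcal{L}$, where $\mathcal{L}:=\{l\in\I:S_l\cap B(\chi_i(\delta t);r)\ne\emptyset\}$ and $r=\lambda v_{\max}\delta t$ is as in \eqref{distance:r}. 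Since the number of discrete transitions available from the configuration $\bf{l}_i$ equals $|{\rm Post}_i(l_i;\bf{l}_i)|$, it suffices to prove $|\mathcal{L}|\ge\mu^n$.

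The key step is a volume count. Because $D=\bigcup_{l\in\I}S_l$ and $B(\chi_i(\delta t);r)\subset D$, the family $\{S_l\cap B(\chi_i(\delta t);r)\}_{l\in\mathcal{L}}$ covers $B(\chi_i(\delta t);r)$, so countable subadditivity of the Lebesgue measure ${\rm vol}(\cdot)$ on $\Rat{n}$ yields
\begin{equation*}
{\rm vol}\big(B(\chi_i(\delta t);r)\big)\le\sum_{l\in\mathcal{L}}{\rm vol}\big(S_l\cap B(\chi_i(\delta t);r)\big).
\end{equation*}
Each $S_l$ is closed, hence Borel, so each $S_l\cap B(\chi_i(\delta t);r)$ is measurable with diameter at most ${\rm diam}(S_l)\le d_{\max}$, and therefore the isodiametric inequality gives ${\rm vol}\big(S_l\cap B(\chi_i(\delta t);r)\big)\le\omega_n(d_{\max}/2)^n$, where $\omega_n:={\rm vol}(B(1))$. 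Combining this with ${\rm vol}(B(\chi_i(\delta t);r))=\omega_n r^n$ we obtain
\begin{equation*}
|\mathcal{L}|\ \ge\ \frac{\omega_n r^n}{\omega_n (d_{\max}/2)^n}\ =\ \left(\frac{2r}{d_{\max}}\right)^{n}\ \ge\ \mu^n,
\end{equation*}
where the last inequality is precisely the design requirement \eqref{r:design:req}, i.e.\ $2r\ge\mu d_{\max}$.

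It remains to pass to integers: $|{\rm Post}_i(l_i;\bf{l}_i)|\ge|\mathcal{L}|$ is a strictly positive integer bounded below by $\mu^n$, so it is $\ge\lfloor\mu^n\rfloor+1$ when $\mu^n\notin\mathbb{N}$ and $\ge\lfloor\mu^n\rfloor$ when $\mu^n\in\mathbb{N}$ (the case $\mu^n<1$ just reproducing the baseline ``at least one transition''), which is the assertion. The only non-routine point is the choice of the right covering estimate: a crude bound such as ``a set of diameter $d_{\max}$ is contained in a cube of side $d_{\max}$'' costs a dimensional factor and would give merely $|\mathcal{L}|\ge\omega_n(\mu/2)^n<\mu^n$ for $n\ge2$; it is exactly the sharp isodiametric factor $(2r/d_{\max})^n$ that converts $r\ge\tfrac{\mu}{2}d_{\max}$ into the claimed $\mu^n$, so identifying the isodiametric inequality as the needed tool is the crux, the measurability check and the integrality step being routine.
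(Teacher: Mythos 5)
Your proof is correct and follows essentially the same route as the paper's: bound each cell's volume by the isodiametric inequality, compare with the volume of the reachable ball $B(\chi_i(\delta t);r)$, invoke the design requirement $r\ge\tfrac{\mu}{2}d_{\max}$, and round up to an integer. Your version is in fact slightly cleaner in making the covering/subadditivity step explicit (and in getting the direction of the final volume-ratio inequality right, where the paper's displayed \eqref{cardinality:bound3} has an inverted inequality sign relative to what the argument actually needs).
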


\begin{proof}
In order to prove the result, we need by virtue of \eqref{planning:condition} to show that
\begin{equation} \label{cardinality:bound}
\#\{l\in\I:S_l\cap B(\chi_i(\delta t);r)\ne\emptyset\}\ge\left\lbrace
\begin{array}{ll}
\lfloor \mu^n \rfloor+1, & \;{\rm if}\;\mu^n\notin\mathbb{N}, \\
\lfloor \mu^n \rfloor, &\;{\rm if}\;\mu^n\in\mathbb{N},
\end{array}
\right.
\end{equation}

\noindent where $\#$ denotes the cardinality of a set. By using the notation ${\rm Vol}(S)$ for the volume (Lebesgue measure) of a measurable set $S\subset\Rat{n}$, it follows from \eqref{dmax:dfn} and the iso-diametric inequality that for each $l\in \I$ it holds
\begin{equation} \label{volume:Sl:bound}
{\rm Vol}(S_l)\le{\rm Vol}\left(B\left(\frac{d_{\max}}{2}\right)\right)=\left(\frac{d_{\max}}{2}\right)^n\beta(n):=S_{\max},
\end{equation}

\noindent where
$$
\beta(n):={\rm Vol}(B(1)),
$$

\noindent namely, the volume of the ball with center 0 and radius 1 in $\Rat{n}$. It then follows from \eqref{reachable:states:inD}, namely, that $B(\chi_i(\delta t);r)\subset D$, \eqref{volume:Sl:bound} and the fact that due to Definition \ref{cell:decomposition} it holds $\cup_{l\in\I}S_l=D$, that
\begin{equation} \label{cardinality:bound2}
\#\{l\in\I:S_l\cap B(\chi_i(\delta t);r)\ne\emptyset\}\ge\left\lbrace
\begin{array}{ll}
\lfloor \frac{ {\rm Vol}\left(B(\chi_i(\delta t);r)\right)}{S_{\max}} \rfloor+1, & \;{\rm if}\;\frac{ \left(B(\chi_i(\delta t);r)\right)}{S_{\max}}\notin\mathbb{N}, \\
\lfloor \frac{ \left(B(\chi_i(\delta t);r)\right)}{S_{\max}} \rfloor, &\;{\rm if}\;\frac{ \left(B(\chi_i(\delta t);r)\right)}{S_{\max}}\in\mathbb{N}.
\end{array}
\right.
\end{equation}

\noindent By taking into account \eqref{r:design:req} and \eqref{volume:Sl:bound}, we get that
\begin{equation} \label{cardinality:bound3}
\frac{{\rm Vol} \left(B(\chi_i(\delta t);r)\right)}{S_{\max}}\le \frac{\left(\frac{\mu}{2}d_{\max}\right)^n\beta(n)}{\left(\frac{d_{\max}}{2}\right)^n\beta(n)}=\mu^n
\end{equation}

\noindent and thus, \eqref{cardinality:bound} is a direct consequence of \eqref{cardinality:bound2} and \eqref{cardinality:bound3}. The proof is now complete.
\end{proof}

Finally, we provide certain upper bounds on the complexity of the controlled transition system of each agent as a function of $\lambda$, namely, the part of the input that is exploited for reachability purposes. Therefore, given a finite cell decomposition $\S=\{S_l\}_{l\in\I}$ of a bounded domain $D$, it is convenient to introduce the length
\begin{equation} \label{din:dfn}
d_{\rm in}=2\sup\{R>0:\forall l\in\I,\exists\,x\in S_l,B(x,R)\subset S_l\}
\end{equation}

\noindent corresponding to the maximum diameter of a ball that can be inscribed in all cells. The following corollary provides the corresponding complexity result. 
\begin{corollary} 
Consider a bounded domain $D$ admitting a finite cell decomposition $\S$ of diameter $d_{\max}$, a time step $\delta t$ and a parameter $\lambda\in(0,1)$, such that the hypotheses of Theorem~\ref{discretizations:for:planning} are fulfilled with $\mu=0$, implying that $d_{\max}$ and $\delta t$ satisfy \eqref{dmax:interval:caseA} and \eqref{deltat:interval:caseA}, respectively. If in addition $d_{\max}$ is the maximum possible diameter that satisfies \eqref{dmax:interval:caseA}, i.e., $d_{\max}=\frac{(1-\lambda)^2 v_{\max}^{2}}{4ML}$, then the cardinalities of the state set $Q_i^c$ and transition relation $\longrightarrow_i^c$ of agent's $i$ individual controlled transition system are upper bounded by $C_1\frac{1}{(1-\lambda)^{2n}}$ and $C_2\frac{1}{(1-\lambda)^{(2(N_i+1)+1)n}}$, respectively, with $C_1=\frac{{\rm Vol}(D)}{{\rm Vol}(B(\frac{1}{2}))}\left(\frac{4ML}{cv_{\max}^2}\right)^n$, $C_2=C_1^{N_i+1}\left(\frac{4}{c}\right)^n$, $c=\frac{d_{\rm in}}{d_{\max}}$,  $d_{\rm in}$ as given in \eqref{din:dfn} and ${\rm Vol}(D)$, ${\rm Vol}(B(\frac{1}{2}))$ being the volume of the domain $D$ and the ball with radius $\frac{1}{2}$ in $\Rat{n}$, respectively. Finally, the cardinality of the state set $Q_{\P}^c$ and transition relation $\longrightarrow_{\P}^c$ of the controlled product transition system are upper bounded by $C_1'\frac{1}{(1-\lambda)^{2Nn}}$ and $C_2'\frac{1}{(1-\lambda)^{3Nn}}$, respectively, with $C_1'=C_1^N$ and $C_2'=C_1^N\left(\frac{4}{c}\right)^{Nn}$. 
\end{corollary}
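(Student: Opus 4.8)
The plan is to derive all four bounds from two elementary volume-packing counts, after first using the extremal choice of $d_{\max}$ to pin down the time step. I would start by bounding $\#\I=|Q_i^c|$. By \eqref{din:dfn}, for every $R<d_{\rm in}/2$ each cell $S_l$ contains a ball of radius $R$, so ${\rm Vol}(S_l)\ge (d_{\rm in}/2)^n{\rm Vol}(B(1))$ for all $l\in\I$. Since the cells have pairwise disjoint interiors and $\cup_{l\in\I}S_l=D$, their volumes are additive and sum to ${\rm Vol}(D)$, whence $\#\I\,(d_{\rm in}/2)^n{\rm Vol}(B(1))\le{\rm Vol}(D)$, i.e. $\#\I\le{\rm Vol}(D)/(d_{\rm in}^n{\rm Vol}(B(\tfrac12)))$. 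Substituting $d_{\rm in}=c\,d_{\max}$ and the maximal value $d_{\max}=\tfrac{(1-\lambda)^2v_{\max}^2}{4ML}$ gives exactly $\#\I\le C_1(1-\lambda)^{-2n}$, which is the stated bound on $|Q_i^c|$; raising it to the $N$-th power yields $|Q_{\P}^c|=\#(\I^N)\le C_1^N(1-\lambda)^{-2Nn}=C_1'(1-\lambda)^{-2Nn}$.

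Next I would pin down $r$. The point is that when $d_{\max}$ attains its maximal value the discriminant $(1-\lambda)^2v_{\max}^2-4MLd_{\max}$ appearing in \eqref{deltat:interval:caseA} vanishes, so the admissible interval for $\delta t$ collapses to the single value $\delta t=\tfrac{(1-\lambda)v_{\max}}{2ML}$; then by \eqref{distance:r}, $r=\lambda v_{\max}\delta t=\tfrac{2\lambda}{1-\lambda}d_{\max}$, hence $r+d_{\max}=\tfrac{1+\lambda}{1-\lambda}d_{\max}\le\tfrac{2}{1-\lambda}d_{\max}$. Any cell meeting $B(\chi_i(\delta t);r)$ has diameter $\le d_{\max}$ and therefore lies in $B(\chi_i(\delta t);r+d_{\max})$, so the same volume-packing count (now over the cells hitting the ball, whose union is contained in the enlarged ball) gives
\[
\#\{l\in\I:S_l\cap B(\chi_i(\delta t);r)\neq\emptyset\}\le\Bigl(\tfrac{2(r+d_{\max})}{d_{\rm in}}\Bigr)^n\le\Bigl(\tfrac{4}{c(1-\lambda)}\Bigr)^n .
\]
By \eqref{planning:condition} in Theorem~\ref{discretizations:for:planning} together with \eqref{Post:controlled} in Corollary~\ref{corollary:controlled:transitions}, the left-hand side equals $\#\bigl(\cup_{[w_i]\in 2^W}{\rm Post}_i^c(l_i;(\bf{l}_i,[w_i]))\bigr)$ for every cell configuration $\bf{l}_i$.

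Then I would assemble the transition counts. By Definition~\ref{controlled:TS} a transition of $TS_i^c$ is uniquely identified by a pair consisting of a cell configuration $\bf{l}_i\in\I^{N_i+1}$ and a successor cell belonging to $\cup_{[w_i]}{\rm Post}_i^c(l_i;(\bf{l}_i,[w_i]))$, so summing the preceding estimate over the at most $(\#\I)^{N_i+1}$ cell configurations gives $\#\!\longrightarrow_i^c\,\le(\#\I)^{N_i+1}(4/(c(1-\lambda)))^n\le C_1^{N_i+1}(1-\lambda)^{-2(N_i+1)n}(4/c)^n(1-\lambda)^{-n}=C_2(1-\lambda)^{-(2(N_i+1)+1)n}$ with $C_2=C_1^{N_i+1}(4/c)^n$. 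For the controlled product system of Remark~\ref{remark:controlled:product}, a transition out of a configuration $\bf{l}$ is determined by the $N$-tuple $(l_1',\dots,l_N')$ of successor cells, each factor ranging over at most $(4/(c(1-\lambda)))^n$ cells, so $\#\!\longrightarrow_{\P}^c\,\le(\#\I)^N(4/(c(1-\lambda)))^{Nn}\le C_1^N(1-\lambda)^{-2Nn}(4/c)^{Nn}(1-\lambda)^{-Nn}=C_2'(1-\lambda)^{-3Nn}$ with $C_2'=C_1^N(4/c)^{Nn}$.

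The main obstacle is not any single hard estimate but rather recognizing that the extremal choice $d_{\max}=\tfrac{(1-\lambda)^2v_{\max}^2}{4ML}$ collapses the admissible interval for $\delta t$ to a point and thereby fixes $r$ exactly as a multiple of $d_{\max}$; once this is seen, the remainder is the two volume-packing counts together with bookkeeping to match the precise constants $C_1,C_2,C_1',C_2'$. A minor technical point worth flagging is the additivity of volume over a family of cells, which relies on their interiors being disjoint (as in the cell decomposition axioms and as already used in Corollary~\ref{corollary:num:transitions}).
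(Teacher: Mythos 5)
Your proposal is correct and follows essentially the same route as the paper's proof: the collapse of the $\delta t$ interval to $\frac{(1-\lambda)v_{\max}}{2ML}$, the volume-packing bound on $\#\I$ via the inscribed-ball length $d_{\rm in}$, the enlargement of $B(\chi_i(\delta t);r)$ to $B(\chi_i(\delta t);r+d_{\max})$ with $r+d_{\max}\le\frac{2}{1-\lambda}d_{\max}$, and the identification of transitions with (configuration, successor-cell) pairs via \eqref{Post:controlled}. The only divergence is cosmetic: your inscribed-ball inequality ${\rm Vol}(S_l)\ge{\rm Vol}(B(\tfrac{d_{\rm in}}{2}))$ is stated in the correct direction, which is what the paper actually uses despite the reversed sign in its displayed \eqref{vol:Sl:lt:balldin}.
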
 

\begin{proof}
Notice first, that when $d_{\max}=\frac{(1-\lambda)^2 v_{\max}^{2}}{4ML}$, it follows from \eqref{deltat:interval:caseA} that necessarily 
\begin{equation} \label{deltat:seclection}
\delta t=\frac{(1-\lambda)v_{\max}}{2ML}.
\end{equation}

\noindent Next, based on the finiteness hypothesis of $\S$ which implies that the length $d_{\rm in}$ as defined in \eqref{din:dfn} is positive, we provide an upper bound on the number of cells  in $\S$ and the number of cells that may intersect any reachable ball in $D$ with radius $r$, as the latter is given by \eqref{distance:r}. First, notice that due to \eqref{din:dfn} it holds
\begin{equation} \label{vol:Sl:lt:balldin}
{\rm Vol}(S_l)\le {\rm Vol}(B(\tfrac{d_{\rm in}}{2})),\forall l\in\I.
\end{equation}

\noindent In addition, in order to obtain a bound on the cardinality of $\I$, we exploit the facts that the cells of the decomposition cover $D$ and have disjoint interiors, which implies that for any subset of cells, the volume of their union equals the sum of their individual volumes. The latter properties in conjunction with \eqref{vol:Sl:lt:balldin} imply that 
\begin{align}
{\rm Vol}(\cup_{i\in\I}S_l) & ={\rm Vol}(D)\Longrightarrow \sum_{l\in\I}{\rm Vol}(S_l)={\rm Vol}(D) \Longrightarrow
\sum_{l\in\I}{\rm Vol}(B(\tfrac{d_{\rm in}}{2})) \le {\rm Vol}(D)\Longrightarrow \nonumber  \\ 
\#\I & \le\frac{{\rm Vol}(D)}{{\rm Vol}(B(\tfrac{d_{\rm in}}{2}))} = \frac{{\rm Vol}(D)}{{\rm Vol}(B(\tfrac{1}{2}))}\frac{1}{d_{\rm in}^n}=\frac{{\rm Vol}(D)}{{\rm Vol}(B(\tfrac{1}{2}))}\frac{1}{(cd_{\max})^n} \nonumber \\
& = \frac{{\rm Vol}(D)}{{\rm Vol}(B(\tfrac{1}{2}))}\left(\frac{4ML}{c(1-\lambda)^2v_{\max}^2}\right)^n=\frac{{\rm Vol}(D)}{{\rm Vol}(B(\tfrac{1}{2}))}\left(\frac{4ML}{cv_{\max}^2}\right)^n\frac{1}{(1-\lambda)^{2n}} \nonumber \\
& = C_1\frac{1}{(1-\lambda)^{2n}}, \label{cells:upper:bound}
\end{align}

\noindent with $c$ and $C_1$ as given in the statement of the corollary. We next proceed to determine an upper bound on the number of cells which intersect any reachable ball $B(\chi_i(\delta t);r)$. Note first, that by the definition of $d_{\max}$ it follows that 
\begin{align}
\{l\in\I: S_l\cap B(\chi_i(\delta t);r) \ne\emptyset \} & \subset \{l\in\I: S_l\subset B(\chi_i(\delta t);r+d_{\max}) \ne\emptyset \} \Longrightarrow \nonumber \\
\#\{l\in\I: S_l\cap B(\chi_i(\delta t);r) \ne\emptyset \} & \le \#\{l\in\I: S_l\subset B(\chi_i(\delta t);r+d_{\max})\}. \label{cardinality:intersect:vs:larger:ball}
\end{align} 
   
\noindent In addition, by taking into account as above that the cells of the decomposition have disjoint interiors, we get that
\begin{align}
{\rm Vol}\left(\bigcup\{S_l\in\S: S_l\subset B(\chi_i(\delta t);r+d_{\max})\right) & \subset {\rm Vol}(B(\chi_i(\delta t);r+d_{\max})) \Longrightarrow \nonumber \\
\sum_{\{l\in\I: S_l\subset B(\chi_i(\delta t);r+d_{\max})\}}{\rm Vol}(S_l) & \le {\rm Vol}(B(\chi_i(\delta t);r+d_{\max})) \Longrightarrow \nonumber \\
\#\{l\in\I: S_l\subset B(\chi_i(\delta t);r+d_{\max})\} & \le
\frac{{\rm Vol}(B(\chi_i(\delta t);r+d_{\max}))}{{\rm Vol}(B(\tfrac{d_{\rm in}}{2}))} \nonumber \\
\phantom{\#\{l\in\I: S_l\subset B(\chi_i(\delta t);r+d_{\max})\}} & =\left(\frac{r+d_{\max}}{\frac{d_{\rm in}}{2}}\right)^n. \label{rplusdmax:over:din} 
\end{align} 

\noindent From the selection of $d_{\max}$ in the statement of the corollary, $\delta t$ in \eqref{deltat:seclection} and the definition of $r$ we get that   
\begin{align}
\left(\frac{r+d_{\max}}{\frac{d_{\rm in}}{2}}\right)^n & = \left(\frac{\lambda v_{\max}\frac{(1-\lambda)v_{\max}}{2ML}+\frac{(1-\lambda)^2 v_{\max}^{2}}{4ML}}{\frac{cd_{\max}}{2}}\right)^n\le\left(\frac{\frac{(1-\lambda)v_{\max}^{2}}{2ML}}{\frac{cd_{\max}}{2}}\right)^n \nonumber \\
& = \left(\frac{\frac{(1-\lambda)v_{\max}^{2}}{ML}}{\frac{c(1-\lambda)v_{\max}^{2}}{4ML}}\right)^n=\left(\frac{4}{c}\right)^n\frac{1}{(1-\lambda)^n}. \label{rplusdmax:over:din2}
\end{align}

\noindent Due to \eqref{cells:upper:bound} it follows directly that the cardinality of the state set of each agent's individual (controlled) transition system is upper bounded by $ C_1\frac{1}{(1-\lambda)^{2n}}$. In order to obtain the corresponding bound for the transition relation, note that due to  \eqref{wi:class:Slcapball} and \eqref{Post:controlled}, for each $\bf{l}_i=(l_i,l_{j_1},\ldots,l_{j_{N_i}})$ it holds that
\begin{equation} \label{cells:intersect:up:bound}
l_i\overset{(\bf{l}_i,[w_i])}{\longrightarrow_i^c}l_i'\quad{\rm iff}\quad l_i'\in \{l\in\I: S_l\cap B(\chi_i(\delta t);r) \ne\emptyset \}\quad{\rm and}\quad [w_i]=[w_i]_{(\bf{l}_i,l_i')}
\end{equation}

\noindent and that each action $[w_i]_{(\bf{l}_i,l_i')}$ is uniquely determined by the successor cell $l_i'$. Thus, the cardinality of each agent's $i$ transition relation is evaluated by summing the numbers of the possible successor cells over all the possible cell configurations of $i$. This observation implies by virtue of \eqref{cells:intersect:up:bound} that  
\begin{equation} \label{individual:trans:exact:num}
\#\longrightarrow_i^c=\sum_{\bf{l}_i\in\I^{N_i+1}} \#\{l\in\I: S_l\cap B(\chi_i(\delta t);r) \ne\emptyset \}.
\end{equation}

\noindent Hence, by exploiting \eqref{cells:upper:bound}, \eqref{cardinality:intersect:vs:larger:ball}, \eqref{rplusdmax:over:din} and \eqref{rplusdmax:over:din2}, we get that 
\begin{align} \label{individual:trans:exact:num}
\#\longrightarrow_i^c & \le\#\I^{N_i+1} \left(\frac{4}{c}\right)^n\frac{1}{(1-\lambda)^n}=(\#\I)^{N_i+1} \left(\frac{4}{c}\right)^n\frac{1}{(1-\lambda)^n} \nonumber \\
& \le \left(C_1\frac{1}{(1-\lambda)^{2n}}\right)^{N_i+1} \left(\frac{4}{c}\right)^n\frac{1}{(1-\lambda)^n} \nonumber \\
& =C_1^{N_i+1}\left(\frac{4}{c}\right)^n\frac{1}{(1-\lambda)^{(2(N_i+1)+1)n}}=C_2\frac{1}{(1-\lambda)^{(2(N_i+1)+1)n}},
\end{align}

\noindent with $C_2$ as given in the statement of the corollary. Finally, from the definition of the product controlled transition system, we obtain from  \eqref{cells:upper:bound} that the cardinality of its state set satisfies
$$
\#Q_{\P}^c=\#\I^{N}=(\#\I)^N\le \left(C_1\frac{1}{(1-\lambda)^{2n}}\right)^N= C_1^N\frac{1}{(1-\lambda)^{2Nn}}=C_1'\frac{1}{(1-\lambda)^{2Nn}},
$$  

\noindent with $C_1'$ as given in the statement of the corollary. In addition, we obtain that the cardinality of the transition relation is given by
$$
\#\longrightarrow_{\P}^c=\sum_{\bf{l}\in\I^{N}}\prod_{i=1}^N \#\{l\in\I: S_l\cap B(\chi_{i,{\rm pr}_i(\bf{l})}(\delta t);r) \ne\emptyset \},
$$

\noindent where for each $\bf{l}\in\I^N$ and $i\in\N$, $\chi_{i,{\rm pr}_i(\bf{l})}(\delta t)$ denotes the reference trajectory corresponding to the cell configuration ${\rm pr}_i(\bf{l})$ of agent $i$. Thus, we get from \eqref{cells:upper:bound}, \eqref{cardinality:intersect:vs:larger:ball}, \eqref{rplusdmax:over:din} and \eqref{rplusdmax:over:din2} that
\begin{align*}
\#\longrightarrow_{\P}^c & \le \#\I^N \prod_{i=1}^N \left(\frac{4}{c}\right)^n\frac{1}{(1-\lambda)^n}\le (\#\I)^N \left(\frac{4}{c}\right)^{Nn}\frac{1}{(1-\lambda)^{Nn}} \\
& \le \left(C_1\frac{1}{(1-\lambda)^{2n}}\right)^N\left(\frac{4}{c}\right)^{Nn}\frac{1}{(1-\lambda)^{Nn}}=C_1^N \left(\frac{4}{c}\right)^{Nn}\frac{1}{(1-\lambda)^{3Nn}}=C_2'\frac{1}{(1-\lambda)^{3Nn}},
\end{align*}

\noindent with $C_2'$ as given in the statement of the corollary. The proof is now complete.
\end{proof}

\section{Example and Simulation Results}

As an illustrative example we consider a system of four agents with states $x_1,x_2,x_3,x_4\in\Rat{2}$, whose initial conditions lie inside the circular domain ${\rm int}(B(R))(=\{x\in\Rat{2}:|x|<R\})$ with center zero and radius $R>0$. Their dynamics are given as:
\begin{align}
\dot{x}_1 & = {\rm sat}_{\rho}(x_2-x_1)+g(x_1)+v_1, \nonumber \\
\dot{x}_2 & = g(x_2)+v_2, \nonumber \\
\dot{x}_3 & = {\rm sat}_{\rho}(x_2-x_3)+g(x_3)+v_3, \nonumber \\
\dot{x}_4 & = {\rm sat}_{\rho}(x_3-x_4)+g(x_4)+v_4, \label{example:dynamics}
\end{align}

\noindent where the function ${\rm sat}_{\rho}:\Rat{2}\to\Rat{2}$ is defined as ${\rm sat}_{\rho}(x):=x$ if $|x|\le\rho$; ${\rm sat}_{\rho}(x):=\frac{\rho}{|x|}x$, if $|x|>\rho$. The agents' neighbors' sets in this example are $\N_1=\{2\}$, $\N_2=\emptyset$, $\N_3=\{2\}$, $\N_4=\{3\}$ and specify the corresponding network topology. The constant $\rho>0$ in \eqref{example:dynamics} satisfies $\rho\le R$ and represents a bound on the distance between agents 1, 2, and agents 2, 3, that we will require the system to satisfy during its evolution. The function $g(\cdot)$ is defined as
\begin{equation} \label{function:g}
g(x):=\begin{cases}
    0, & {\rm if}\; |x|<R-\frac{\rho}{2}, \\
    ((R-\frac{\rho}{2})-|x|)\frac{x}{|x|}, & {\rm if}\; R-\frac{\rho}{2} \le |x| <R, \\
    -\frac{\rho}{2}\frac{x}{|x|}, & {\rm if}\; R\le |x| \\
    \end{cases}
\end{equation}

\noindent and determines for each agent a repulsive vector field from the boundary of ${\rm int}(B(R))$ when the agent is located in ${\rm int}(B(R))$, 
in order to ensure invariance of the agents' trajectories inside this circular domain.

\noindent We next show, that if the initial distances between agents 1 and 2 (and similarly for agents 2 and 3) is less than $\rho$, it will also remain less than $\rho$ for all positive times, for an appropriate bound on the magnitude of the free input terms $v_i$. By selecting the energy function $V(x_1,x_2):=\frac{1}{2}|x_1-x_2|^2$ and evaluating its derivative along the right hand side of \eqref{example:dynamics}, we obtain that
\begin{align}
\dot{V} & =\langle x_1-x_2, x_2-x_1+g(x_1)+v_1-g(x_2)-v_2 \rangle \nonumber \\
& \le-|x_1-x_2|^2+\langle x_1-x_2, g(x_1)-g(x_2)\rangle+2|x_1-x_2|v_{\max}, {\rm if}\; |x_1-x_2|<\rho \label{V12:dot:case1} \\
\dot{V} & =\langle x_1-x_2, \frac{\rho}{|x_2-x_1|}(x_2-x_1)+g(x_1)+v_1-g(x_2)-v_2\rangle \nonumber \\
& \le-(\rho-2v_{\max})|x_1-x_2|+\langle x_1-x_2, g(x_1)-g(x_2)\rangle, {\rm if}\; |x_1-x_2|\ge \rho \label{V12:dot:case2}
\end{align}

\noindent where $\langle \cdot,\cdot \rangle$ denotes the inner product in $\Rat{2}$. Next, notice that
\begin{equation} \label{g:property}
\langle x-y,g(x)-g(y) \rangle\le 0, \forall x,y\in\Rat{2}
\end{equation}

\noindent Indeed, assume that without any loss of generality it holds $|x|\ge|y|$. Then, it follows from \eqref{function:g} that there exist $\alpha\ge\beta\ge 0$, such that $g(x)=-\alpha x$ and $g(y)=-\beta y$. Hence, we get that $\langle x-y,g(x)-g(y) \rangle=\langle x-y,-\alpha x-\beta y \rangle \le-\alpha |x|^2+(\alpha+\beta)|x||y|-\beta|y|^2$. By evaluating the discriminant of the latter second order expression, we obtain that it is always nonpositive, which implies \eqref{g:property}. By additionally assuming that
\begin{equation} \label{example:vmax}
v_{\max}=\frac{\rho}{2}
\end{equation}

\noindent we obtain from \eqref{V12:dot:case1}, \eqref{V12:dot:case2} and \eqref{g:property} that $\dot{V}\le 0$ when $|x_1-x_2|\ge \rho$. Thus, it follows that $|x_1(0)-x_2(0)|\le \rho$ implies that $|x_1(t)-x_2(t)|\le \rho$ for all positive times. Analogously, by considering the  function $V(x_2,x_3):=\frac{1}{2}|x_2-x_3|^2$ it follows that the same holds for $|x_2(t)-x_3(t)|$. Furthermore, under the selection of $v_{\max}=\frac{\rho}{2}$, it can be deduced (along the lines of the corresponding result in \cite{BdDd15}) that the circular domain remains invariant for the dynamics of the system. Finally, we obtain from \eqref{example:dynamics} and \eqref{function:g} the following dynamics bounds and Lipschitz constants in \eqref{dynamics:bound}, \eqref{dynamics:bound1} and \eqref{dynamics:bound2}, respectively:
\begin{equation}\label{example:dynamics:bounds}
M=\frac{3}{2}\rho,\quad L_1=1, \quad L_2=2.
\end{equation}

\noindent Thus, it follows that system \eqref{example:dynamics} satisfies all requirements for the derivation of well posed discretizations.

\noindent In this example, it is also assumed that the reference point of each cell of the square partition is the center of the square. This enables us to obtain the following improved bounds on the feedback laws in \eqref{feedback:ki}, for their corresponding values of $t$, $x_i$, $\bf{x}_j$ and $w_i$:
\begin{align*}
|k_{i,\bf{l}_i,1}(x_i,\bf{x}_j)| & \le L_1\left((M+v_{\max})\delta t+\frac{d_{\max}}{2}\right) \\
|k_{i,\bf{l}_i,2}(x_{i0})| & \le \frac{d_{\max}}{2\delta t} \\
|k_{i,\bf{l}_i,3}(t;x_{i0},w)| & \le L_2\left(\lambda v_{\max}\delta t+\frac{d_{\max}}{2}\right)+\lambda v_{\max}
\end{align*}

\noindent Thus, in order to verify Property (P2), we need to select $d_{\max}$ and $\delta t$ satisfying
$$
L_1((M+v_{\max})\delta t+\frac{d_{\max}}{2})+\frac{d_{\max}}{2\delta t}+L_2\left(\lambda v_{\max}\delta t+\frac{d_{\max}}{2}\right)+\lambda v_{\max} \le v_{\max}.
$$

\noindent Equivalently, by virtue of  \eqref{example:vmax} and \eqref{example:dynamics:bounds}, it is required that
\begin{align*}
\left(2\rho\delta t+\frac{d_{\max}}{2}\right)+\frac{d_{\max}}{2\delta t}+2\left(\lambda \frac{\rho}{2}\delta t+\frac{d_{\max}}{2}\right) & \le (1-\lambda)\frac{\rho}{2} \iff \\
d_{\max} & \le \rho\frac{(1-\lambda)\delta t-4\delta t^2}{3\delta t+1}
\end{align*}

\noindent By evaluating the derivative of $d_{\max}(\cdot)$ with respect to $\delta t$ we obtain that
\begin{align*}
\dot{d}_{\max}=0\iff & ((1-\lambda)-8\delta t)(3\delta t+1)-3((1-\lambda)\delta t-4\delta t^2)=0 \\
\iff & -12\delta t^2-8\delta t+(1-\lambda)=0
\end{align*}

\noindent Hence, we obtain the time
$$
\bar{\delta t}:=\frac{8-4\sqrt{4+3(1-\lambda)}}{-24}=\frac{-2+\sqrt{4+3(1-\lambda)}}{6}
$$

\noindent corresponding to the maximum possible diameter
$$
\bar{d}_{\max}=\rho\frac{(1-\lambda)\bar{\delta t}-4\bar{\delta t}^2}{3\bar{\delta t}+1}
$$

For the simulation results, we select the distance $\rho=10$ and the radius of the circular domain $R=10$. We also assume that the agents 1, 2, 3 and 4, are initially located at $x_{10}=(5,-3)$, $x_{20}=(5,3)$, $x_{30}=(0,6)$ and $x_{40}=(-4,6)$, respectively. Thus, it follows that agents 1, 2, and 2, 3, satisfy the requirement on their initial relative distance. In the sequel we will focus on the behaviour of the system for times $t\in[0,2]$. Given this time interval and a selection of the planning parameter $\lambda\in(0,1)$, we choose the time step $\delta t$ as the largest possible time step not exceeding $\bar{\delta t}$ above, in such a way that the number of time steps $NT:=\frac{2}{\delta t}$ is a positive integer. We also choose the largest possible cell diameter $d_{\max}$ corresponding to $\delta t$ and consider a square grid in $\Rat{2}$. Each square has side length $d$, where $d$ is the largest number not exceeding $\frac{\sqrt{2}}{2}d_{\max}$, such that the quotient $\frac{2R}{d}$ is an integer. Thus, we can form a cell decomposition of the circular domain $D$ by defining as a cell each square in the grid which has nonempty intersection with $D$. In Figs. 3, 4 we have plotted (half of) the grid lines, in order to illustrate how the grid is affected by the choice of $\lambda$. We next consider two cases for the motion of agent 2, which is unaffected by the coupled constraints.

\textbf{Case I:} It holds $v_2(t)=v_{2c}$, $\forall t\in[0,2]$, with $v_{2c}=(-3,-3)$.

\textbf{Case II:} It holds $v_2(t)=v_{2c}+v_{2d}(t)$, $\forall t\in[0,2]$, with $v_{2c}$ as above and $v_{2d}\in\U_{d}$, where $\U_d$ is the set of all piecewise continuous functions $\tilde{v}:[0,2]\to\Rat{2}$ that satisfy $\tilde{v}(t)=\gamma(t)(\frac{\sqrt{2}}{2},-\frac{\sqrt{2}}{2})$, with $-1\le\gamma(t)\le 1$ for all $t\in [0,0.9]$ and $\gamma(t)=0$, for all $t\in (0.9,2]$.

\noindent Notice that in Case I we consider a pre-specified path for agent 2, by selecting a constant control, whereas in Case II we allow for the possibility to modify this path and superpose a motion perpendicular to it (up to certain bound) over the time interval $[0,0.9]$. Furthermore, in both cases the magnitude of $v_{2}(\cdot)$ is bounded by $v_{\max}(=5)$.

For Case I, we assign reachability goals to agents 1, 3 and 4 which should be fulfilled at the end of the time interval $[0,2]$, given the selected path for agent 2. Specifically, we want agents 1, 3 and 4 to reach the corresponding boxes in the workspace that are depicted in Fig. 3. First, we sample the trajectories of 2 at the time instants $\kappa\delta t$, $\kappa=0,1\ldots,NT$ and specify the sequence $l_2^0l_2^1\cdots l_2^{NT}$ corresponding to the cells $S_{l_{\kappa}}$ with $x_2(\kappa\delta t,x_{20})\in S_{l_{\kappa}}$. Then, 
we exploit the individual controlled transition systems of agents 1 and 3, in order to determine (an underapproximation of) their reachable cells for the given sampled trajectory of agent 2. In particular, by denoting as $l_1^0$ the index of the cell where the initial state $x_{10}$ of agent 1 belongs, we can evaluate the indices of its reachable cells at time $\kappa\delta t$ as $Q_1^{\kappa}={\rm Post}_1^c(Q_1^{\kappa-1};l_2^{\kappa-1})$, $\kappa=0,1,\ldots,NT$, where $Q_1^0:=\{l_1^0\}$ and we have used the notational convention ${\rm Post}_1^c(l_1;l_2):={\rm Post}_1^c(l_1;(l_1,l_2))$ (recall that $(l_1,l_2)$ stands for a cell configuration of agent 1) and the definition ${\rm Post}_1^c(Q_1;l_2):=\cup_{l_1\in Q_1}{\rm Post}_1^c(l_1;l_2)$. The approach followed in this case is possible because agent 2 is decoupled from the other agents and the individual transition system of agent 1 depends only on the cell indices of agent 2. Similarly, we can evaluate the reachable cells of agent 1 and check whether it fulfils its reachability task. Next, by computing the reachable cells of agent 3 which lie in its target box at the final time step $NT$, we calculate the backward reachable cells of the agent in order to encode the discrete trajectories which fulfil its  reachability goal. Then, we exploit the individual transition system of agent 4 in order to determine its reachable cells for all the possible trajectories of agent 3 that satisfy its reachability task. The corresponding simulation results are depicted in Fig. 3 for $\lambda=0.2$ (left) and $\lambda=0.3$ (right). The figure also illustrates the effect of the parameter $\lambda$ in the accomplishment of the reachability goals, since for $\lambda=0.2$ only agent 3 reaches its target box, whereas for $\lambda=0.3$ all agents achieve their corresponding task. 

\begin{figure}[H]
\includegraphics[width =0.49\columnwidth]{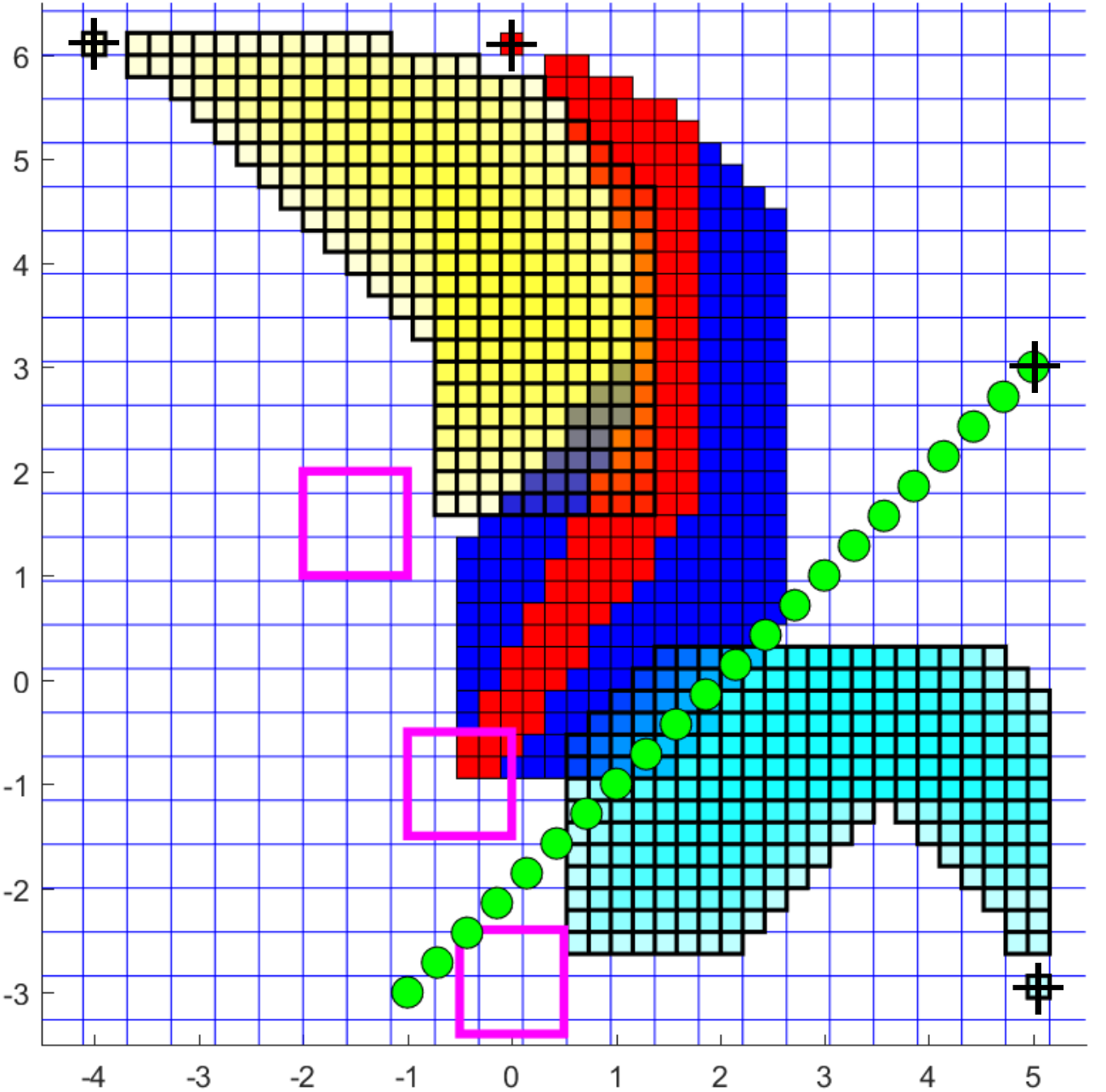} \includegraphics[width =0.49\columnwidth]{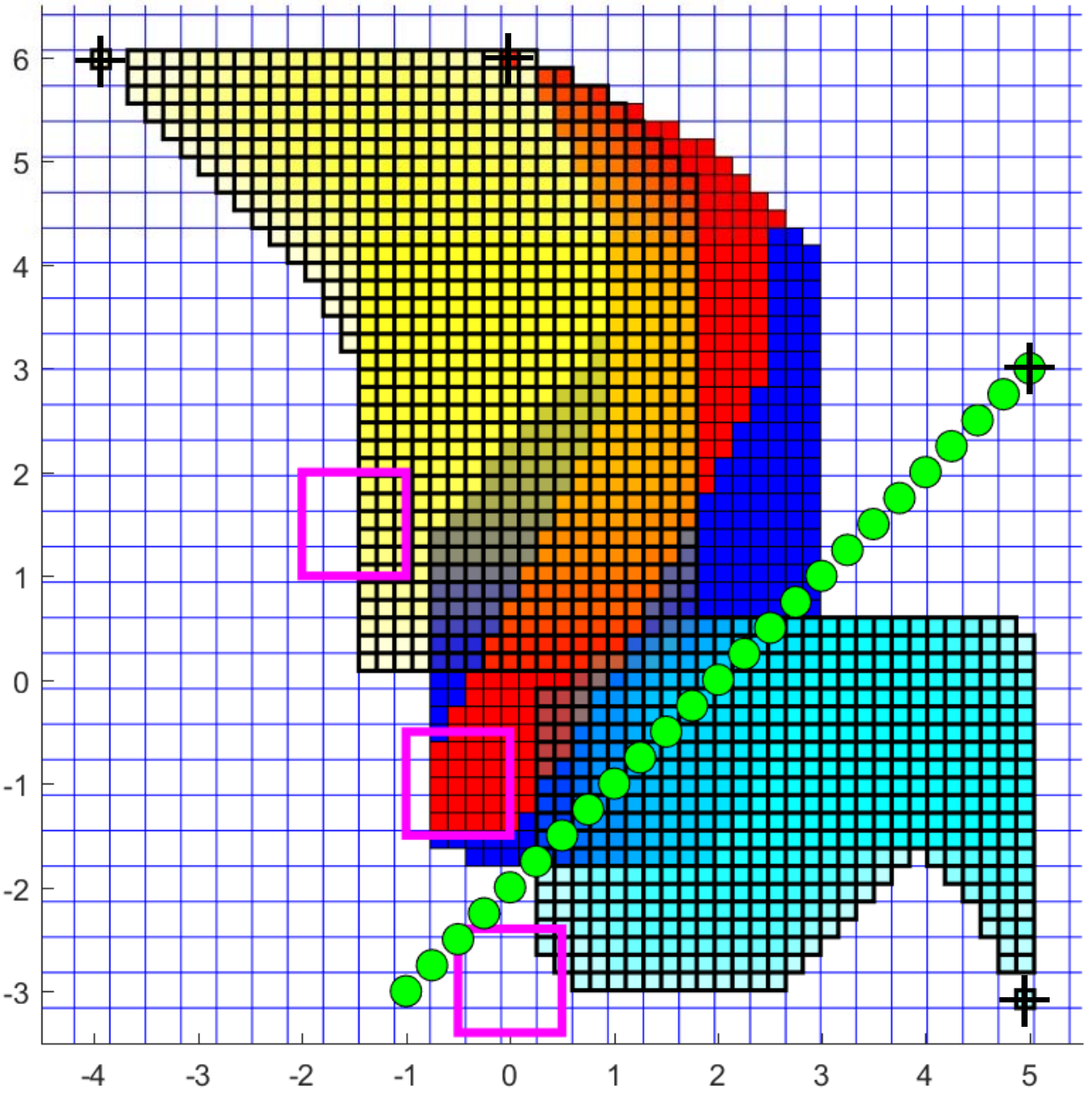} \vspace{-0.3cm}
\center{\small{(i)} \hspace{0.45\columnwidth}\small{(ii)}} \vspace{-0.3cm}
\caption{Reachable cells of the agents for (i) $\lambda=0.2$ and (ii) $\lambda=0.3$. Agents 1, 3 and 4 are initially located at the bottom right, top center and top left of the illustrated workspace, respectively. The circles denote the sampled trajectory of agent 2 as determined by Case I and the boxes the corresponding target sets of agents 1, 3 and 4. The union of all discrete paths of agent 3 which end in its target box are highlighted within the union of its reachable cells.}
\end{figure}

For Case II, we exploit the individual controlled transition system of agents 1, 3 and 4 in order to obtain (an underapproximation of) the cells these agents can reach, irrespectively of the choice of $v_{2d}$ for the free input of agent 2. In particular, we define the finite cell sequence $\{Q_2^{\kappa}\}_{\kappa\in\{0,1,\ldots,NT\}}$ as $Q_2^{\kappa}=\{l\in\I:\exists v_{2,d}\in\U_d\;{\rm with}\;x_2(\kappa\delta t,x_{20};v_{2c}+v_{2d}(\cdot))\in S_l\}$. Also, we inductively define for $\kappa=0,1,\ldots,NT$ the sets $Q_1^{\kappa}=\cup_{l_1\in Q_1^{\kappa-1}}\cap_{l_2\in Q_2^{\kappa-1}}{\rm Post}_1^{c}(l_1;l_2)$, $Q_3^{\kappa}=\cup_{l_3\in Q_3^{\kappa-1}}\cap_{l_2\in Q_2^{\kappa-1}}{\rm Post}_3^3(l_3;l_2)$ and $Q_4^{\kappa}=\cup_{l_4\in Q_4^{\kappa-1}}\cap_{l_3\in Q_3^{\kappa-1}}{\rm Post}_4^c(l_4;l_3)$, with $Q_1^{0}=\{l_1^0\}$, $Q_3^{0}=\{l_3^0\}$ and $Q_4^{0}=\{l_4^0\}$ (we use the same notational convention as above for the operators ${\rm Post}_i^c(\cdot)$, and the notation $l_i^0$ for the initial cells of the agents $i=1,3,4$). Next, consider any selection of sequences $l_1^0l_1^1\cdots l_1^{NT}$, $l_3^0l_3^1\cdots l_3^{NT}$ and $l_4^0l_4^1\cdots l_4^{NT}$, of agents 1, 3 and 4, that satisfy $l_1^{\kappa}\in Q_1^{\kappa}$, $l_3^{\kappa}\in Q_3^{\kappa}$ and $l_4^{\kappa}\in Q_4^{\kappa}$, respectively. Then, by taking into account the definition of the sets $Q_i^{\kappa}$, $i=1,3,4$, the definition of the individual transition systems of agents 1, 3, 4, and the particular coupling between the agents in this example, we arrive at the following conclusion. For each agent 1, 3 and 4, it is possible to assign a sequence of control laws, such that each corresponding agent will reach the cells with indices $l_1^{\kappa}$, $l_3^{\kappa}$ and $l_4^{\kappa}$ at time $\kappa\delta t$, respectively, for any selection of the input $v_{2d}$ of agent 2.
In Fig. 4 we illustrate the union of the reachable cells of agents 1, 3 and 4 for $\lambda=0.3$ and $\lambda=0.4$, respectively. Notice that  the underapproximation of agents' 1 and 3 reachable cells increases with the selection of the larger parameter $\lambda$, namely, with the exploitation of a larger part of the free input for planning. However, the same observation does not necessarily hold for the reachable cells of agent 4. The reason why the area covered by the reachable cells of agent 4 remains approximately the same, is that the corresponding area increases for agent 3 for larger values of $\lambda$. Thus, although the reachability properties of agent 4 are improved, this is compensated by the fact that each illustrated transition of agent 4 to a certain cell needs to be possible for an increasing number of different positions of agent 3.

\begin{figure}[H]
\includegraphics[width =0.49\columnwidth]{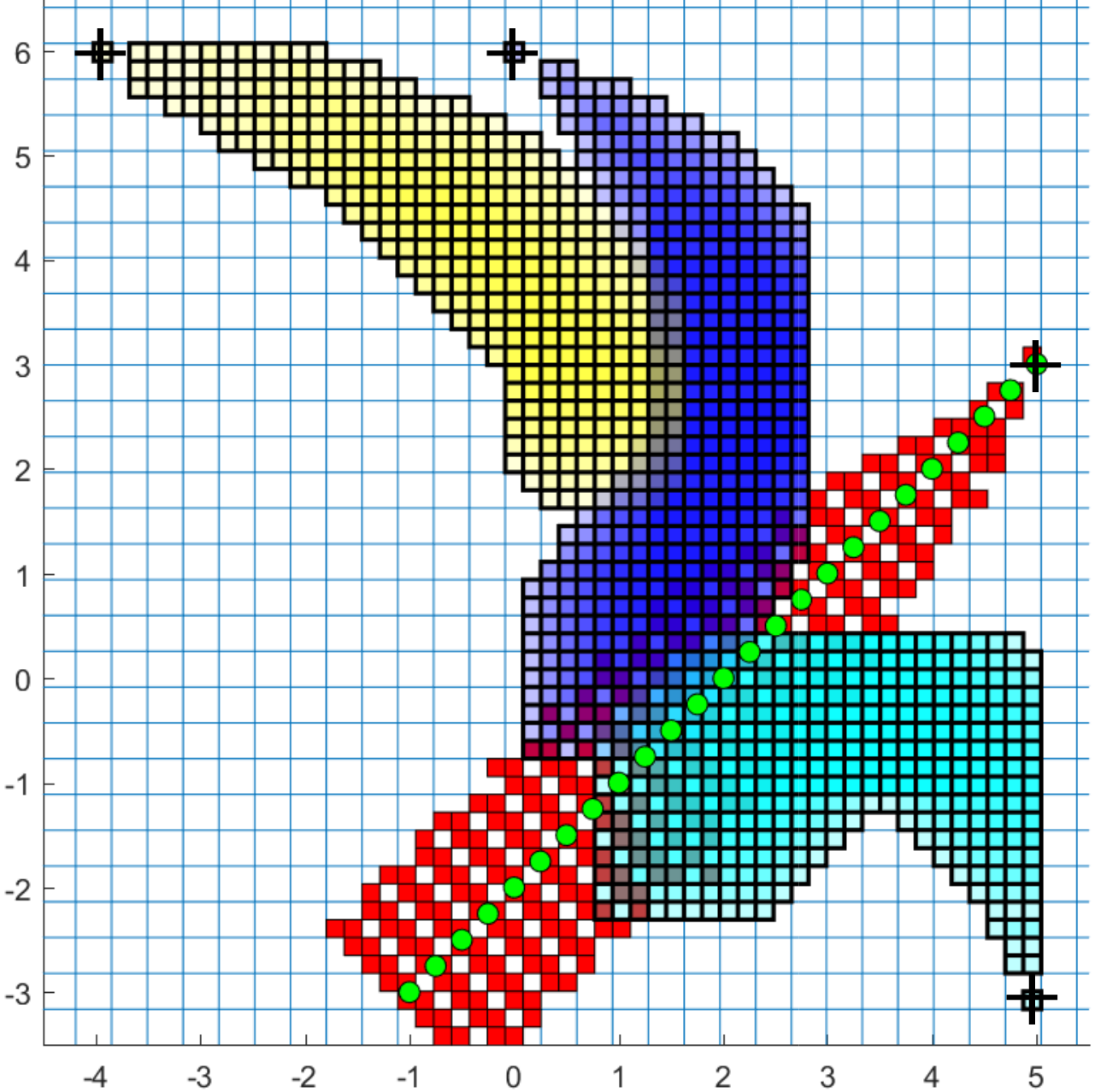} \includegraphics[width =0.49\columnwidth]{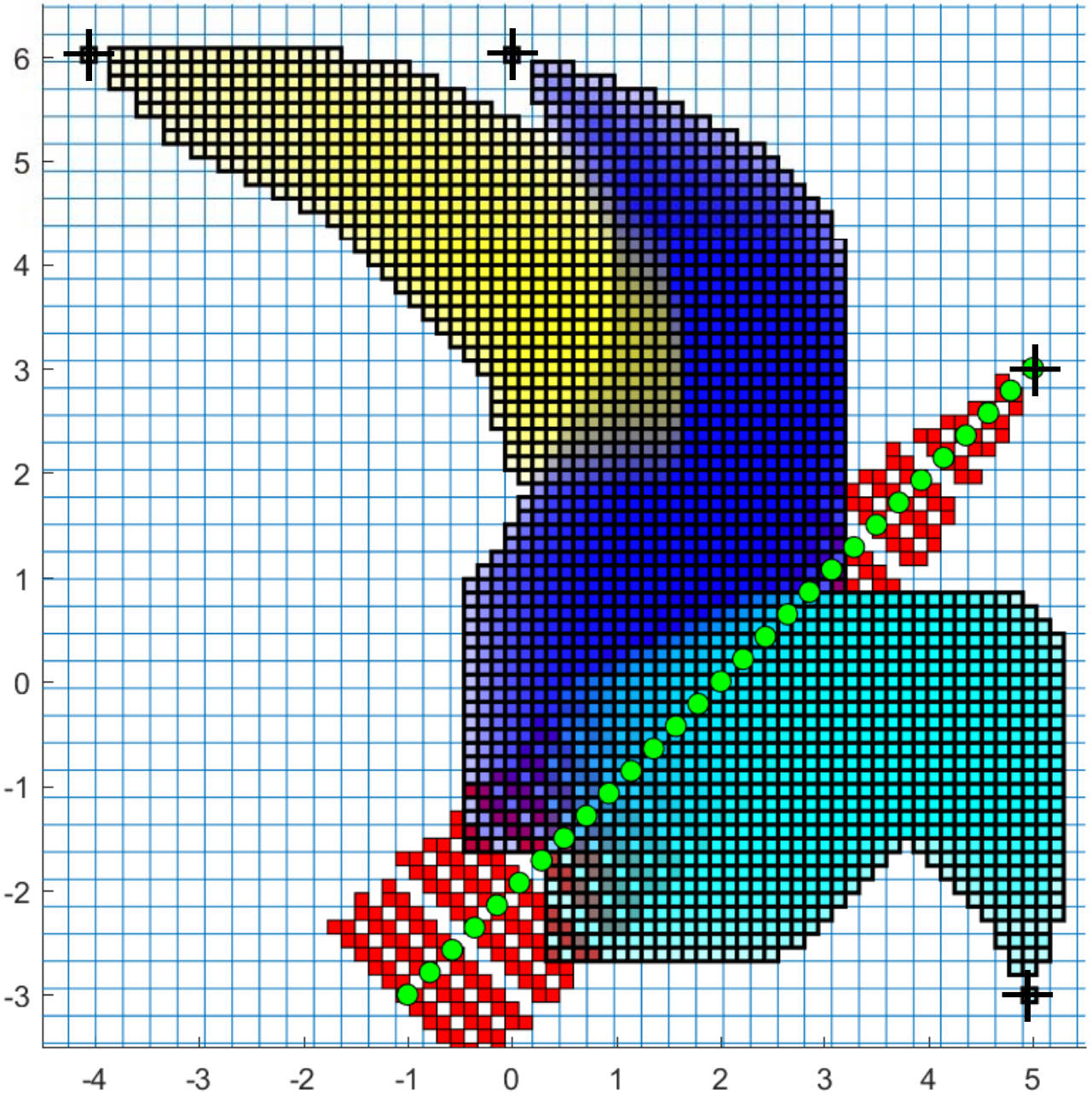} \vspace{-0.3cm}
\center{\small{(i)} \hspace{0.45\columnwidth}\small{(ii)}} \vspace{-0.3cm}
\caption{Reachable cells of the agents for (i) $\lambda=0.3$ and (ii) $\lambda=0.4$. Agents 1, 3 and 4 are initially located at the bottom right, top center and top left of the illustrated workspace, respectively. The circles denote the nominal sampled trajectory of agent 2 and their nearby cells represent the cells where agent 2 can lie at the sampling times, for all possible inputs of Case II.}
\end{figure}

The code for the simulation results has been implemented in MATLAB and the worst case running time for the illustrated results is of the order of 45 minutes, on a PC with an Intel(R) Core(TM) i7-4600U CPU @ 2.10GHz processor. 

\section{Conclusions}

We have provided a decentralized abstraction methodology for multi-agent systems and quantified bounds on the space and  time discretization in order to obtain meaningful transition systems with multiple transition possibilities. The abstraction framework is based on the design of hybrid feedback control laws that take into account the agents' coupled constraints and guarantee the implementation of the discrete transitions by the continuous time controllers.

Ongoing work includes the improvement of the acceptable choices of $d_{\max}$ and $\delta t$ in order to obtain coarser abstractions and reduce the size of each agent's transition system. Another possible direction for complexity reduction is the modification (localization) of the current framework through an event based online abstraction with updated choices of $d_{\max}$ and $\delta t$. Finally, it should be noted that while this report provides informal indicators of how the results can be used for planning, we are currently formalizing a distributed planning methodology from high level specifications that builds on the derived abstractions.

\section{Appendix}

In the Appendix we provide the proofs of Propositions \ref{discrete:transitions:result} and \ref{completeness:result} and certain technical parts from the proof of Proposition \ref{proposition:property:P}.

\begin{proof}[Proof of Proposition \ref{discrete:transitions:result}]
Indeed, consider a final cell configuration $\bf{l}'=(l_1',\ldots,l_N')$ as in the statement of the proposition. By the definitions of the operators ${\rm Post}_i(\cdot)$, $i\in\N$ and the transition relation of each corresponding agent's individual transition system, it follows that each transition $l_i\overset{{\rm pr}_i(\bf{l})}{\longrightarrow}l_i'$ is well posed in the sense of Definition \ref{well:posed:discretization}(i). Thus, given the initial cell configuration ${\rm pr}_i(\bf{l})$ and the well posed transition $l_i\overset{{\rm pr}_i(\bf{l})}{\longrightarrow}l_i'$  for each agent, it follows from Definition \ref{well:posed:discretization}(i) that we can pick for each $i\in\N$ a control law $k_{i,{\rm pr}_i(\bf{l})}(\cdot)$ that satisfies Property (P) and a vector $w_i\in W$, such that Condition (C) holds. Next, notice that for each agent $i$ the projection of the initial cell configuration $\bf{l}=(l_1,\ldots,l_N)$ is ${\rm pr}_i(\bf{l})$, namely, the cell configuration according to which the feedback law $k_{i,{\rm pr}_i(\bf{l})}(\cdot)$ was selected, and that the feedback law satisfies Property (P). Thus, it follows from Condition (C) that the solution of the closed loop system is well defined on $[0,\delta t]$, and that for each $i\in\N$, the $i$-th component of the solution satisfies \eqref{contoler:compatibility}.
\end{proof}

\begin{proof}[Proof of Proposition \ref{completeness:result}]
\textbf{Proof of (i).} Let $w_i\in W$, $i\in\N$  and $x(0)\in D^N$ with $x_{i}(0)\in S_{l_i}$, $i\in\N$ be the initial condition of the closed-loop system. Then, it follows from the local Lipschitz property on the functions $f_{i}(\cdot)$ and the corresponding property on the mappings $k_{i,{\rm pr}_i(\bf{l})}(\cdot;x_{i0},w_i)$ provided by (P1), that the dynamics of the closed loop system are given by a locally Lipschitz function on $D^N$. Hence, there exists a unique solution $x(\cdot)=x(\cdot,x(0))$ to the initial value problem, which is defined and remains in $D^N$ for all times in its right maximal interval of existence $[0,T_{\max})$. We proceed by proving that each component $x_i(\cdot)$, $i\in\N$ of the solution satisfies
\begin{equation} \label{xis:in:initialcell:plus:Rmax}
x_i(t)\in (S_{l_i}+B(R_{\max}))\cap D,\forall t\in[0,\min\{T_{\max},\tilde{T}\}).
\end{equation}



\noindent Indeed, suppose on the contrary that \eqref{xis:in:initialcell:plus:Rmax} is violated, and hence, by taking into account that $x_i(t)\in D$ for all $t\in[0,T_{\max})$, that there exists $\iota\in\N$ and a time $T$ with
\begin{equation} \label{xhati:atT}
T\in (0,\min\{T_{\max},\tilde{T}\})\;{\rm and}\;x_{\iota}(T)\notin S_{l_{\iota}}+B(R_{\max}).
\end{equation}

\noindent By recalling that $x_{i}(0)\in S_{l_i}$, $i\in\N$, 
 we may define
\begin{align}
\tau:=\max\{\bar{t} \in [0,T]:x_{i}(t) & \in{\rm cl}(S_{l_{i}}+B(R_{\max})), \nonumber \\
\forall t & \in [0,\bar{t}],i\in\N\}. \label{time:tau}
\end{align}

\noindent Then, it follows from \eqref{time:tildeT}, \eqref{xhati:atT} and \eqref{time:tau} that there exists $\ell\in\N$ such that
\begin{equation} \label{xtildei:at:tau}
x_{\ell}(\tau)\in\partial(S_{l_{\ell}}+B(R_{\max}))
\end{equation}

\noindent and that
\begin{equation} \label{tau:vs:deltat}
\tau\le T<\tilde{T}\le\delta t.
\end{equation}

\noindent It also follows from \eqref{time:tau}, 
\eqref{time:tildeT}, \eqref{tau:vs:deltat} and Property (P2)
that for all $t\in[0,\tau]$ it holds
\begin{align}
|k_{\ell,{\rm pr}_{\ell}(\bf{l})}(t,x_{\ell}(t),\bf{x}_{j(\ell)}(t);x_{\ell 0},w_{\ell})|\le v_{\max} \label{feedback:for:tildei}
\end{align}

\noindent Hence, we get from \eqref{single:integrator}, \eqref{general:feedback:law}, \eqref{Rmax}, \eqref{feedback:for:all2}, \eqref{feedback:for:tildei} and \eqref{tau:vs:deltat}, which implies that $\tau<\delta t$, that
\begin{align}
|x_{\ell}(\tau)&-x_{\ell 0}|\le \int_{0}^{\tau}|f_{\ell}(x_{\ell}(s),\bf{x}_{j(\ell)}(s))|\nonumber \\
&+|k_{\ell},{\rm pr}_{\ell}(\bf{l})(s,x_{\ell}(s),\bf{x}_{j(\ell)}(s);x_{\ell 0},w_{\ell})|ds \nonumber \\
& \le \int_{0}^{\tau}(M+v_{\max})ds<\delta t(M+v_{\max})=R_{\max}. \label{contradiction:distance}
\end{align}

\noindent In order to finish the proof of \eqref{xis:in:initialcell:plus:Rmax} we exploit the following elementary fact.

\noindent \textbf{Fact I.} Consider a nonempty set $S\subset\Rat{n}$ and a constant $R>0$. Then for every $x\in\partial(S+B(R))$ it holds $|x-y|\ge R,\forall y\in S$.

\noindent \textbf{Proof of Fact I.} Indeed, suppose on the contrary that there exists $\tilde{y}\in S$ with $|x-\tilde{y}|\le R-\varepsilon$ for certain $\varepsilon>0$. Then for all $\tilde{x}\in{\rm int}(B(x;\varepsilon))$ we have
$$
|\tilde{x}-\tilde{y}|\le|\tilde{x}-x|+|x-\tilde{y}|<\varepsilon+R-\varepsilon=R,
$$

\noindent and hence, $\tilde{x}\in S+B(R)$ for all $\tilde{x}\in{\rm int}(B(x;\varepsilon))$, which implies that $x\notin\partial(S+B(R))$ and contradicts our statement. $\triangleleft$

By exploiting Fact I with $S=S_{l_{\ell}}$, $R=R_{\max}$, $y=x_{\ell 0}$ and $x=x_{\ell}(\tau)$ we deduce from \eqref{contradiction:distance} that $x_{\ell}(\tau)\notin \partial(S_{l_{\ell}}+B(R_{\max}))$ which contradicts \eqref{xtildei:at:tau}, and provides validity of \eqref{xis:in:initialcell:plus:Rmax}.

We now prove the following claim:

\noindent \textbf{Claim I.} It holds $T_{\max}\ge\tilde{T}$.

\noindent $\triangleright$\textbf{Proof of Claim I.} Indeed, suppose on the contrary that
\begin{equation} \label{Tmax:lt:tildeT}
T_{\max}<\tilde{T}.
\end{equation}

\noindent For each $i\in\N$ let $v_i:[0,\infty)\to\Rat{n}$ be a piecewise continuous function satisfying
\begin{equation} \label{free:input:eq:feedback}
v_i(t)=k_{i,{\rm pr}_{i}(\bf{l})}(t,x_i(t),\bf{x}_j(t);x_{i0},w_i),\forall t\in[0,T_{\max}).
\end{equation}

\noindent Notice that due to \eqref{time:tildeT} and \eqref{Tmax:lt:tildeT} we have that $T_{\max}<\min\{\delta t,\min\{T(x_{i0},w_i):i\in\N\}\}$, and thus, we get from  \eqref{xis:in:initialcell:plus:Rmax} and (P2) that $|v_i(t)|\le v_{\max}$, $\forall t\in[0,T_{\max})$. Hence, we may select $v_i(\cdot)$ to satisfy $|v_i(t)|\le v_{\max}$, $\forall t\ge 0$ (select for instance $v_i(t)=0$ for $t\ge T_{\max}$). Thus, if we denote by $\xi(\cdot)$ the solution of \eqref{single:integrator}-\eqref{general:feedback:law} with free inputs $v_i(\cdot)$, $i\in\N$ and the same initial condition with $x(\cdot)$, it follows from the Invariance Assumption (IA) that $\xi(t)$ is defined and remains in $D^N$ for all $t\ge 0$. Furthermore, it follows from standard arguments from the theory of ODEs that $\xi(t)=x(t),\forall t\in [0,T_{\max})$.
Hence, since $\xi(t)$ belongs to a compact subset of $D^N$ for all $t\in [0,T_{\max}]$, 
the same holds for $x(t)$ on $[0,T_{\max})$. The latter contradicts maximality of $[0,T_{\max})$
since by \eqref{Tmax:lt:tildeT} and \eqref{time:tildeT} it holds  $T_{\max}<\tilde{T}\le \min\{T(x_{i0},w_i):i\in\N\}$ and the mappings $k_{i,{\rm pr}_{i}(\bf{l})}(\cdot)$ are defined for $t\in [0,\min\{T(x_{i0},w_i):i\in\N\})$. Hence, we have shown Claim I. $\triangleleft$

From Claim I, it follows that $x(t)$ is defined and remains in $D^N$ for all $t\in [0,\tilde{T})$ and that \eqref{xis:in:initialcell:plus:Rmax} holds for all $t\in [0,\tilde{T})$. Thus, by applying the same arguments with those in the proof of Claim I, we can determine a continuous function $\xi(\cdot)$ with $\xi(t)=x(t)$ for all $t\in [0,\tilde{T})$ and $\xi(\tilde{T})\in D^N$, which establishes \eqref{limit:solution}.

\noindent \textbf{Proof of (iia).} In the case where (P3) also holds, and hence by \eqref{time:tildeT} we have that $\tilde{T}=\delta t$, it follows from part (i) of the proposition and standard arguments, that the solution $x(\cdot)$ is defined on $[0,T_{\max})$, with $T_{\max}>\delta t$. 
From the latter,
we conclude that $x(t)\in D^{N}$ for all $t\in[0,\delta t]$. Moreover, since $T_{\max}>\delta t=\tilde{T}$, it follows that \eqref{xis:in:initialcell:plus:Rmax} is satisfied for all $t\in[0,\delta t)$. The latter, by virtue of (P2), (P3) and continuity of $x(\cdot)$ implies \eqref{feedback:consistency}.

\noindent \textbf{Proof of (iib).} By exploiting the result of part (iia) of the proposition and defining $v_i(t)=k_{i,{\rm pr}_{i}(\bf{l})}(t,x_i(t),\bf{x}_j(t);$ $x_{i0},w_i)$,
$\forall t\in[0,\delta t)$ we can extend $v_i(\cdot)$ to a piecewise continuous function on $[0,\infty)$ which satisfies \eqref{input:bound}. 
Hence, by applying the same arguments with those in the proof of Claim~I, we conclude that the solutions $x(\cdot)$ of \eqref{single:integrator}-\eqref{general:feedback:law}, \eqref{feedback:for:all2} and $\xi(\cdot)$ of \eqref{single:integrator}-\eqref{general:feedback:law} (with input $v(\cdot)$)
coincide on $[0,\delta t]$.
\end{proof}

\noindent \textbf{Proof of the fact that the intervals provided by Cases I, II, III  in Proposition \ref{proposition:property:P} are well defined.}

The fact that \eqref{dmax:interval:caseA} and \eqref{deltat:interval:caseA} are well defined is straightforward. We proceed by defining
\begin{equation}\label{function:barg}
\bar{g}(\mu):=\frac{2(\lambda(1-\lambda)\mu-2\lambda^2)v_{\max}^{2}}{\mu^2ML}.
\end{equation}

\noindent Then it follows that
\begin{equation}\label{function:g:property}
\bar{g}(\mu)>0\iff \lambda(1-\lambda)\mu-2\lambda^2>0\iff \mu>\frac{2\lambda}{1-\lambda}.
\end{equation}

\noindent Furthermore,
\begin{align*}
{\rm sgn}(\dot{\bar{g}}(\mu)) & ={\rm sgn}\left(\frac{d}{d\mu}\left(\frac{(1-\lambda)\mu-2\lambda}{\mu^2}\right)\right) \\
& = {\rm sgn}\left(\frac{(1-\lambda)\mu^2-((1-\lambda)\mu-2\lambda)2\mu}{\mu^4}\right) \\
& = {\rm sgn}((1-\lambda)\mu-2((1-\lambda)\mu-2\lambda))= {\rm sgn}(-(1-\lambda)\mu+4\lambda),
\end{align*}

\noindent which implies that
\begin{equation}\label{sign:g:derivative}
{\rm sgn}(\dot{\bar{g}}(\mu))[>\;{\rm or}\; =] 0\iff \mu[<\;{\rm or}\; =]\frac{4\lambda}{1-\lambda}.
\end{equation}

\noindent From \eqref{function:g:property} and \eqref{sign:g:derivative} we get that
\begin{equation} \label{function:g:values}
0<\bar{g}(\mu)<\bar{g}\left(\frac{4\lambda}{1-\lambda}\right)=\frac{4\lambda^2v_{\max}^{2}}{16\frac{\lambda^2}{(1-\lambda)^2}ML}=\frac{(1-\lambda)^2v_{\max}^{2}}{4ML},\forall \mu\in\left(\frac{2\lambda}{1-\lambda},\frac{4\lambda}{1-\lambda}\right)\cup\left(\frac{4\lambda}{1-\lambda},\infty\right),
\end{equation}

\noindent which implies that \eqref{dmax:interval:caseBi} and \eqref{dmax:interval:caseBii} are well defined. Also, by considering the function
\begin{equation}\label{function:h:double:prime}
h''(d_{\max}):=\frac{(1-\lambda)v_{\max}+\sqrt{(1-\lambda)^2v_{\max}^{2}-4MLd_{\max}}}{2ML},
\end{equation}

\noindent we deduce that
\begin{equation}\label{h:double:prime:vs:constraint}
\mu<\frac{4\lambda}{1-\lambda}\Rightarrow \frac{\mu}{2\lambda v_{\max}}d_{\max}<\frac{(1-\lambda)v_{\max}}{2ML}\le h''(d_{\max}),\forall d_{\max}\in\left(0,\frac{(1-\lambda)^2 v_{\max}^{2}}{4ML}\right],
\end{equation}

\noindent and thus, that \eqref{deltat:interval:caseBi} in Case II is well defined. Finally, in order to also show the latter for the interval \eqref{deltat:interval:caseBi} in Case III, we note that due to \eqref{function:barg} and \eqref{function:h:double:prime} it holds that
\begin{align*}
h''(\bar{g}(\mu)) & = \frac{(1-\lambda)v_{\max}+\sqrt{(1-\lambda)^2v_{\max}^{2}-4ML\frac{2(\lambda(1-\lambda)\mu-2\lambda^2)v_{\max}^{2}}{\mu^2ML}}}{2ML} \\
& = \frac{(1-\lambda)v_{\max}+\sqrt{v_{\max}^{2}\left((1-\lambda)^2-8\frac{\lambda}{\mu}(1-\lambda)+16\left(\frac{\lambda}{\mu}\right)^2\right)}}{2ML} \\
& = \frac{(1-\lambda)v_{\max}+v_{\max}\sqrt{\left((1-\lambda)-\frac{4\lambda}{\mu}\right)^2}}{2ML}=\frac{\left[(1-\lambda)+\left|(1-\lambda)-\frac{4\lambda}{\mu}\right|\right]v_{\max}}{2ML}.
\end{align*}

\noindent Hence, by taking into account that
\begin{equation}\label{sign:condition:due:to:g}
(1-\lambda)-\frac{4\lambda}{\mu}\ge 0\iff \mu\ge\frac{4\lambda}{1-\lambda}
\end{equation}

\noindent and that in Case III it holds $\mu\ge\frac{4\lambda}{1-\lambda}$, we get that
\begin{equation}\label{h:double:prime:at:mu}
h''(\bar{g}(\mu))=\frac{2\left((1-\lambda)+\frac{2\lambda}{\mu}\right)v_{\max}}{2ML}=\frac{((1-\lambda)\mu-2\lambda)v_{\max}}{\mu ML}.
\end{equation}

\noindent Furthermore, we have that
\begin{equation}\label{constraint:at:mu}
\frac{\mu}{2\lambda v_{\max}}{\bar{g}(\mu)}=\frac{\mu}{2\lambda v_{\max}}\frac{2(\lambda(1-\lambda)\mu-2\lambda^2)v_{\max}^{2}}{\mu^2ML}=\frac{((1-\lambda)\mu-2\lambda)v_{\max}}{\mu ML}.
\end{equation}

\noindent Thus, from \eqref{function:g:values}, \eqref{h:double:prime:at:mu}, \eqref{constraint:at:mu} and the fact that $h''(\cdot)$ is decreasing, it follows that
\begin{equation}\label{h:double:prime:vs:constraint:caseC}
 h''(d_{\max})\ge\frac{\mu}{2\lambda v_{\max}}d_{\max},\forall d_{\max}\in(0,\bar{g}(\mu)],
\end{equation}

\noindent which in conjunction with \eqref{function:barg}, implies that \eqref{deltat:interval:caseBi} is well defined. \\

\noindent \textbf{Proof of the fact that for all Cases I, II and III \eqref{dmax:bound2:plan}-\eqref{mu:slope:requirement:plan} are satisfied in the proof of Proposition \ref{proposition:property:P}.}

\noindent \textbf{Case I: $0\le\mu\le\frac{2\lambda}{1-\lambda}$.}

\noindent By defining
\begin{equation}\label{function:h:prime}
h'(d_{\max}):=\frac{(1-\lambda)v_{\max}-\sqrt{(1-\lambda)^2v_{\max}^{2}-4MLd_{\max}}}{2ML},
\end{equation}

\noindent we obtain that
\begin{equation*}
\dot{h}'(d_{\max})=\frac{1}{\sqrt{(1-\lambda)^2 v_{\max}^{2}-4MLd_{\max}}}.
\end{equation*}

\noindent Hence,
\begin{equation}\label{function:h:prime:derivatine:monotonicity}
\dot{h}'(\cdot)\; \textup{is positive and strictly increasing for}\; 0\le d_{\max}<\frac{(1-\lambda)^2 v_{\max}^{2}}{4ML}
\end{equation}

\noindent and furthermore
\begin{equation*}
\dot{h}'(0)=\frac{1}{(1-\lambda)v_{\max}};\quad h'(0)=0.
\end{equation*}

\noindent The latter in conjunction with \eqref{function:h:prime:derivatine:monotonicity} implies that
\begin{equation}\label{function:h:prime:property}
h'(d_{\max})>\frac{1}{(1-\lambda)v_{\max}}d_{\max}\ge\frac{1}{M+v_{\max}}d_{\max},\forall d_{\max}\in\left(0,\frac{(1-\lambda)^2 v_{\max}^{2}}{4ML}\right].
\end{equation}

\noindent Also, we have that for $0\le\mu\le\frac{2\lambda}{1-\lambda}$ it holds
\begin{equation} \label{mu:slope:caseA}
\frac{\mu}{2\lambda v_{\max}} d_{\max}\le \frac{\frac{2\lambda}{1-\lambda}}{2\lambda v_{\max}} d_{\max}=\frac{1}{(1-\lambda)v_{\max}} d_{\max}.
\end{equation}

\noindent Thus, it follows from  \eqref{dmax:interval:caseA}, \eqref{deltat:interval:caseA}, \eqref{function:h:prime}, \eqref{mu:slope:caseA} and \eqref{function:h:prime:property} that \eqref{dmax:bound2:plan}-\eqref{mu:slope:requirement:plan} are fulfilled (see also Fig. 5).

\begin{figure}[H]
\begin{center}
\begin{tikzpicture}[
axis/.style={very thick, ->, >=stealth'}]
\begin{scope}[rotate=-90, style={very thick}]
        \draw[domain= -2:2, color=orange, bottom color=green, top color=green] plot (\x,4-\x*\x);
\end{scope}
\draw[axis] (-0.5,-2) -- (6,-2) node [right] {$d_{\max}$};
\draw[axis] (0,-2.5) -- (0,3) node [above] {$\delta t$};
\draw[dashed, very thick, color=blue] (0,-2) -- (6,-0.5) node [right, color=black] {$\frac{1}{(1-\lambda)v_{\max}}d_{\max}$};
\draw[dashed, very thick, color=blue] (0,-2) -- (6,1)  node [right, color=black] {$\frac{2}{(1-\lambda)v_{\max}}d_{\max}$};
\draw[dashed, very thick, color=red] (0,-2) -- (6,-1.5) node [right, color=black] {$\frac{1}{M+v_{\max}}d_{\max}$};
\draw[axis, color=red] (0,-2) -- (6,-1) node [right, color=black] {$\frac{\mu}{2\lambda v_{\max}}d_{\max}$}; 
\draw[dashed] (4,-2) -- (4,0);
\fill[black] (4,0) circle (2pt);
\fill[black] (4,-2) circle (2pt) node[below right] {$\frac{(1-\lambda)^2 v_{\max}^{2}}{4ML}$};
\coordinate [label=left:$\frac{(1-\lambda)v_{max}}{ML}$] (A) at (0,2);
\end{tikzpicture}
\caption{\small \sl \textbf{Case I.} Feasible $d_{\max}-\delta t$ region for $0<\mu\le\frac{2\lambda}{1-\lambda}$}
\end{center}
\end{figure}

\noindent \textbf{Case II: $\frac{2\lambda}{1-\lambda}<\mu<\frac{4\lambda}{1-\lambda}$.}

\noindent From \eqref{function:barg} and \eqref{function:h:prime} we obtain that
\begin{align*}
h'(\bar{g}(\mu)) & = \frac{(1-\lambda)v_{\max}-\sqrt{(1-\lambda)^2v_{\max}^{2}-4ML\frac{2(\lambda(1-\lambda)\mu-2\lambda^2)v_{\max}^{2}}{\mu^2ML}}}{2ML} \\
& = \frac{(1-\lambda)v_{\max}-\sqrt{v_{\max}^{2}\left((1-\lambda)^2-8\frac{\lambda}{\mu}(1-\lambda)+16\left(\frac{\lambda}{\mu}\right)^2\right)}}{2ML} \\
& = \frac{(1-\lambda)v_{\max}-v_{\max}\sqrt{\left((1-\lambda)-\frac{4\lambda}{\mu}\right)^2}}{2ML}=\frac{\left[(1-\lambda)-\left|(1-\lambda)-\frac{4\lambda}{\mu}\right|\right]v_{\max}}{2ML}.
\end{align*}

\noindent Hence, by taking into account \eqref{sign:condition:due:to:g} and that $\frac{2\lambda}{1-\lambda}<\mu<\frac{4\lambda}{1-\lambda}$, we get that
\begin{equation}\label{h:prime:at:mu}
h'(\bar{g}(\mu))=\frac{2\left((1-\lambda)-\frac{2\lambda}{\mu}\right)v_{\max}}{2ML}=\frac{((1-\lambda)\mu-2\lambda\mu)v_{\max}}{\mu ML}.
\end{equation}

\noindent Thus, by the fact that $\frac{2\lambda}{1-\lambda}<\mu<\frac{4\lambda}{1-\lambda}$ it follows from \eqref{function:g:values}, \eqref{constraint:at:mu}, \eqref{function:h:prime:derivatine:monotonicity} and \eqref{h:prime:at:mu} that
\begin{equation}\label{h:prime:vs:constraint}
h'(d_{\max})\le\frac{\mu}{2\lambda v_{\max}}d_{\max},\forall d_{\max}\in (0,\bar{g}(\mu)]
\end{equation}

\noindent and
\begin{equation}\label{h:prime:vs:constraint2}
h'(d_{\max})>\frac{\mu}{2\lambda v_{\max}}d_{\max},\forall d_{\max}\in \left(\bar{g}(\mu),\frac{(1-\lambda)^2 v_{\max}^{2}}{4ML}\right].
\end{equation}

\noindent It also holds that
\begin{equation}\label{mu:slope:casesBC}
\mu>\frac{2\lambda}{1-\lambda}\Rightarrow \frac{\mu}{2\lambda v_{\max}}d_{\max}>\frac{1}{(1-\lambda)v_{\max}}d_{\max}>\frac{1}{M+v_{\max}}d_{\max},\forall d_{\max}>0.
\end{equation}

\noindent Hence, it follows from \eqref{dmax:interval:caseBi}-\eqref{deltat:interval:caseA}, \eqref{function:barg}, \eqref{h:prime:vs:constraint}, \eqref{h:prime:vs:constraint2} and \eqref{mu:slope:casesBC} that \eqref{dmax:bound2:plan}-\eqref{mu:slope:requirement:plan} are fulfilled as well (see also Fig. 6).

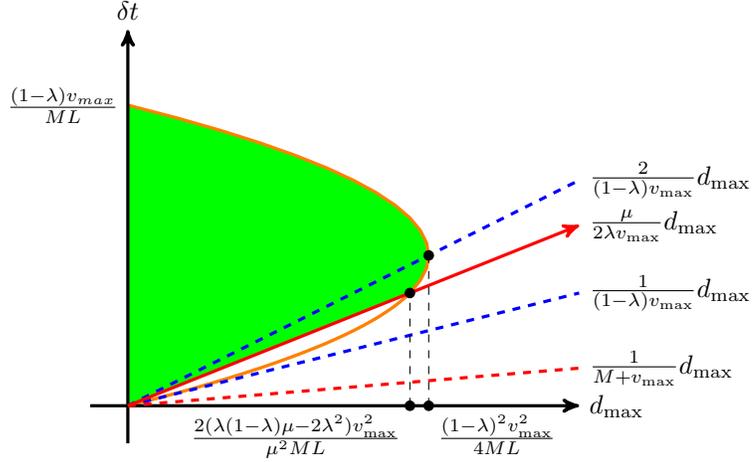
\begin{figure}[H]
\begin{center}
\begin{tikzpicture}[
axis/.style={very thick, ->, >=stealth'}]
\begin{scope}[rotate=-90, style={very thick}]
        \draw[domain= -2:2, color=orange, bottom color=green, top color=green] plot (\x,4-\x*\x);
\end{scope}
\begin{scope}[rotate=-90, style={very thick}]
       \draw[domain= 0.5:2, color=orange, bottom color=white, top color=white] plot (\x,4-\x*\x);
\end{scope}
\draw[axis] (-0.5,-2) -- (6,-2) node [right] {$d_{\max}$};
\draw[axis] (0,-2.5) -- (0,3) node [above] {$\delta t$};
\draw[dashed, very thick, color=blue] (0,-2) -- (6,-0.5) node [right, color=black] {$\frac{1}{(1-\lambda)v_{\max}}d_{\max}$};
\draw[dashed, very thick, color=blue] (0,-2) -- (6,1)  node [right, color=black] {$\frac{2}{(1-\lambda)v_{\max}}d_{\max}$};
\draw[dashed, very thick, color=red] (0,-2) -- (6,-1.5) node [right, color=black] {$\frac{1}{M+v_{\max}}d_{\max}$};
\draw[axis, color=red] (0,-2) -- (6,0.4) node [right, color=black] {$\frac{\mu}{2\lambda v_{\max}}d_{\max}$};
\draw[dashed] (3.75,-2) -- (3.75,-0.5);
\draw[dashed] (4,-2) -- (4,0);
\fill[black] (3.75,-0.5) circle (2pt);
\fill[black] (4,0) circle (2pt);
\fill[black] (3.75,-2) circle (2pt) node[below left] {$\frac{2(\lambda(1-\lambda)\mu-2\lambda^2)v_{\max}^{2}}{\mu^2ML}$};
\fill[black] (4,-2) circle (2pt) node[below right] {$\frac{(1-\lambda)^2 v_{\max}^{2}}{4ML}$};
\coordinate [label=left:$\frac{(1-\lambda)v_{max}}{ML}$] (A) at (0,2);
\end{tikzpicture}
\caption{\small \sl \textbf{Case II.} Feasible $d_{\max}-\delta t$ region for $\frac{2\lambda}{1-\lambda}<\mu<\frac{4\lambda}{1-\lambda}$}
\end{center}
\end{figure}

\noindent \textbf{Case III $\mu\ge\frac{4\lambda}{1-\lambda}$.}

In this case, notice that for $\mu=\frac{4\lambda}{1-\lambda}$ and $d_{\max}=\frac{(1-\lambda)^2 v_{\max}^{2}}{4ML}$ we have $\frac{\mu}{2\lambda v_{\max}}d_{\max}=h'(d_{\max})$, and thus, we deduce from \eqref{function:h:prime:derivatine:monotonicity} that
\begin{equation}\label{h:prime:vs:constraint:caseC}
\mu\ge\frac{4\lambda}{1-\lambda}\Rightarrow \frac{\mu}{2\lambda v_{\max}}d_{\max}\ge h'(d_{\max}),\forall d_{\max}\in\left(0,\frac{(1-\lambda)^2 v_{\max}^{2}}{4ML}\right].
\end{equation}

\noindent Hence, it follows from \eqref{dmax:interval:caseBi}, \eqref{deltat:interval:caseBi}, \eqref{function:barg}, \eqref{function:g:values} and \eqref{h:prime:vs:constraint:caseC} that \eqref{dmax:bound2:plan}-\eqref{mu:slope:requirement:plan} hold (see also Fig. 7).

\begin{figure}[H]
\begin{center}
\begin{tikzpicture}[
axis/.style={very thick, ->, >=stealth'}]
\begin{scope}[rotate=-90, style={very thick}]
        \draw[domain= -2:2, color=orange, bottom color=green, top color=green] plot (\x,4-\x*\x);
\end{scope}
\begin{scope}[rotate=-90, style={very thick}]
       \draw[domain= -0.5:2, color=orange, bottom color=white, top color=white] plot (\x,4-\x*\x);
\end{scope}
\draw[axis] (-0.5,-2) -- (6,-2) node [right] {$d_{\max}$};
\draw[axis] (0,-2.5) -- (0,3) node [above] {$\delta t$};
\draw[dashed, very thick, color=blue] (0,-2) -- (6,-0.5) node [right, color=black] {$\frac{1}{(1-\lambda)v_{\max}}d_{\max}$};
\draw[dashed, very thick, color=blue] (0,-2) -- (6,1)  node [right, color=black] {$\frac{2}{(1-\lambda)v_{\max}}d_{\max}$};
\draw[dashed, very thick, color=red] (0,-2) -- (6,-1.5) node [right, color=black] {$\frac{1}{M+v_{\max}}d_{\max}$};
\draw[axis, color=red] (0,-2) -- (6,2) node [right, color=black] {$\frac{\mu}{2\lambda v_{\max}}d_{\max}$};
\draw[dashed] (3.75,-2) -- (3.75,0.5);
\draw[dashed] (4,-2) -- (4,0);
\fill[black] (3.75,0.5) circle (2pt);
\fill[black] (4,0) circle (2pt);
\fill[black] (3.75,-2) circle (2pt) node[below left] {$\frac{2(\lambda(1-\lambda)\mu-2\lambda^2)v_{\max}^{2}}{\mu^2ML}$};
\fill[black] (4,-2) circle (2pt) node[below right] {$\frac{(1-\lambda)^2 v_{\max}^{2}}{4ML}$};
\coordinate [label=left:$\frac{(1-\lambda)v_{max}}{ML}$] (A) at (0,2);
\end{tikzpicture}
\caption{\small \sl \textbf{Case III.} Feasible $d_{\max}-\delta t$ region for $\mu\ge\frac{4\lambda}{1-\lambda}$}
\end{center}
\end{figure}

\section{Acknowledgements}

This work was supported by the H2020 ERC Starting Grant BUCOPHSYS and the Swedish Research Council (VR).

\end{document}